\let\doendproof\endproof
\renewcommand\endproof{~\hfill\qed\doendproof}
\newcommand{\ignore}[1]{}
\def\srl{\textsc{SRL}}
\def\srd{\textsc{SRD}}
\def\srda{\textsc{SRD}$_\alpha$}
\def\srdr{\textsc{SRD}$^\rho$}
\def\reals{\mathbb{R}}
\newcommand{\sinn}[1]{\sin \left({#1}\right)}
\newcommand{\coss}[1]{\cos \left({#1}\right)}
\newcommand{\cscc}[1]{\csc \left({#1}\right)}
\newcommand{\arccoss}[1]{\arccos \left({#1}\right)}
\newcommand{\arcsinn}[1]{\arcsin \left({#1}\right)}
\newcommand{\ren}[4]{\mathcal{R}_{#1}^{#2}\left(#3,#4 \right)}
\newcommand{\ene}[4]{\mathcal{E}_{#1}^{#2}\left(#3,#4 \right)}
\begin{document}


\title{Symmetric Rendezvous With Advice:\\
How to Rendezvous in a Disk
\thanks{This is the full version of the paper with the same title which will appear in the proceedings of the 
25th International Colloquium on Structural Information and Communication Complexity, June 18-21,  2018, Ma'ale HaHamisha, Israel.}
}

\author{
Konstantinos Georgiou\inst{1}
\thanks{Research supported in part by NSERC Discovery Grant.}
\and
Jay Griffiths\inst{1}
\thanks{Research supported in part by NSERC Undergraduate Student Research Award.}
\and
Yuval Yakubov\inst{1}
\thanks{Research supported in part by the FoS Undergraduate Research Program, Ryerson University.}
}

\institute{
Department of Mathematics, Ryerson University \\
350 Victoria St, Toronto, ON, M5B 2K3, Canada \\
\email{konstantinos,jay.griffiths,yyakubov@ryerson.ca}
}

\maketitle

\begin{abstract}

In the classic Symmetric Rendezvous problem on a Line (\srl), two robots at known distance 2 but unknown direction execute the same randomized algorithm trying to minimize the expected rendezvous time. A long standing conjecture is that the best possible rendezvous time is 4.25 with known upper and lower bounds being very close to that value. 
We introduce and study a geometric variation of \srl\  that we call Symmetric Rendezvous in a Disk (\srd) where two robots at distance 2 have a common reference point at distance $\rho$. We show that even when $\rho$ is not too small, the two robots can meet in expected time that is less than $4.25$. 
Part of our contribution is that we demonstrate how to adjust known, even simple and provably non-optimal, algorithms for \srl, effectively improving their performance in the presence of a reference point.
Special to our algorithms for \srd\ is that, unlike in \srl, for every fixed $\rho$ the worst case distance traveled, i.e. energy that is used, in our algorithms is finite. In particular, we show that the energy of our algorithms is $O\left(\rho^2\right)$, 
while we also explore time-energy tradeoffs, concluding that one may be efficient both with respect to time and energy, with only a minor compromise on the optimal termination time.  
\end{abstract}

\section{Introduction}
In a rendezvous game two players reside at unknown locations in a given domain and they wish to minimize the (expected) meeting (rendezvous) time. Various rendezvous problems have been studied intensively, with applications in computer science and real-world modeling, such as the search for a mate problem in which species with a low spatial density try to find suitable partners~\cite{searchGamesBook}.
Rendezvous problems can be classified as \textit{asymmetric}, in which each agent may use a different strategy, or \textit{symmetric}, in which each agent follows the same algorithm; moreover, strategies can be classified as \textit{mixed},  incorporating randomness, or \textit{pure} which are deterministic.

In this paper, we discuss \textit{symmetric rendezvous with advice}. Two speed-1 robots (mobile agents) start at known distance but at unknown locations and they are trying to meet (rendezvous). At any time, robots have the option to meet at a known immobile reference point that is initially placed $\rho$ away from both agents. The goal is to design mixed strategies so as to minimize the expected rendezvous time, i.e. the expected value of the first time that robots meet. 
After scaling, our problem can be equivalently described as a Symmetric Rendezvous problem in a unit Disk (\srd), where mobile agents lie at  the perimeter of disk at known arc distance $2\alpha$, having the option to always meet at the origin. 

\srd\ is a geometric variation of the well-studied Symmetric Rendezvous problem on a Line (\srl) where no reference point is available, and for which a long-standing conjecture stipulates that it can be solved in expected time 4.25. Critical differences between the two problems is that in \srd\ 
(a) the rendezvous can always be realized deterministically,
(b) the performance can be much better than the distance from the reference point $\rho$ and better than the conjectured 4.25 even for not too small values of $\rho$ and
(c) the worst case rendezvous time can be bounded in $\rho$ even when one tries to minimize the expected rendezvous time. 
The latter is an important property, since if the two agents are vehicles with limited fuel, our strategies can be used to guarantee rendezvous before the fuel runs out.

\subsection{Related Work}

The rendezvous problem is a special type of a search game where two or more agents (robots) attempt to occupy the same location at the same time in a domain. 
Search games and rendezvous have a long history; see~\cite{searchGamesBook} and~\cite{Alpern2013} for a thorough introduction to the area, and~\cite{Alpern2002} for a not so recent survey.
The challenge of the task (search or rendezvous) is induced by limitations related to communication, coordination, synchronization, mobility, visibility, or other types of resources, whereas examples of rendezvous domains include networks, discrete nodes and geometric environments. Notably, each of the aforementioned specifications, along with combinations of them, have given rise to a long list of publications, a short representative list of which we discuss below. 

The rendezvous problem was first proposed informally by Alpern~\cite{Alpern76} in 1976, and received attention due to the seminal works of Anderson and Weber~\cite{discreteLocations} for discrete domains and of Alpern~\cite{Alpern1995} for continuous domains. Our work is a direct generalization of the special and so-called Symmetric Rendezvous Search Problem on a Line (\srl) proposed by Alpern~\cite{Alpern1995} in 1995. In that problem, two blind agents are at known distance 2 on a line, and they can perform the same synchronized randomized algorithm (with no shared randomness). The original algorithm of Alpern~\cite{Alpern1995} had performance (expected rendezvous time) 5, which was later improved to 4.5678~\cite{lineWeirdPlayers}, then to 4.4182~\cite{Baston99}, then to 4.3931~\cite{uthaisombut2006symmetric}, and finally to the best performance known of 4.2574~\cite{hanetal} by Han et al. Similarly, a series of proven lower bounds~\cite{Alpern1995RSL},~\cite{uthaisombut2006symmetric} have lead to the currently best value known of 4.1520~\cite{hanetal}. 

A number of variations of \srl\ have been exhaustively studied, and below we mention just a few. 
The symmetric rendezvous problem with unknown initial distance or with partial information about it has been considered in \cite{beveridge2011symmetric} and 
\cite{baston1998rendezvous}. 
A number of different topologies have been considered including 
labeled network~\cite{alpern2002rendezvous},
labeled line~\cite{CT04}, 
ring~\cite{kranakis2003mobile},~\cite{flocchini2004multiple} (see survey monograph~\cite{KKM10}), 
torus~\cite{kranakis2006mobile}, 
planar lattice~\cite{alpern2005rendezvous}, and 
high dimensional host spaces~\cite{alpern2006rendezvous}.
We note here that the topology we consider in this work follows a long list studies of relevant search/rendezvous-type problems in the disk. 
The rendezvous problem with faulty components has been studied in~\cite{Das08c} and~\cite{DLM15}.
Asynchronous strategies have been explored in~\cite{TZ14} and~\cite{Prencipe07}.
Studied variations of robots capabilities include 
sense of direction~\cite{alpern2006common},~\cite{BFFS07},
memory~\cite{CFR09},
visibility~\cite{CPL12}, 
speed~\cite{feinerman2014fast}, 
power consumption~\cite{ACCLPV12}
and location awareness~\cite{collins2011synchronous}.
Interesting variations of communication models between agents have been studied in 
\cite{Das2007} (whiteboards),
\cite{czyzowicz2008power} (tokens),
\cite{KKM10} (mobile tokens), and 
\cite{Prencipe07} (look-compute-move model). 
Finally, \cite{pelc2012deterministic} is a comprehensive survey in deterministic rendezvous in networks. 


\ignore{
Search is concerned with finding an object with specified properties within a search space.
Different types of search problems include search on a graph, search on a 2d plane, search on a line, search for an immobile hider (also known as hide and seek), and search in discrete locations along with various other search problems.\cite{searchGamesBook}. A recent area of Search theory being explored is treasure hunting which is when one or more robots search for a treasure then take it safely to the exit, this has been studied on the disk with two mobile agents\cite{treasureHunt}.

A rendezvous game is a variation of a search problem in which the hider and the searcher want to meet. 
Different variations of rendezvous problems studied include, asymmetric rendezvous on a circle \cite{Alpern2000}, rendezvous on interval and circle \cite{rendezvousIntervalCircle},  rendezvous on discrete locations \cite{discreteLocations,discreteRandom} as well as  rendezvous on a plane  \cite{twoDimensionalSearch,secondPlane}. The gathering problem is when multiple mobile agents must gather at a point. It was shown that gathering was possible in the asynchronous model with limited visibility\cite{gatheringasynchronousVisibility}. Another extensively studied area in rendevous is rendezvous on a graph which was first generalized by Alpern \cite{graphAlpern}.  The different perspectives researched on rendezvous on a graph are  deterministic, randomized, symmetric, and asymmetric \cite{mobileAgentInRing}. There are also  different communication models which have been studied such as leaving marks at starting points for rendezvous on a line \cite{interestingOne}, as well as communicating with a white board \cite[p.178]{searchTheoreyGameTheorey}

The optimal strategy and cost for symmetric rendezvous on a line is an open problem. Rendezvous on a line is also known as the cow path problem. First posed by Alpern in 1976\cite{firstRendezvous} and formalized in 1995 in which a competitive ratio of 5 was found \cite{Alpern1995}, it remains one of the most important unsolved problems in operations research \cite[p.225]{searchGamesBook}. Uthaisombut developed a method for computing expected rendezvous time for mixed strategies and proposed that agents should travel different lengths \cite{differentLengthRendezvous}. Han, Du, Vera, and Zuluaga later proved that the optimal strategy uses move patterns of the same lengths, achieved the best known upper bound to date which is 4.2574, and conjectured the existence of a tight upper bound \cite{hanetal} . Another idea relevant to our paper is the concept of bounded resources this has been studied on the line \cite{bounbdedProbability,boundedMeeting}.Rendezvous on a line has also been studied extensively 
\cite{meetInCenter,AsynchronousLine,miniMax,rendezvousUnknownDistance,unknownInitial,alpern_2007,lineWeirdPlayers}.

This problem bears similarities to the open problem of rendezvous on the circle \cite{Alpern2013}, in that all rendezvous strategies from the circle can be applied along the perimeter of the disk.

These games have been studied extensively using different parameters such as multiple players, different speeds and more \cite{lotsOfProblems}.
The problem gets inherently more difficult as different environments get introduced. The problem can arise that the environment is not know by robots: for example, a robot may not know the shape of the environment until it fully explores the environment\cite{polygonExplore}. When these robots do not know their environment they might need to map it\cite{learnUnknownEnviroment}.
}

\ignore{
The optimal mixed strategy for symmetric rendezvous on a line is an open problem. First posed by Alpern in 1976\cite{firstRendezvous} and formalized in 1995 \cite{Alpern1995}, it remains one of the most important unsolved problems in operations research \cite[p.225]{searchGamesBook}. Uthaisombut developed a method for computing expected rendezvous time for mixed strategies and proposed that agents should travel different lengths \cite{differentLengthRendezvous}. Han, Du, Vera, and Zuluaga later proved that the optimal strategy uses move patterns of the same lengths, achieved the best known upper bound to date, and conjectured the existence of a tight upper bound \cite{hanetal} . 

This problem bears similarities to the open problem of rendezvous on the circle \cite{Alpern2013}, in that all rendezvous strategies from the circle can be applied along the perimeter of the disk.

Anderson and Fekete have explored the player-asymmetric case of two-dimensional rendezvous on a plane  \cite{twoDimensionalSearch}.

These games have been studied extensively using different parameters such as multiple players, different speeds and more \cite{lotsOfProblems}.
The problem gets inherently more difficult as different environments get introduced. The problem can arise that the environment is not know by robots: for example, a robot may not know the shape of the environment until it fully explores the environment\cite{polygonExplore}. When these robots do not know their environment they might need to map it\cite{learnUnknownEnviroment}.
 }
 \ignore{
One recent area of search games studied on the disk is treasure hunting\cite{treasureHunt}which is when one or more robots on a disk attempt to find a treasure and then bring the treasure to the exit point.
}

\subsection{Formal Definitions, Notation \& Terminology}
\label{sec: definitions}

\subsubsection{Problem Definition}
In the Symmetric Rendezvous problem in a Disk (\srd) two agents (robots) are initially placed on the plane at known distance from each other but at unknown location. A common reference point $O$ is at known distance and known location to both robots. The robots can move at speed 1 anywhere on the plane, and they detect each other only if they are at the same location, i.e. when the meet. Given that robots run the same (randomized) and synchronized algorithm, the goal is to design trajectory movements so as to minimize the (expected) meeting, also known rendezvous, time. 

The natural way to model \srd\ is to have robots start on the perimeter of disk, where its center serves as the common reference point. We adopt two equivalent parameterizations of the problem that arise by either normalizing robots' initial distance or the radius of the disk. In \srdr\ the disk has radius $\rho$, and the robots have Euclidean distance 2, while in \srda\ robots start on the perimeter of a unit disk and their arc distance is $2\alpha$. 

As we explain below, \srdr\ is the natural extension of the well-studied rendezvous on a line problem, while \srda\ is convenient for analyzing the performance of trajectory movements. We will use both perspectives of the problem interchangeably. 
Clearly, the initial Euclidean distance of the two robots in \srda\ is $2\sinn{\alpha}$. Hence, after scaling the instance by $1/\sinn{\alpha}$, the initial distance of the robots becomes 2, and the reference point (the origin) is at distance $\rho=1/\sinn{\alpha}$. Therefore, \srdr\ and \srda\ are equivalent under transformation $\alpha=\arcsinn{1/\rho}$.
Moreover, we will silently assume that $0<\alpha<\pi/4$ as otherwise \srda\ is degenerate, or that $\rho>\sqrt{2}$ for \srdr.

\subsubsection{The Related Rendezvous on a Line Problem}
In the well-studied Rendezvous problem on a Line (\srl), two robots, with the same specifications as in \srd\ are placed at known distance 2, but at unknown locations on the line. The objective is again to minimize the (expected) rendezvous time. Note that \srl\ is exactly the same as \srd$^\infty$. 

Natural randomized algorithms for solving \srl\ are so-called $k$-\textit{Markovian Strategies}, i.e. random processes that iterate indefinitely, so that in every iteration each robot follows a partial trajectory of total length $k$ (or $k$ times more than the original distance of the agents). 
The simplest $2$-\textit{Markovian Strategy} achieves expected rendezvous time 7: each robot with probability 1/2 moves distance 1 to the left and then to the right, back to its original position (and robot follows the symmetric trajectory to the right with the complementary probability). Note that robots meet with probability 1/4 after time 1, and otherwise they repeat the experiment after moving distance 2. If $f$ denotes the expected meeting time, then clearly $ f = \frac14 + \frac34 (2+f)$ from which we obtain $f=7$. 

An elegant refinement was proposed by Alpern~\cite{Alpern1995} and achieves expected rendezvous time 5. In this $3$-\textit{Markovian Strategy} 
each robot with probability 1/2 moves distance 1 to the left, then to the right back to its original position and then further right at distance (and robot follows the symmetric trajectory to the right with the complementary probability). This time, robots meet with probability 1/4 after time 1, and with probability 1/4 after time 3, otherwise the repeat the same process. If $f$ denotes the expected meeting time, then 
$ f = \frac14 + \frac143+\frac12\left( 3+f\right)$ 
from which we obtain $f=5$. 
Interestingly, this is also the best possible 3-Markovian strategy. 

Alpern's algorithm above is a distance-preserving algorithm, that is, after each iteration robots either meet or they preserve their original distance (but not their original locations). After a series of improvements, this idea was fruitfully generalized to $k$-\textit{Markovian Strategies} by Han et al.~\cite{hanetal} giving the best known rendezvous time $4.2574$ (for $k=15$). Notably, the best lower bound know is $4.1520$~\cite{hanetal}, which has resulted into the believable conjecture that $4.25$ is the best rendezvous time possible.

\subsubsection{Measures of Efficiency}
\srd\ and \srl\ can be viewed as online problems, where robots attempt to solve the problem only with partial input information. The natural measure of efficiency of any proposed \textit{online algorithm} is the so-called competitive ratio, defined as the ratio between the (expected) online algorithm performance over the best possible performance achievable by an \textit{offline} algorithm that knows the input. With this terminology in mind, it is immediate that Alpern's Algorithm~\cite{Alpern1995} for \srl\ is $5$-competitive, while the conjecture above stipulates that 4.25 is the best possible competitive ratio for the problem. 

Using the terminology above, the best offline algorithm can solve \srdr\ in time 1, and \srda\ in time $\sinn{\alpha}$, hence for our competitive analysis we will always scale the expected performance of our randomized algorithms accordingly. As a result, the competitive ratio of our algorithms will be described by functions of $\rho$ and $\alpha$ for \srdr\ and \srda, respectively, that are at least 1 for all values of the parameters. 

Our main goal will be to beat the psychological threshold of $4.25$ for \srdr, even for not too small values of $\rho$, demonstrating this way both the usefulness of a reference point and the effectiveness of our algorithms. In order to quantify this more explicitly, we introduce one more alternative measure of efficiency: an algorithm for \srdr\ will be called $\delta$-\textit{effective}, if $\delta$ is the largest value of $\rho$ for which the expected rendezvous time is no more than 4.25. If such $\rho$ does not exist, i.e. if the algorithm has expected rendezvous time at least 4.25 for all $\rho>\sqrt{2}$, then we call the algorithm $0$-effective. 
%
%
To conclude, apart from calculating the competitive ratio of our algorithms for \srdr, we will complementarily comment also on the effectiveness, with the understanding that the the higher their value is, the better the algorithm is. 
Note for example that the naive algorithm that simply has robots go to the reference point is 
$\rho$-competitive and $4.25$-effective.

Finally, we also consider the worst case performance of our algorithms that we call \textit{energy}. Formally, the energy of a rendezvous algorithm is defined as the supremum of the time by when the rendezvous is realized with probability 1. Note that any algorithm for \srl\ is bound to have infinite energy, whereas we show in this paper a family of algorithms for \srd\ that have bounded energy.

\subsection{Our Results}

\subsubsection{Techniques Outline}
Our main contribution is the exploration of 3-Markovian strategies for \srd. In particular, we  adjust Alpern's optimal 3-Markovian algorithm~\cite{Alpern1995} so as to take advantage of the reference point. 
Similar to the algorithm for \srl, our algorithm uses infinitely many random bits. In each random step, robots attempt to meet twice. If the rendezvous is not realized, then the projection of their trajectory to the perimeter of the original disk has length 3, however agents reside in a smaller disk but still at the same arc-distance. Then, robots repeat the process, so that, overall, the distances of the possible meeting points to the origin are strictly decreasing, i.e. the disk is sequentially shrinking. The trajectories of the robots are determined by \textit{two critical angles}, that determine the distance of the possible meeting points to the origin, i.e. how much the disk are shrunk. 

If in each iteration, the disk is shrunk ``a lot'', then robots move much more than half their Euclidean distance in order to meet, however when they repeat the experiment, they are solving a simpler problem since they are at the same arc-distance but the reference point is closer. If, on the other hand, the new disk is comparable to the original one, then robots attempt to greedily rendezvous as fast as possible, however if the meeting is not realized, robots have to solve an identical rendezvous problem (and such a strategy is bound to have a competitive ratio no better than 5, i.e. the ratio of the original \srl). 
Hence, the heart of the difficulty is to determine the two critical angles so that the instance that robots have to solve in each step shrinks by the right amount. 
Part of our contribution is that we demonstrate how to model the latter problem as a non-trivial non-linear optimization problem, which we also solve. 

\subsubsection{High Level Contributions}
As it is typical in online algorithmic problems, the impossibility of achieving optimal solutions is due to the unknown input (in our case the exact location of the robots). Our work contributes toward the fundamental algorithmic question as to whether additional resources (partial information about the unknown input - in our case a reference point) could yield improved upper bounds. Not only we answer this question in the positive, and we quantify properly our findings, but our trajectories also demonstrate how a rendezvous can be realized in 2 dimensions, even though the detection visibility of the robots in one dimensional. 
Part of our contribution is to also demonstrate how to adjust known algorithms for \srl\ so as to solve \srd. 
In particular, our methods can be generalized and induce improved competitive ratio upper bounds when the starting rendezvous algorithm is some other $k$-Markovian trajectory, $k>3$ (see~\cite{hanetal}). 
However, each such adaptation requires the determination of more than two critical angles, and the induced non-linear optimization problems would be possible to solve only numerically, rather than analytically as we do in this work. 
At the end, our algorithms are simple, yet powerful enough to induce good performance for a wide range of \srd\ instances. 

\subsubsection{Discussion on Energy}
We also consider the \textit{worst case} rendezvous time for our algorithms that we deliberately call \textit{energy}. In real-life applications, robots are bound to run only for limited time due to restricted resources (e.g. fuel). Assuming that the actual energy spent (fuel burnt) by a robot is proportional to it's operation time, we view the worst-case running time of our algorithms as the minimum \textit{energy} required by the robots that ensures that the execution of the algorithm terminates successfully with probability 1. 
Note that in the original \srl\ problem, and for any feasible rendezvous strategy, there is a positive probability (though exponentially small) that the rendezvous is arbitrarily large. Given that mobile robots should have access to bounded \textit{energy} (fuel), 
the probability that the rendezvous is never realized is positive. In contrast, we show that our algorithms for \srd\ require bounded energy, that there is a finite time by when the rendezvous is realized with probability 1. We show that this property holds true under mild conditions for our algorithms, and in particular it holds true for our algorithm that minimizes the expected rendezvous time. 
For the latter algorithm we show that the energy required in $\Theta\left(\rho^2\right)$. Finally, and somehow surprising, we also show that by compromising slightly on the expected termination time, the required energy becomes $\Theta\left(\rho\right)$.

\subsubsection{Paper Organization}

Section ~\ref{sec: upper bounds} is devoted to the optimization problem of minimizing the expected rendezvous time. 
First, in Section~\ref{sec: benchmark} we introduce some simple rendezvous algorithms that are mostly used as benchmark results for what will follow. 
Section~\ref{sec: 1 random bit} introduces the first non-trivial refinement, by providing a single random bit 1-Markovian algorithm. Our observations and results of that section are later used in Section~\ref{sec: 3-markovian}, where we discuss general $3$-Markovian strategies. Our main contribution is the determination of optimal critical angles, as well as of the induced competitive ratio, and induced effectiveness. We also provide the asymptotic behavior of the critical angles, as well as the convergence to competitive ratio 5, as the distance $\rho$ of the reference point goes to infinity. 
Then, in Section ~\ref{sec: energy} we study the worst case rendezvous time induced by our most efficient algorithm for \srd.  
In particular, 
the main contribution of Section~\ref{sec: energy 3-markovian infty} is the asymptotic analysis of the worst case rendezvous time for our algorithm that is meant to minimize the expected rendezvous time, and is shown to be $\Theta\left(\rho^2\right)$. 
Motivated by this, we study in Section~\ref{sec: constrained} time-energy tradeoffs. More specifically, we show that asymptotically in $\rho$, the expected termination time can stay optimal achieving improved but still $\Theta\left(\rho^2\right)$ energy, while only slightly suboptimal termination time allows for $\Theta\left(\rho\right)$ energy. 
Our expected rendezvous time positive results for \srd\ are summarized in Figure~\ref{fig: comparison}.
\begin{figure}[h!]
\begin{center}
 \includegraphics[width=8cm]{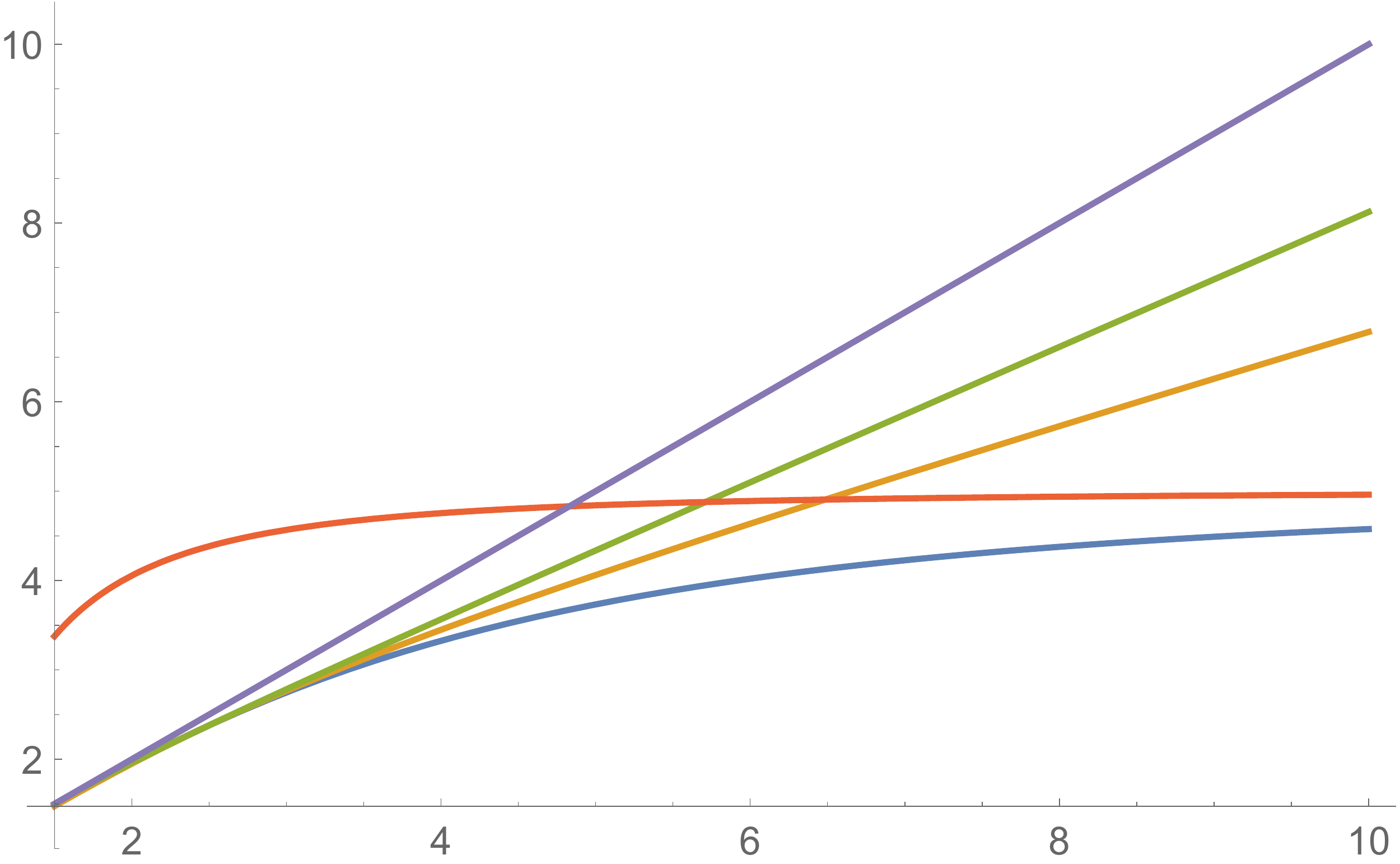}
\end{center}
 \caption{A comparison between the competitive ratio of the discussed algorithms for \srdr. 
 The horizontal axis corresponds to $\rho$, and the vertical to the competitive ratio. 
 The curves, along with the corresponding theorems that establish each result are as follows: 
 purple curve is the naive ``go-to-origin'' $\rho$-competitive 4.25-effective algorithm,
 green curve is the 4.888-effective 1-random bit Algorithm due to Theorem~\ref{thm: opt 1rbbeta},
 yellow curve is the 5.3236-effective Algorithm due to Theorem~\ref{thm: opt 1rbbetagamma},
 red curve is the 2.57-effective Algorithm due to Theorem~\ref{thm: 3-mark easy},
 and blue curve is the 7.1367-effective Algorithm due to Theorem~\ref{thm: opt parameters for inftyRB}. 
 }
 \label{fig: comparison}
\end{figure}
Many of our calculations throughout the paper are assisted by computer symbolic software (\textsc{Mathematica}), but all our results are rigorous.
Appendix~\ref{appendix: proofs} contains additional technical lemmata (and their proofs) omitted altogether from the main body, and which are invoked throughout this paper.

\ignore{
\begin{figure}[h!]
\begin{center}
 \includegraphics[width=7cm]{figs/comparison}
\end{center}
 \caption{A comparison between the competitive ratio of the discussed algorithms for \srdr. 
 The horizontal axis corresponds to $\rho$, and the vertical to the competitive ratio. 
 The curves, along with the corresponding theorems that establish each result are as follows: 
 purple curve is the naive ``go-to-origin'' $\rho$-competitive 4.25-effective algorithm,
 green curve is the 4.888-effective 1-random bit Algorithm due to Theorem~\ref{thm: opt 1rbbeta},
 yellow curve is the 5.3236-effective Algorithm due to Theorem~\ref{thm: opt 1rbbetagamma},
 red curve is the 2.57-effective Algorithm due to Theorem~\ref{thm: 3-mark easy},
 and blue curve is the 7.1367-effective Algorithm due to Theorem~\ref{thm: opt parameters for inftyRB}. 
 }
 \label{fig: comparison}
\end{figure}
}

\ignore{
naive[\[Rho]_] := (7*\[Rho]^2 + 8*Sqrt[\[Rho]^2 - 1]*\[Rho] - 
     3)/(3*\[Rho]^2 + 1);
OneRBbeta[\[Rho]_] := 1/4 (Sqrt[7] + 3 Sqrt[-1 + \[Rho]^2]) (*Thm 4*)

OneRBbetagamma[\[Rho]_] := 
 3/4 Sqrt[1 - 
    1/\[Rho]^2] (2/3 \[Rho] - 4/(3 \[Rho]) + (
     2 Sqrt[5] Sqrt[1 - 1/\[Rho]^2])/3) + Sqrt[
  1 - 9/16 (2/3 - 4/(3 \[Rho]^2) + (2 Sqrt[5] Sqrt[1 - 1/\[Rho]^2])/(
      3 \[Rho]))^2]; (*Thm 6*)
InfRBbetagamma[\[Rho]_] := \[Rho]*
  exprendtime[ArcSin[1/\[Rho]]]; (*Thm 7*)

Plot[ {InfRBbetagamma[\[Rho]], OneRBbetagamma[\[Rho]], 
  OneRBbeta[\[Rho]], naive[\[Rho]], \[Rho]}, {\[Rho], 1.5, 10}]
  }

\section{Rendezvous Algorithms in a Disk}
\label{sec: upper bounds}

\subsection{Some Immediate Benchmark Upper Bounds}
\label{sec: benchmark}

First we establish some immediate positive results that can be used as benchmarks for  rendezvous trajectories that we will present in subsequent sections. Recall that the naive ``go-to-origin'' algorithm is 4.25-effective. 

The first attempt is to blindly implement the 4.2574-competitive algorithm of~\cite{hanetal} for \srl. Indeed, given instance \srda, robots can be restricted to move on the perimeter of the disk. It is clear that the resulting algorithm has expected rendezvous time $\alpha$, and hence competitive ratio $4.2574 \frac{\alpha}{\sinn{\alpha}}$ for \srda\ (note that $\frac{\alpha}{\sinn{\alpha}}\geq 1$). 
However, one can slightly improve upon this by making robots move along chords instead. Indeed, the algorithm of \cite{hanetal} for \srl\ has the property that robots always move and attempt to meet at integral points, assuming that one of the robots starts from the origin of the real line. Now for problem \srdr\ in the disk, and given any initial location of the robots, consider an infinite sequence of clockwise and of counterclockwise arcs of length 2, along with their corresponding chords of length $2\sinn{1}$. Any integral movement of robots in the line can be simulated by movements on the chords by multiples of $\sinn{1}$, while $\sinn{1}$ is also the optimal offline solution. 
Therefore, we immediately obtain the following. 
\begin{theorem}\label{lem: Han}
\srdr\ admits an online algorithm which is $4.2574$-competitive and $0$-effective. 
\end{theorem}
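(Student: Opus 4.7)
The plan is to adapt Han et al.'s $4.2574$-competitive algorithm for \srl\ to \srdr\ by means of the chord-replacement trick outlined just before the statement. I would proceed in three stages.

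First, the naive adaptation: parameterize the perimeter of the unit disk by arc length so that the \srda\ instance becomes an \srl-like instance along a one-dimensional curve with initial ``line'' distance $2\alpha$. After the standard rescaling that makes the initial distance equal $2$, running Han's algorithm yields expected arc-travel time $4.2574\,\alpha$. Since the offline optimum in \srda\ is $\sinn{\alpha}$ (half the Euclidean initial distance), this naive version achieves competitive ratio $4.2574\,\alpha/\sinn{\alpha}$, which is finite but strictly above $4.2574$ whenever $\alpha>0$.

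Next, I would invoke the key structural property of Han's algorithm, namely that robots only ever attempt to meet at integer points of the line (with one robot anchored at the origin). Translated to the disk, this means the only points at which a rendezvous can possibly be realized form a discrete sequence on the perimeter, separated by one unit of arc length in the rescaled coordinates. Consequently each unit-length arc traversal can be replaced by the corresponding chord of length $2\sinn{1}$ without altering either the probabilistic structure of the algorithm or the set of candidate rendezvous moments. The total travel time shrinks by the uniform factor $\sinn{1}$ relative to the arc version; crucially, the offline benchmark in this chord model shrinks by the same factor (the offline-optimal meeting point is likewise reached via a chord rather than an arc), and therefore the competitive ratio cancels out to exactly $4.2574$ across all admissible $\rho$.

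Finally, for $0$-effectiveness I would observe that the expected rendezvous time in \srdr\ is $4.2574$ times the offline optimum, and that offline optimum is $1$, since two robots at Euclidean distance $2$ always meet at the midpoint in time $1$. Hence the expected rendezvous time is the constant $4.2574$, independent of $\rho$, and in particular it strictly exceeds $4.25$ for every $\rho>\sqrt{2}$; by definition no such $\rho$ satisfies the requirement for $\delta$-effectiveness with $\delta>0$, and the algorithm is $0$-effective. The main subtlety in writing out the full proof is the bookkeeping among several length normalizations (the unit-disk vs.\ $\rho$-disk formulations, arc vs.\ chord length, and the ``distance $2$'' convention of \srl); once those conversions are fixed, the argument is a direct reduction to a known line algorithm together with the elementary fact that a chord is strictly shorter than its subtending arc.
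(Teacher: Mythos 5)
Your overall strategy is the paper's: run Han et al.'s line algorithm along the perimeter and then improve the naive $4.2574\,\alpha/\sinn{\alpha}$ bound by replacing arc travel with chord travel. However, the step where you claim the competitive ratio ``cancels out to exactly $4.2574$'' contains a genuine error in the geometry. First, the chord subtending a unit-length arc of the unit circle has length $2\sinn{1/2}$, not $2\sinn{1}$; the chord of length $2\sinn{1}$ subtends an arc of length $2$. More importantly, if (as you describe) the candidate meeting points remain a discrete sequence \emph{on the perimeter} spaced one rescaled unit (central angle $\alpha$) apart and each unit arc is replaced by its own chord, then each unit move costs $2\sinn{\alpha/2}$, while the offline optimum is $\sinn{\alpha}=2\sinn{\alpha/2}\coss{\alpha/2}$. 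The two do not shrink ``by the same factor'': the ratio becomes $4.2574/\coss{\alpha/2}$, which is strictly larger than $4.2574$ for every admissible $\alpha$, so the theorem as stated does not follow from your construction. The offline benchmark, being a property of the instance and not of the algorithm, does not rescale with your choice of trajectory, so there is nothing to cancel against.

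The paper's construction avoids this by working at the granularity of \emph{pairs} of consecutive integer points: it tiles the perimeter with arcs of length $2$ (in the normalization where the robots are at arc distance $2$), replaces each such arc by its chord of length $2\sinn{1}$, and places the odd-integer candidate meeting points at the \emph{midpoints of these chords}, i.e., strictly inside the disk rather than on the perimeter. Then every unit move of the line algorithm is simulated by a segment of length exactly $\sinn{1}$, which coincides with the offline optimum for that instance, and the ratio is exactly $4.2574$. To repair your proof you would need to adopt this coarser chord replacement (and check that both robots, starting at even integer points, induce the same tiling so the simulation is consistent); your final paragraph on $0$-effectiveness is fine once the $4.2574$-competitiveness is in place.
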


Next we show that Theorem~\ref{lem: Han} admits an easy refinement using a simple 3-Markovian process, which is a direct application of~\cite{Alpern1995}.

\begin{theorem}
\label{thm: 3-mark easy}
\srdr\ admits an online algorithm which is 
$\left(\frac{7 \rho ^2+8 \sqrt{\rho ^2-1} \rho -3}{3 \rho ^2+1}\right)$-competitive
and $2.57$-effective.
\end{theorem}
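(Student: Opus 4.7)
The plan is to adapt Alpern's classic 3-Markovian strategy from \srl\ to the disk setting, exploiting the reference point $O$ as a deterministic rendezvous fallback after the randomized phase. I will work in the \srda\ parameterization, since the unit-disk view is cleaner geometrically and can be converted to \srdr\ at the end via the standard substitution $\sin\alpha=1/\rho$.

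First, I would fix coordinates so that $A=(\cos\alpha,-\sin\alpha)$ and $B=(\cos\alpha,\sin\alpha)$ lie on the unit circle around $O=(0,0)$, making the chord $AB$ the vertical segment $x=\cos\alpha$. Because both robots know $O$ and their own positions, they can agree on a consistent $\pm y$ orientation along the chord without any additional randomness. The algorithm is then a direct instantiation of Alpern's 3-Markovian motion: each robot independently, with probability $1/2$ each, traverses either the L-R-R or the symmetric R-L-L pattern of three successive chord moves of length $\sin\alpha$. This phase occupies time $3\sin\alpha$; if no rendezvous has occurred by then, each robot walks straight to $O$, guaranteeing termination in bounded time.

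For the analysis, I would enumerate the four equiprobable coin-flip outcomes. The two coordinated cases AR-BL and AL-BR produce rendezvous at the chord midpoint $(\cos\alpha,0)$ at times $\sin\alpha$ and $3\sin\alpha$ respectively, each with probability $1/4$. The two uncoordinated cases AL-BL and AR-BR (total probability $1/2$) leave the robots on the chord at positions whose distances from $O$ are $\cos\alpha$ and $\sqrt{1+3\sin^2\alpha}$; under the walk-to-$O$ fallback the rendezvous completes when the farther robot arrives, at time $3\sin\alpha+\sqrt{1+3\sin^2\alpha}$. Summing the three probability-weighted contributions gives the expected rendezvous time, which I then divide by the offline optimum $\sin\alpha$ to obtain the competitive ratio.

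Translating to \srdr\ via $\sin\alpha=1/\rho$ and $\cos\alpha=\sqrt{\rho^2-1}/\rho$, and clearing radicals carefully, the expression should collapse to the claimed closed form $(7\rho^2+8\rho\sqrt{\rho^2-1}-3)/(3\rho^2+1)$; the asymptotic check that this tends to Alpern's $5$ as $\rho\to\infty$ and to $1$ as $\rho\to\sqrt{2}$ provides useful sanity bounds. The $2.57$-effectiveness claim then follows by setting the ratio equal to $4.25$ and solving the resulting quadratic in $\rho^2$. The main obstacle I anticipate is the trigonometric simplification that collects $\sqrt{1+3\sin^2\alpha}$, $\sin\alpha$, and $\cos\alpha$ into the target polynomial form—rationalizing and combining terms so as to produce exactly the denominator $3\rho^2+1$ and the cross term $8\rho\sqrt{\rho^2-1}$ requires some care, but it is a routine (if tedious) manipulation that can be verified by \textsc{Mathematica} as the paper indicates.
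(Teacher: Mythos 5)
There is a genuine gap here, on two fronts.

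First, your algorithm is not the one that achieves the stated bound, and the numbers do not work out. The claimed ratio $\frac{7\rho^2+8\rho\sqrt{\rho^2-1}-3}{3\rho^2+1}$ equals $\frac{2+3\cos\alpha}{2-\cos\alpha}$ under $\sin\alpha=1/\rho$; it is at most $5$ for every $\rho$ and tends to $5$ as $\rho\to\infty$. Your one-shot randomized round followed by a deterministic walk to $O$ pays, with probability $1/2$, a trip to the origin of length about $1$ in the unit disk, i.e.\ about $\rho$ times the offline optimum $\sin\alpha$. Even granting your geometry, your expected time is $\tfrac52\sin\alpha+\tfrac12\sqrt{1+3\sin^2\alpha}$, i.e.\ a competitive ratio of $\tfrac52+\tfrac12\sqrt{\rho^2+3}=\Theta(\rho)$ — this does not collapse to the claimed closed form for any amount of algebra, and your own sanity check (limit $5$ as $\rho\to\infty$) would have exposed it. The paper's algorithm never goes to the origin: it repeats the two-attempt randomized round indefinitely, and the crucial observation is that when the round fails (probability $1/2$) the two agents are again at arc distance $2\alpha$ but on a disk whose radius has shrunk by the factor $\cos\alpha$. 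This gives the recurrence $\mathcal R=\sin\alpha+\tfrac34\sin 2\alpha+\tfrac12\cos\alpha\,\mathcal R$, whose solution $\mathcal R=\frac{\sin\alpha\,(2+3\cos\alpha)}{2-\cos\alpha}$ yields the stated ratio. The iteration on shrinking disks is the missing idea; the go-to-origin fallback destroys the bound.

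Second, your coordinate setup assumes information the robots do not have. A robot knows its own position and $O$, but only that its peer is at arc distance $2\alpha$ clockwise \emph{or} counterclockwise; there are two candidate chords through its position, and they are not collinear. So the robots cannot ``agree on a consistent $\pm y$ orientation along the chord $AB$'' — if they could, they would simply walk to the known midpoint and meet deterministically in time $\sin\alpha$. The coin flip must select one of the two candidate chords, and then the line-style analysis breaks: in the (wrong, wrong) case, continuing straight past one's starting point does \emph{not} bring the two robots to a common point at time $3\sin\alpha$ (the two reflected targets differ by a factor involving $\cos^2\alpha$). Making the second attempt succeed is exactly why the paper darts to the midpoint of the two first-round candidate points (the $\gamma$-darting of the $k$-RB framework), which bends the trajectory toward the origin and is also what produces the shrinking-disk structure needed for the recurrence.
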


\begin{proof}
We introduce the language of \srda. The main idea of the algorithm is that each robot iteratively tries to greedily meet her peer in the ``middle point of their locations''. More specifically, in each iteration, each robot tosses a coin, which advices the robot whether her peer is at arc distance $\alpha$ cw or ccw. Call the current robot's location $A$, say on a unit disk, and let $B$ be a point at cw distance $\alpha$. Then the robot attempts to meet her peer in the middle point $M$ of $A,B$, and this succeeds with probability 1/4. If this fails, it might be due to that the other robot was actually in the opposite direction and her random ccw move brought her at the corresponding point $M'$. Then the two robots attempt to meet in the middle point of $M,M'$, and again this meeting is realized with probability 1/4. In the complementary event, with probability 1/2 both robots choose to move in the same direction in their first move. Still after their second move, and given they have not met, they are still at arc-distance $\alpha$, but now they reside on a smaller disk, and they repeat the process. 

Denote  by $\mathcal R$ the expected rendezvous time of the algorithm, given that agents start on the perimeter of a radius-$1$ disk. 
If robots do not meet, they are still at arc distance $\alpha$ in a disk that is scaled by $\coss{\alpha}$. Therefore, their Euclidean distance in the resulting disk is $2\coss{\alpha} \sinn{\alpha} = \sinn{2\alpha}$.

Notice that with probability $1/4$ robots meet at time $\sinn{\alpha}$. With probability $1/4$, they meet at time 
$ \sinn{\alpha}+\tfrac12\sinn{2\alpha}$. Otherwise, the have already walked distance $\sinn{\alpha}+\tfrac12\sinn{2\alpha}$, and they are at arc distance $\alpha$ of a radius-$\coss{\alpha}$ disk, when they repeat the process.  Therefore, 
$$
\mathcal R =
\sinn{\alpha} + 
\frac34
\sinn{2\alpha} + \frac12\coss{\alpha} \mathcal R. 
$$
Solving for $\mathcal R$ gives  expected rendezvous time
$\mathcal R=\tfrac{\sin (a) (2+3 \cos (a))}{2-\cos (a)}$. 
Hence, the competitive ratio for \srda\ is $\tfrac{2+3 \cos (a)}{2-\cos (a)}$ and for \srdr\ it is $\tfrac{7 \rho ^2+8 \sqrt{\rho ^2-1} \rho -3}{3 \rho ^2+1}$. Note that the competitive ratio becomes 4.25 exactly for $\rho=\frac{29}{\sqrt{165}}\approx2.25765$. 
\end{proof}

\ignore{
FullSimplify[
 (Sin[a] + 3/4*Sin[2*a])*Sum[ (Cos[a]/2)^i, {i, 0, Infinity}]
 ]
test[a_] := -((2 + 3 Cos[a]) /(-2 + Cos[a]))
FullSimplify[test[ArcSin[1/\[Rho]]], Assumptions -> \[Rho] >= 2]
}

\subsection{Rendezvous with Minimal Randomness}
\label{sec: 1 random bit}
Theorems~\ref{lem: Han} and~\ref{thm: 3-mark easy} were obtained by algorithms that use infinitely many random bits. 
This section is devoted into showing that even with 1 random bit, we can perform better than the naive ``go-to-origin'' algorithm,
as well as of the algorithms of  Theorems~\ref{lem: Han} and~\ref{thm: 3-mark easy}, at least for certain values of $\alpha, \rho$. 
This will also help as a warm-up for our later results. 

Consider instance \srda\ and mobile agents at arc distance $\alpha$ as in Figure~\ref{1stepfig}. Each of them knows that their peer is $\alpha$ away either clockwise or counterclockwise, and consider the corresponding arcs. Notice that in both algorithms of Theorems~\ref{lem: Han} and ~\ref{thm: 3-mark easy} robots attempt to meet at the bisectors of the two arcs. Given a fixed angle $\beta$, each robot, and at each iteration chooses uniformly at random either the cw or the ccw direction, and moves in that direction with respect to the origin till the bisector is hit. We call this move a \textit{random $\beta$-darting}. Notice that $0$-darting corresponds to going to the origin, while the algorithm of Theorem~\ref{thm: 3-mark easy} we have $\beta=\pi/2-\alpha$. The main idea behind our 1-random bit algorithm 1RB with parameter $\beta$ is to choose the optimal $\beta\in [0,\pi/2-\alpha)$ that minimizes the expected termination time. 
\begin{figure}[h!]
\begin{center}
 \includegraphics[width=8cm]{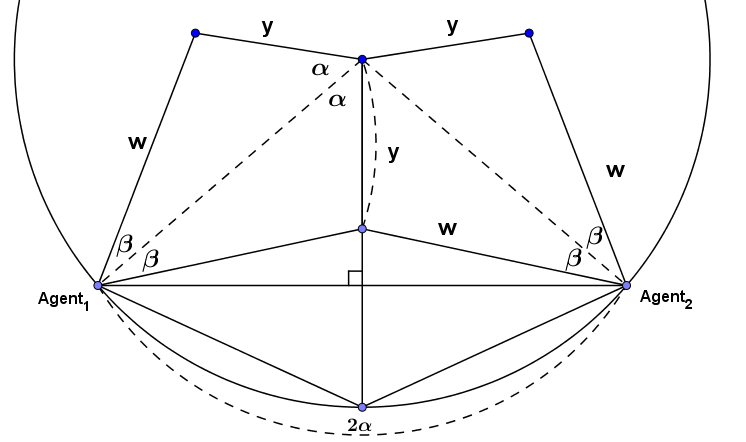}
 \end{center}
 \caption{Geometry of Algorithm 1RB$_\beta$.}\label{1stepfig}
\end{figure}
\begin{algorithm}
\caption{1RB$_\beta$}
\label{1RB}
\begin{algorithmic}[1]
\STATE Do a random $\beta$-darting.
\STATE Go to origin (if peer is not already met).
\end{algorithmic}
\end{algorithm}

\begin{lemma}
\label{lem: 1rb performance}
The expected rendezvous time $R(\beta)$ of 1RB$_\beta$ is 
$$
\mathcal R(\beta) = \sinn\alpha\csc(\alpha+\beta) + \frac34\sinn\beta\csc(\alpha+\beta).
$$
\end{lemma}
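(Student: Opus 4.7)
The plan is a direct four-way case analysis over the two independent coin flips of the agents, preceded by a single law-of-sines computation.

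Normalize to the unit disk and place the starting positions $A, B$ symmetrically about the ray from the origin $O$ that bisects the short arc between them. With arc distance $2\alpha$, any point $P$ on this bisector ray satisfies $\angle AOP = \alpha$. I interpret the $\beta$-darting as the straight-line segment from $A$ to the point $P$ on the chosen bisector ray for which the interior angle at $A$ in triangle $OAP$ equals $\beta$. This is consistent with the two boundary cases noted in the text: $\beta = 0$ recovers the go-to-origin algorithm, and $\beta = \pi/2 - \alpha$ recovers the chord-midpoint algorithm of Theorem~\ref{thm: 3-mark easy} (where the right angle at $P$ in $\triangle OAP$ gives $\angle OAP = \pi/2 - \alpha$). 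Applying the law of sines in $\triangle OAP$ with $|OA| = 1$, $\angle AOP = \alpha$, $\angle OAP = \beta$, and hence $\angle APO = \pi - \alpha - \beta$, yields
\[
d \,:=\, |AP| \;=\; \frac{\sin\alpha}{\sin(\alpha+\beta)}, \qquad r \,:=\, |OP| \;=\; \frac{\sin\beta}{\sin(\alpha+\beta)}.
\]

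The two coin flips produce four equally likely outcomes. In the unique outcome in which both agents guess the true direction of their peer, both darts terminate on the true short-arc bisector; the reflection symmetry of the protocol across this bisector forces the two endpoints to coincide, so the robots rendezvous at time $d$. In each of the three remaining outcomes the two darts land on distinct rays from the origin, and a short parametric check confirms that the two dart segments cannot cross before reaching their endpoints. Regardless of the realization, each agent ends the dart at distance $r$ from $O$; Step~2 then sends both of them along their respective radial segments to $O$, which they reach simultaneously at time $d+r$, where they rendezvous.

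Combining the two contributions with probabilities $1/4$ and $3/4$,
\[
\mathcal R(\beta) \;=\; \tfrac{1}{4} d + \tfrac{3}{4}(d+r) \;=\; d + \tfrac{3}{4} r,
\]
which upon substitution of the expressions for $d$ and $r$ is exactly the claimed formula. The only step needing care is the ``no early meeting'' verification in the three non-matching outcomes; this is short, since in each such case the two dart segments together with the two subsequent radial segments lie on four pairwise distinct rays through $O$ that only intersect at $O$, while both agents arrive at $O$ simultaneously at time $d+r$.
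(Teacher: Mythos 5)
Your proof is correct and follows essentially the same route as the paper's: compute the dart length $w=|AP|=\sin\alpha\csc(\alpha+\beta)$ and the bisector-point distance $y=|OP|=\sin\beta\csc(\alpha+\beta)$ by the law of sines, then combine the two meeting times with weights $\tfrac14$ and $\tfrac34$ to obtain $w+\tfrac34 y$. Your explicit four-outcome analysis and the no-early-meeting check are harmless additions the paper omits, and your arithmetic $\tfrac14 d+\tfrac34(d+r)=d+\tfrac34 r$ is the correct form of the paper's displayed step, which contains a typo (it writes $\tfrac14 y+\tfrac34(y+w)$ where $\tfrac14 w+\tfrac34(w+y)$ is intended).
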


\begin{proof}
For fixed $\alpha$, let $w=w(\beta)$ be the length of the line segment between the position of a robot and the possible meeting point at the bisector of the critical arcs. Let also $y=y(\beta)$ denote the distance of the possible meeting point from the origin (see also Figure~\ref{1stepfig}). 

Clearly, with probability 1/4 robots move after time $w$, and otherwise they meet at time $y+w$. Hence
$$
\mathcal R(\beta) = \frac14y+\frac34(y+w) = y+\frac34w. 
$$
The proof follows by noticing that 
$w = \sinn\alpha\csc(\alpha+\beta)$ and that $y= \sinn\beta\csc(\alpha+\beta)$,
which is obtained by a simple geometric argument based on the Law of sines. 
\end{proof}

\begin{theorem}
\label{thm: opt 1rbbeta}
The optimal 1RB$_\beta$ algorithm uses 
$$
\overline{\beta}=\max \left(0,-\sinn{1/\rho} + \arccos\left(\tfrac{3}{4}\right)\right)
$$
in which case the algorithm is 
$\frac{3 \sqrt{\rho ^2-1}+\sqrt{7}}{4}$-competitive 
and 4.88813-effective. 
\end{theorem}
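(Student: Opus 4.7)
The plan is to treat this as a one-variable calculus problem: minimize the expression $\mathcal R(\beta)$ from Lemma~\ref{lem: 1rb performance} over $\beta\in[0,\pi/2-\alpha)$, then translate the minimum value and the minimizer back to the $\rho$-parameterization and solve for the value of $\rho$ at which this value equals $4.25$.

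First I would rewrite $\mathcal R(\beta)=\csc(\alpha+\beta)\bigl[\sin\alpha+\tfrac{3}{4}\sin\beta\bigr]$ and differentiate with respect to $\beta$, treating $\alpha$ as a parameter. Setting the derivative to zero and clearing the $\csc(\alpha+\beta)$ factor, I expect to reach
$$\tfrac{3}{4}\cos\beta\,\sin(\alpha+\beta)=\cos(\alpha+\beta)\bigl[\sin\alpha+\tfrac{3}{4}\sin\beta\bigr].$$
The nice cancellation is that $\cos\beta\,\sin(\alpha+\beta)-\sin\beta\,\cos(\alpha+\beta)=\sin\alpha$, so the whole equation collapses to $\tfrac{3}{4}\sin\alpha=\sin\alpha\,\cos(\alpha+\beta)$, i.e.\ $\cos(\alpha+\beta)=\tfrac{3}{4}$. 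This gives the unique interior critical point $\beta^{\star}=\arccos(3/4)-\alpha$. A second-derivative check (or simply noting that $\mathcal R$ is a smooth function on a compact extension of the interval with finite boundary values) confirms it is a minimum. The boundary constraint $\beta\ge 0$ must be active precisely when $\alpha\ge\arccos(3/4)$, which yields the stated $\overline{\beta}=\max\bigl(0,\arccos(3/4)-\arcsin(1/\rho)\bigr)$ after substituting $\alpha=\arcsin(1/\rho)$.

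Next I would plug the interior optimum back into $\mathcal R$. Using $\cos(\alpha+\beta^{\star})=\tfrac{3}{4}$ and $\sin(\alpha+\beta^{\star})=\tfrac{\sqrt 7}{4}$, and expanding $\sin\beta^{\star}=\sin(\arccos(3/4)-\alpha)$ by angle subtraction, the two $\sin\alpha$ terms combine cleanly via $\tfrac{4}{\sqrt 7}-\tfrac{9}{4\sqrt 7}=\tfrac{\sqrt 7}{4}$, giving
$$\mathcal R(\overline\beta)=\tfrac{\sqrt 7}{4}\sin\alpha+\tfrac{3}{4}\cos\alpha.$$
Since the offline optimum in \srda{} is $\sin\alpha$, the competitive ratio for \srda{} is $\tfrac{\sqrt 7}{4}+\tfrac{3}{4}\cot\alpha$, and substituting $\sin\alpha=1/\rho$, $\cos\alpha=\sqrt{\rho^2-1}/\rho$ gives the claimed $\bigl(3\sqrt{\rho^2-1}+\sqrt 7\bigr)/4$ for \srdr{}. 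In the boundary regime $\alpha\ge\arccos(3/4)$ (equivalently $\rho\le 4/\sqrt 7$) one uses $\beta=0$, which reduces to a direct ``dart to the bisector'' argument; the resulting ratio is dominated by the interior one in the regime of interest, so it does not affect the effectiveness computation.

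Finally, for the effectiveness I would set $\bigl(3\sqrt{\rho^2-1}+\sqrt 7\bigr)/4=17/4$, solve for $\rho$, and obtain $\rho^{2}=\bigl(305-34\sqrt 7\bigr)/9$, whose square root is approximately $4.88813$. The main subtlety rather than obstacle is the boundary handling — one must verify the domain constraint $\beta^{\star}<\pi/2-\alpha$ (automatic since $\arccos(3/4)<\pi/2$) and check that the $\beta=0$ branch is never better than the interior one for the range of $\rho$ that determines the effectiveness; everything else is a clean trigonometric computation.
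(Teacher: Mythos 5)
Your proposal is correct and follows essentially the same route as the paper's proof: both differentiate $\mathcal R(\beta)$ from Lemma~\ref{lem: 1rb performance}, reduce the critical-point equation to $\cos(\alpha+\beta)=\tfrac34$ (the paper writes the derivative directly as $(\tfrac34-\cos(\alpha+\beta))\csc^2(\alpha+\beta)\sin\alpha$, which is the same cancellation you perform via the angle-subtraction identity), handle the boundary case $\alpha\ge\arccos(3/4)$ identically, and arrive at $\mathcal R(\overline\beta)=\cos\bigl(\arccos(3/4)-\alpha\bigr)$ and the same value $\rho^2=(305-34\sqrt7)/9$ for the effectiveness.
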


\ignore{
FullSimplify[
 \[Rho]*TrigExpand[
   Cos[ ArcCos[3/4] - ArcSin[1/\[Rho]]]
   ], Assumptions -> \[Rho] >= 1]
   }

\begin{proof}
For convenience, we analyze the performance on \srda\ instead. Using Lemma~\ref{lem: 1rb performance}, we find the critical values of the expected rendezvous time $\mathcal R(\beta)$ by calculating  
$$
\frac{d}{d\beta} \mathcal R(\beta)= \left(\tfrac{3}{4}-\cos\left(\alpha+\beta\right)\right)\csc^2\left(\alpha+\beta\right)\sin\left(\alpha\right). 
$$

Observe that as $0<\alpha\leq\alpha+\beta\leq\frac{\pi}{2}$ we have $\cos(\alpha+\beta)\leq \cos\alpha$ and thus, for $\alpha > \arccos{\frac{3}{4}}$ we see that $R(\beta)$ is increasing. Hence, $\mathcal R(\beta)$ is minimized at $\mathcal R(0)= \sin\alpha\csc\alpha = 1$.

For $\alpha \leq \arccos{\frac{3}{4}},$ $\mathcal R(\beta)$ is decreasing when $\beta< \overline{\beta}$ and increasing when $\beta > \overline{\beta}$, where
$\overline{\beta} = -\alpha + \arccos\left(\tfrac{3}{4}\right).$
Thus, $R(\beta)$ is minimized at $\overline{\beta}$ and by some straightforward trigonometric calculations 
we see that $\mathcal R(\overline{\beta})=\coss{\arccoss{\frac{3}{4}}-\alpha}$. Since $\arccos{\frac{3}{4}}-\alpha \in [0,\frac{\pi}{2}],$ then this cost is no more than 1.

When the problem is not degenerate, we conclude that 1RB$_{\overline{\beta}}$ is $\frac{\coss{\arccoss{\frac{3}{4}}-\alpha}}{\sinn{\alpha}}$-competitive. Our claim now follows for \srdr\ using transformation $\alpha=\sinn{1/\rho}$. Finally note that $\frac{3 \sqrt{\rho ^2-1}+\sqrt{7}}{4}$ is increasing, and it is equal to 4.25 when $\rho =\frac{1}{3} \sqrt{305-34 \sqrt{7}}\approx 4.88813.$
\ignore{
Reduce[ 1/4 (Sqrt[7] + 3 Sqrt[-1 + \[Rho]^2]) == 425/100, \[Rho]]
}
\end{proof}

\subsection{Improved Rendezvous with 3-Markovian Trajectories}
\label{sec: 3-markovian}

In this section we generalize the algorithm of Section~\ref{sec: upper bounds} in two ways; first we allow more random bits, and second, in every random trial, we allow robots trajectories two darting attempts (recall that Algorithm 1RB$_\beta$ allows for only one darting attempts. In the language of the established results for \srl\, we will adopt Alpern's 3-Markovian trajectory~\cite{Alpern1995}. 

The main idea behind our new algorithms is as follows

 At every random step, robots will reside at the perimeter of a disk, and they will be at constant arc distance $\alpha$. 
 As in 1RB$_\beta$, each robot is associated with two bisectors in which robot will make an attempt to meet her peer. 
 A fixed angle $\beta$ along with a random bit will determine the direction (cw or ccw) of the random $\beta$-darting that will bring the robot in one of the bisectors. 
 Note that due to the symmetry imposed by the trajectory, a meeting is realized in this step with probability 1/4. 
If the rendezvous is not realized, the robot will attempt a deterministic $\gamma$-darting to the other bisector, and the meeting is realized in this step with probability 1/4 as well. 
If the rendezvous fails again, then the process repeats or robots go to the origin to meet.
A process that involves $k$ random bits (and hence $2k$ possible meeting points) will be referred to as $k$-step 3-Markovian. 
Note that we allow $k=\infty$. 
The formal description of the algorithm is as follows.

\begin{algorithm}
\caption{$k$-RB$_{\beta,\gamma}$}
\label{kRB}
\begin{algorithmic}[1]
\STATE	Repeat $k$ times
\STATE ~~~~~~ Do a random $\beta$-darting.
\STATE ~~~~~~ Do a $\gamma$-darting in the opposite direction
\STATE Go to origin (if peer is not already met).
\end{algorithmic}
\end{algorithm}
 
 Observe that the algorithm of Theorem~\ref{thm: 3-mark easy} can be alternatively described as $\infty$-RB$_{\pi/2-\alpha/2,\pi/2-\alpha/2}$, while 1RB$_\beta$ is equivalent to $1$-RB$_{\beta,0}$. Next we analyze $k$-RB$_{\beta,\gamma}$ for all values of $k, \beta, \gamma$. 
Our goal is to analyze the expected rendezvous time, denoted by $\ren{k}{}{\beta}{\gamma}$. 
We adopt the language either of \srdr\ or of \srda\ depending on what is more convenient, in which case $\ren{k}{}{\beta}{\gamma}$ will be either a function of $\rho$ or of $\alpha$. To make this more explicit in our notation, and in order to remove any ambiguity, we will be writing $\ren{k}{\rho}{\beta}{\gamma}$ and $\ren{k}{\alpha}{\beta}{\gamma}$ for the expected running time in \srdr\ and \srda, respectively. Note that $\ren{k}{\rho}{\beta}{\gamma}=\frac{\ren{k}{\alpha}{\beta}{\gamma}}{\sinn\alpha}$.

 \begin{lemma}
\label{lem: krb performance}
For every fixed $\alpha$, the performance of 
$k$-RB$_{\beta,\gamma}$
for \srda, when $k=1,\infty$, is
\begin{align}
\ren{1}{\alpha}{\beta}{\gamma} &=\tfrac{1}{2} \csc (\alpha +\beta ) (\sin (\beta ) \csc (2 \alpha +\gamma ) (3 \sin (\alpha ) \cos (\alpha )+\sin (\gamma ))+2 \sin (\alpha ))
\label{equa: ren 1 value}
\\
\ren{\infty}{\alpha}{\beta}{\gamma} &=\frac{\sin (\alpha ) (3 \sin (\alpha -\beta )-3 \sin (\alpha +\beta )-4 \sin (2 \alpha +\gamma ))}{-2 \cos (\alpha -\beta +\gamma )+2 \cos (3 \alpha +\beta +\gamma )+\cos (\beta -\gamma )-\cos (\beta +\gamma )}.
\label{equa: ren inf value}
\end{align}
\end{lemma}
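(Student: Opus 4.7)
The plan is to analyze a single iteration of the repeat loop in Algorithm~\ref{kRB} both geometrically and probabilistically, and then either terminate with the go-to-origin step (for $k=1$) or recurse via self-similarity (for $k=\infty$).

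Fix coordinates with the reference point at the origin $O$, robot $A$ at angular position $0$ on a unit disk, and peer $B$ at angular position $\pm 2\alpha$. The two candidate bisector rays from $O$ pass through angles $\pm\alpha$. The first (random $\beta$-)darting is already handled by Lemma~\ref{lem: 1rb performance}: the robot walks distance $w_\beta = \sinn{\alpha}\cscc{\alpha+\beta}$ to a point $M_\beta$ at distance $y_\beta = \sinn{\beta}\cscc{\alpha+\beta}$ from $O$, by the Law of Sines in the triangle $OAM_\beta$ with $\angle AOM_\beta = \alpha$ and $\angle OAM_\beta = \beta$. For the second ($\gamma$-)darting I would apply the Law of Sines once more, now to triangle $OM_\beta M'_\gamma$: the two bisector rays subtend angle $2\alpha$ at $O$, and the angle at $M_\beta$ is $\gamma$ by the definition of $\gamma$-darting. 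This yields
\begin{equation*}
w_\gamma \;=\; \frac{y_\beta \sinn{2\alpha}}{\sinn{2\alpha+\gamma}}, \qquad y_\gamma \;=\; \frac{y_\beta \sinn{\gamma}}{\sinn{2\alpha+\gamma}}.
\end{equation*}

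Next I would enumerate the four equally likely pairs of directions chosen by the two agents during a single iteration. Exactly one pair makes both converge to their shared bisector during the first darting (meeting, prob.\ $1/4$), exactly one pair makes them converge to that shared bisector only after the second darting (meeting, prob.\ $1/4$), and the remaining two pairs leave each agent on its outer bisector at distance $y_\gamma$ from $O$, still at angular arc distance $2\alpha$ from the other---hence in an \srda-configuration on a disk of radius $y_\gamma$. For $k=1$, the unsuccessful case incurs an extra $y_\gamma$ to reach the origin, so
\begin{equation*}
\ren{1}{\alpha}{\beta}{\gamma} \;=\; \tfrac{1}{4}w_\beta + \tfrac{1}{4}(w_\beta + w_\gamma) + \tfrac{1}{2}(w_\beta + w_\gamma + y_\gamma) \;=\; w_\beta + \tfrac{3}{4}w_\gamma + \tfrac{1}{2}y_\gamma.
\end{equation*}
For $k=\infty$, linearity of expectation together with the self-similarity of the residual configuration gives
\begin{equation*}
\ren{\infty}{\alpha}{\beta}{\gamma} \;=\; w_\beta + \tfrac{3}{4}w_\gamma + \tfrac{1}{2}\, y_\gamma\, \ren{\infty}{\alpha}{\beta}{\gamma},
\end{equation*}
from which $\ren{\infty}{\alpha}{\beta}{\gamma}$ is immediate.

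Finally, I would substitute the closed forms for $w_\beta, w_\gamma, y_\beta, y_\gamma$ into these two expressions and simplify using standard product-to-sum identities (in particular $2\sinn{A}\sinn{B} = \coss{A-B}-\coss{A+B}$ and $2\coss{A}\sinn{B} = \sinn{A+B}-\sinn{A-B}$) to match the stated forms~\eqref{equa: ren 1 value} and~\eqref{equa: ren inf value}. The trigonometric bookkeeping is the most tedious---though entirely routine---part; the only genuinely delicate conceptual point is the self-similarity claim underlying the $k=\infty$ recurrence, namely that after one failed iteration the two agents again form an \srda-instance with the same $\alpha$, rescaled by a factor of $y_\gamma$.
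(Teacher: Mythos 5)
Your proposal is correct and follows essentially the same route as the paper: identical Law-of-Sines computations of $w,y,d,x$ (your $w_\beta,y_\beta,w_\gamma,y_\gamma$), the same $\tfrac14/\tfrac14/\tfrac12$ case analysis of the direction pairs, and the same intermediate expressions $w+\tfrac34 d+\tfrac12 x$ and $\tfrac{3d+4w}{2(2-x)}$ before the final trigonometric simplification. The only cosmetic difference is that the paper derives a closed form for general finite $k$ by summing a geometric series and then specializes to $k=1$ and $k\to\infty$, whereas you treat $k=1$ directly and obtain the $k=\infty$ case from the self-similarity recurrence $\mathcal R=w+\tfrac34 d+\tfrac12 x\mathcal R$, which solves to the same expression.
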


\begin{proof}
Note that each random step of $k$-RB$_{\beta,\gamma}$ involves two darting moves. 
For fixed $\alpha$, and a disk of radius 1, let $w=w(\beta)$ be the length of the line segment between the position of a robot and the possible meeting point at the bisector of the critical arcs in the first darting move. 
 Let also $y=y(\beta)$ denote the distance of the possible meeting point from the origin (see also Figure~\ref{figure 1F2B}). The values for $w,y$ are obtained as in the proof of Lemma~\ref{lem: 1rb performance} and are summarized below. 
 \begin{figure}
    \centering
       \includegraphics[width=8cm]{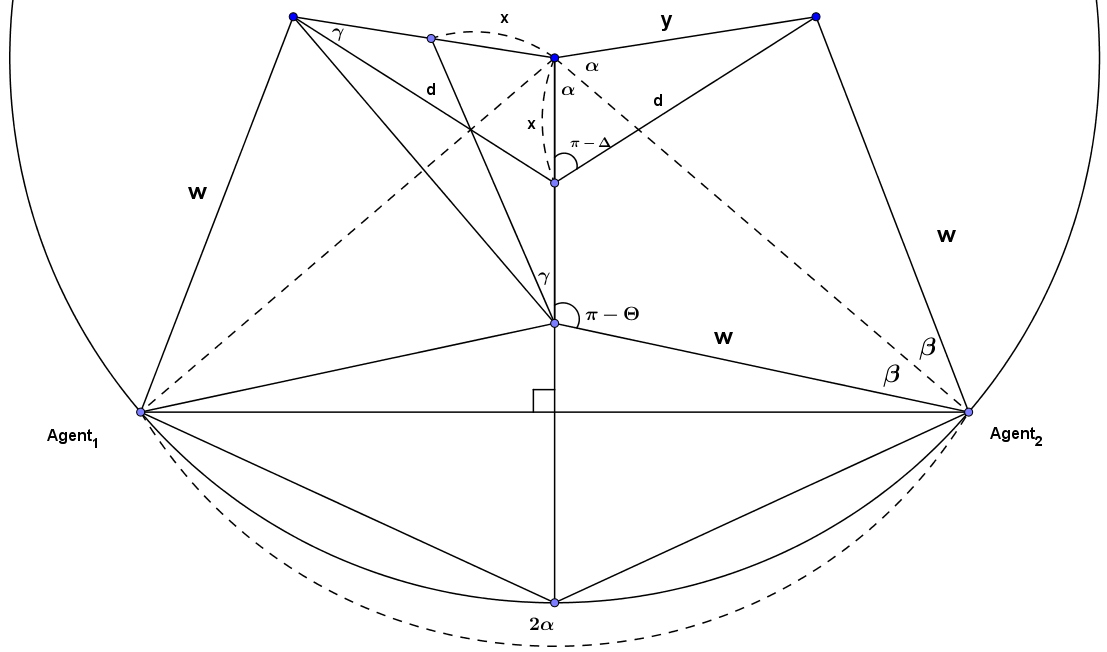}
 \caption{Geometry of Algorithm $k$-RB$_{\beta,\gamma}$, where $\Theta=\alpha+\beta$ and $\Delta=2\alpha+\gamma$. }
    \label{figure 1F2B}
\end{figure}
Notice that after the first darting move, robots are in a disk of radius $y$. 
Similarly, let $d=d(\beta, \gamma)$ be the length of the line segment between the position of a robot after the first darting attempt and the possible meeting point at the bisector of the critical arcs in the second darting move. Let also $x=x(\beta,\gamma)$ denote the distance of the second possible meeting point from the origin. Overall, we have 
\begin{align}
w &= \sinn\alpha\csc(\alpha+\beta)
\label{equa: value of w}
\\
y &= \sinn\beta\csc(\alpha+\beta)
\label{equa: value of y}
\\
x &= y \sinn\gamma\cscc{2\alpha+\gamma}
\label{equa: value of x}
\\
d &= y \sinn{2\alpha}\cscc{2\alpha+\gamma}.
\label{equa: value of d}
\end{align}
Now consider an iteration of the algorithm, where the radius of the disk is 1. The probability of the agents meeting at this iteration is $\tfrac{1}{2},$ and given that they meet the distance travelled is equally likely to be $w$ or $w+d$, giving a contribution to the mean cost equal to $w+\tfrac{1}{2}d$. If robots do not meet at this step, then they are at distance $x$ from the origin. So they either go to the origin, if number of iterations has exceeded $k$, or they repeat. Hence, 
\begin{align}
\ren{k}{\alpha}{\beta}{\gamma} &= \sum_{i=0}^{k-1}\left(\frac{1}{2}\right)^{i+1} \left[x^{i}(w +\tfrac{1}{2}d)+\sum_{j=0}^{i-1}(w+d)x^{j} \right]+\left(\frac{1}{2}\right)^k(x^k+\sum_{j=0}^{k-1}(w+d)x^{j})\nonumber \\
&= \sum_{i=0}^{k-1}\left(\frac{1}{2}\right)^{i+1} \left[x^{i}(w+\tfrac{1}{2}d)+(w+d)\frac{x^{i+1}-x}{x^2-x}\right] \nonumber
+\left(\frac{1}{2}\right)^k \left(x^k+(w+d)\frac{x^k-1}{(x-1)}\right) \nonumber\\
&=\left(\frac{1}{2}\right)^{k+1} \frac{x^k (3 d+4 w+2 x-4)-2^k (3 d+4 w)}{x-2}, \nonumber
\end{align}
Setting $k=1$ and taking the limit $k\rightarrow \infty$ (note that $0<x<1$) gives
$$
\ren{1}{\alpha}{\beta}{\gamma} = w+\tfrac{3}{4}d+\tfrac{1}{2}x,
$$
and
$$
\ren{\infty}{\alpha}{\beta}{\gamma}= \frac{3d+4w}{2(2-x)}
.$$
Then, the statement of the Lemma follows after elementary trigonometric manipulations. 
\ignore{
w = Sin[\[Alpha]]*Csc[\[Alpha] + \[Beta]];
y = Sin[\[Beta]]*Csc[\[Alpha] + \[Beta]];
d = y*Sin[2*\[Alpha]]*Csc[2*\[Alpha] + \[Gamma]];
x = y*Sin[\[Gamma]]*Csc[2*\[Alpha] + \[Gamma]];
FullSimplify[TrigExpand[w + 3/4*d + 1/2*x]
 ]
Simplify[TrigExpand[(3*d + 4*w)/2/(2 - x)]
 ]
}
\end{proof}

\begin{theorem}
\label{thm: opt 1rbbetagamma}
Consider problem \srdr.
If $\rho < \csc \left(\frac{1}{2} \cos ^{-1}\left(\frac{2}{3}\right)\right) \approx 2.44949$, then the optimal $1$-RB$_{\beta,\gamma}$ algorithm is obtained for $\overline{\gamma}=0$, and the algorithm is identical to the optimal 1RB$_\beta$ algorithm (see Theorem~\ref{thm: opt 1rbbeta}). 

If $\rho \geq \csc \left(\frac{1}{2} \cos ^{-1}\left(\frac{2}{3}\right)\right)$, then the optimal $1$-RB$_{\beta,\gamma}$ is obtained for the following parameters
\begin{align*}
\overline{\gamma} &= \cos ^{-1}\left(\frac{2}{3}\right)-2 \sin ^{-1}\left(\frac{1}{\rho }\right) \\
\overline{\beta} &= \cos ^{-1}\left(\frac{3}{4} \cos \left(\cos ^{-1}\left(\frac{2}{3}\right)-2 \sin ^{-1}\left(\frac{1}{\rho }\right)\right)\right)-\sin ^{-1}\left(\frac{1}{\rho }\right)
\end{align*}
\ignore{
cbar[a_] := ArcCos[2/3] - 2 a;
bbar[a_] := ArcCos[3/4*Cos[cbar[a]]] - a;
cbar[ArcSin[1/\[Rho]]]
bbar[ArcSin[1/\[Rho]]]
}
For the optimal parameters, the algorithm has competitive ratio $\coss{\overline{\beta}}$ which equals
$$
\frac{1}{2} \left(-\frac{\sqrt{5}}{\rho ^2}+\sqrt{\rho ^2-1}-\frac{2 \sqrt{\rho ^2-1}}{\rho ^2}+2 \sqrt{1-\frac{\left(\rho  \left(\sqrt{5-\frac{5}{\rho ^2}}+\rho \right)-2\right)^2}{4 \rho ^4}}+\sqrt{5}\right)
$$
\ignore{
test[\[Rho]_] := 
 3/4 Sqrt[1 - 
    1/\[Rho]^2] (2/3 \[Rho] - 4/(3 \[Rho]) + (
     2 Sqrt[5] Sqrt[1 - 1/\[Rho]^2])/3) + Sqrt[
  1 - 9/16 (2/3 - 4/(3 \[Rho]^2) + (2 Sqrt[5] Sqrt[1 - 1/\[Rho]^2])/(
      3 \[Rho]))^2]
      }
and it is $5.32366$-effective. 
\ignore{
NSolve[ test[\[Rho]] == 425/100, \[Rho]]
}
\end{theorem}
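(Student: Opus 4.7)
The plan is a two-stage optimization of the closed form $\ren{1}{\alpha}{\beta}{\gamma}=w+\tfrac34 d+\tfrac12 x$ with $w,y,x,d$ as in (\ref{equa: value of w})--(\ref{equa: value of d}), derived inside the proof of Lemma~\ref{lem: krb performance}. I would work entirely in the \srda\ picture and pull back to \srdr\ via $\alpha=\arcsinn{1/\rho}$ only at the end.

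\emph{Step 1 (optimize in $\gamma$).} Only the factor $\cscc{2\alpha+\gamma}\bigl[\tfrac34\sinn{2\alpha}+\tfrac12\sinn\gamma\bigr]$ depends on $\gamma$ (since $y=\sinn\beta\cscc{\alpha+\beta}$ factors out cleanly), so $\partial/\partial\gamma=0$ is purely a $\gamma$-equation independent of $\beta$. Expanding $\sinn{2\alpha+\gamma}$ and $\coss{2\alpha+\gamma}$ with the sum formulas and invoking $\sin^2+\cos^2=1$, the equation collapses to $\sinn{2\alpha}\bigl[\tfrac12-\tfrac34\coss{2\alpha+\gamma}\bigr]=0$. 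Since $\sinn{2\alpha}>0$, the unique critical point is $\overline{\gamma}=\arccoss{2/3}-2\alpha$, and a first-derivative sign inspection shows the objective is strictly decreasing on $\gamma<\overline\gamma$ and strictly increasing on $\gamma>\overline\gamma$. Hence if $\overline\gamma\geq 0$---equivalently $\alpha\leq\tfrac12\arccoss{2/3}$, i.e.\ $\rho\geq\cscc{\tfrac12\arccoss{2/3}}$---the minimum on the admissible range $\gamma\geq 0$ is attained at $\overline\gamma$; otherwise monotonicity forces it to $\gamma=0$, where Algorithm $1$-RB$_{\beta,\gamma}$ degenerates to Algorithm 1RB$_\beta$ and Theorem~\ref{thm: opt 1rbbeta} applies verbatim.

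\emph{Step 2 (optimize in $\beta$).} In the feasible regime I substitute $\overline\gamma$ back. Rewriting the Step~1 stationarity identity at $\overline\gamma$ and using $\coss{2\alpha+\overline\gamma}=2/3$ gives the clean collapse $\tfrac34 d+\tfrac12 x=\tfrac34 y\coss{\overline\gamma}$, so
\[
\ren{1}{\alpha}{\beta}{\overline\gamma}=\cscc{\alpha+\beta}\bigl[\sinn\alpha+\tfrac34\sinn\beta\coss{\overline\gamma}\bigr].
\]
Differentiating in $\beta$, multiplying by $\sinn{\alpha+\beta}$, and invoking $\coss\beta\sinn{\alpha+\beta}-\sinn\beta\coss{\alpha+\beta}=\sinn\alpha$ reduces the critical equation to $\coss{\alpha+\overline\beta}=\tfrac34\coss{\overline\gamma}$, which is exactly the advertised $\overline\beta$. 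Using $\sinn\alpha=\sinn{\alpha+\overline\beta}\coss{\overline\beta}-\coss{\alpha+\overline\beta}\sinn{\overline\beta}$ together with $\coss{\alpha+\overline\beta}=\tfrac34\coss{\overline\gamma}$ telescopes the bracket to $\sinn{\alpha+\overline\beta}\coss{\overline\beta}$, yielding the strikingly compact identity $\ren{1}{\alpha}{\overline\beta}{\overline\gamma}=\coss{\overline\beta}$.

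\emph{Step 3 (conversion and effectiveness).} Passing to \srdr\ through $\ren{1}{\rho}{\overline\beta}{\overline\gamma}=\coss{\overline\beta}/\sinn\alpha=\rho\coss{\overline\beta}$, expanding $\coss{\overline\gamma}=\coss{\arccoss{2/3}-2\arcsinn{1/\rho}}$ and then $\coss{\overline\beta}=\coss{\arccoss{\tfrac34\coss{\overline\gamma}}-\arcsinn{1/\rho}}$ with the compound-angle formulas, and substituting $\sinn\alpha=1/\rho$ and $\coss\alpha=\sqrt{\rho^2-1}/\rho$, produces the displayed closed form; the final simplification is delegated to \textsc{Mathematica}, following the convention of the paper. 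The $5.32366$-effectiveness is then the unique root in $\rho\geq\cscc{\tfrac12\arccoss{2/3}}$ of the scalar equation obtained by setting this closed form equal to $4.25$; uniqueness follows from monotonicity of the competitive ratio in $\rho$ on that interval (visible in Figure~\ref{fig: comparison}), which can be confirmed by a one-dimensional derivative-sign check. The main obstacle is not conceptual but computational: coaxing the two stationarity conditions into the strikingly clean forms $\coss{2\alpha+\gamma}=2/3$ and $\coss{\alpha+\beta}=\tfrac34\coss\gamma$, and then spotting the $\sinn\alpha$ compound-angle identity that telescopes the objective to the one-line value $\coss{\overline\beta}$. With these identities in hand, all remaining work is standard trigonometric bookkeeping and routine second-order minimality checks.
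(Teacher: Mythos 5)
Your proposal is correct, and the two stationarity identities you arrive at --- $\coss{2\alpha+\gamma}=\tfrac23$ and $\coss{\alpha+\overline\beta}=\tfrac34\coss{\overline\gamma}$ --- together with the collapse $\ren{1}{\alpha}{\overline\beta}{\overline\gamma}=\coss{\overline\beta}$ are exactly equations \eqref{equa:opt gamma}, \eqref{equa:opt beta} and the evaluation used in the paper (there the two conditions are farmed out to Lemmata~\ref{Critical c for k1} and~\ref{Alpern first step proof}). Where you genuinely differ is in how optimality is certified. The paper treats $(\beta,\gamma)$ jointly: it computes the Hessian at the unique critical point, shows its determinant equals $\tfrac34 wd>0$ (Lemma~\ref{proof of optimality first Alpern}), and infers minimality from local convexity plus uniqueness of the critical point, handling the regime $\alpha\geq\tfrac12\arccoss{2/3}$ by a separate monotonicity-in-$\gamma$ remark. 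You instead exploit the observation that the inner minimization over $\gamma$ decouples from $\beta$ (since $y$ factors out of $\tfrac34 d+\tfrac12 x$), so that $\min_{\beta,\gamma}=\min_\beta\min_\gamma$ and everything reduces to two one-dimensional sign analyses; this buys a self-contained \emph{global} (not merely local) optimality argument and absorbs both the boundary case $\overline\gamma<0$ and the feasibility constraint $\gamma\geq 0$ into the same monotonicity statement, at no extra computational cost. The only items you should still make explicit are the feasibility check $0\leq\overline\beta\leq\pi/2-\alpha$ in the regime $\alpha\leq\tfrac12\arccoss{2/3}$ (immediate from $\tfrac34\coss{\overline\gamma}\leq\tfrac34<\coss{\alpha}$ there) and, for the $5.32366$-effectiveness, the actual one-dimensional monotonicity verification of the competitive ratio in $\rho$, which you correctly identify as routine rather than something to be read off Figure~\ref{fig: comparison}.
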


\begin{proof}
Lemma~\ref{lem: krb performance} gives us 
performance $\ren{1}{\alpha}{\beta}{\gamma}$ 
of $1$-RB$_{\beta,\gamma}$
for problem \srda, and the competitive ratio is obtained by scaling by $\sinn{\alpha}$. 
The critical points of $\ren{1}{\alpha}{\beta}{\gamma}$ are $\overline{\beta}, \overline{\gamma}$ satisfying equations
\begin{align}
\overline{\gamma} &= \arccos\tfrac{2}{3} - 2\alpha \label{equa:opt gamma}\\ 
\overline{\beta}&= \arccos(\tfrac{3}{4}\cos(\overline{\gamma}))-\alpha \label{equa:opt beta},
\end{align}
as shown in Lemmata~\ref{Critical c for k1},~\ref{Alpern first step proof}. These equations have unique solutions in $[0,\frac{\pi}{2}-\alpha]$ if and only if $\alpha \leq \frac{1}{2}\arccos\frac{2}{3};$ otherwise, they have no solution.
As $\ren{1}{\alpha}{\beta}{\gamma}$ has at most one critical point and is locally convex at that point (see Lemma~\ref{proof of optimality first Alpern}), these equations minimize $\ren{1}{\alpha}{\beta}{\gamma}$ for  $\alpha < \frac{1}{2}\arccos\frac{2}{3}.$ 
Now we substitute~\eqref{equa:opt gamma} and~\eqref{equa:opt beta} in~\eqref{equa: ren 1 value} to obtain, after straightforward manipulations, that at it's minimum $\ren{1}{\alpha}{\overline{\beta}}{\overline{\gamma}}=\cos\overline{\beta}.$

For $\alpha \geq \frac{1}{2}\arccos\frac{2}{3}$, $\ren{1}{\alpha}{\beta}{\gamma}$ is monotone increasing with respect to $\gamma$, and thus is optimized at $\gamma = 0,$ at which the strategy becomes identical to the one-step algorithm described before. Optimal parameters and run times can be calculated accordingly for \srdr\ using transformation $\alpha=\arcsinn{1/\rho}$ and simplifying trigonometric expressions. 
\end{proof}

We can now compute also the optimal parameters for $\infty$-RB$_{\beta,\gamma}$. Since the competitive ratio becomes a lengthy expression in $\rho$ for \srdr, we choose to only comment on the effectiveness of the resulting algorithm. The competitive ratio will be explicit from our calculations. 

\begin{theorem}
\label{thm: opt parameters for inftyRB}
For all $\rho\geq 1/\sinn{1/2}\approx2.08583$, the optimal $\infty$-RB$_{\beta,\gamma}$ algorithm for \srdr\ uses parameters $\overline{\beta}, \overline{\gamma}$ satisfying equations
\begin{align}
\tfrac34\coss{\overline{\gamma}}&=\coss{\arcsinn{1/\rho}+\overline{\beta}}  \label{equa: cond 1}\\
\tfrac23\coss{\overline{\beta}}&=\coss{2\arcsinn{1/\rho}+\overline{\gamma}}. \label{equa: cond 2}
\end{align}
In particular, we have 
\begin{align}
\overline\beta :=& \arctan 
\left(
\frac{-v + \sqrt{v^2-(\tfrac{9}{4}\cos^2\alpha-1)(\tfrac{5}{4}-v^2)}}{\tfrac{9}{4}\cos^2\alpha-1}
\right) 
\label{equa: opt beta}
\\
\overline\gamma:= 
&\arccoss{\tfrac43\coss{\alpha+\overline\beta}}.
\label{equa: opt gamma}
\end{align}
where $v:= (2\cos\alpha-\cos2\alpha)\csc 2\alpha$ and $\alpha=\arcsinn{1/\rho}$.
The competitive ratio of the algorithm can be computed by substituting $\overline\beta, \overline\gamma$ in~\eqref{equa: ren inf value}. 
Also for these values of $\overline\beta, \overline\gamma$, 
the algorithm is 
$7.13678$-effective. 
\end{theorem}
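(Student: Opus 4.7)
The plan is to follow the same template as the proof of Theorem~\ref{thm: opt 1rbbetagamma}, but applied to the closed-form expression $\ren{\infty}{\alpha}{\beta}{\gamma}$ given by~\eqref{equa: ren inf value}. First I would work entirely with \srda, since the competitive ratio for \srdr\ is then obtained simply by dividing by $\sinn\alpha$ with $\alpha=\arcsinn{1/\rho}$. I would set
$$
\frac{\partial}{\partial\beta}\ren{\infty}{\alpha}{\beta}{\gamma}=0,\qquad \frac{\partial}{\partial\gamma}\ren{\infty}{\alpha}{\beta}{\gamma}=0,
$$
and, after standard trigonometric simplification (the kind that is carried out in the appendix lemmata invoked in Theorem~\ref{thm: opt 1rbbetagamma}), show that these two stationarity conditions reduce precisely to~\eqref{equa: cond 1} and~\eqref{equa: cond 2}. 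To see that this critical point is the minimizer and not merely a saddle, I would verify local convexity by a second-derivative (Hessian) computation at $(\overline\beta,\overline\gamma)$, and rule out boundary optima by checking the sign of the gradient on the edges of the admissible box $[0,\pi/2-\alpha]^2$; these checks are routine given the explicit form of $\ren{\infty}$ and would be stated as separate appendix lemmata, paralleling Lemmata~\ref{Critical c for k1}--\ref{proof of optimality first Alpern}.

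The main non-routine step is to extract the explicit formulas~\eqref{equa: opt beta} and~\eqref{equa: opt gamma} from the implicit system~\eqref{equa: cond 1}--\eqref{equa: cond 2}. I would introduce the abbreviation $c_\beta=\coss{\alpha+\overline\beta}$ and use~\eqref{equa: cond 1} to write $\coss{\overline\gamma}=\tfrac43 c_\beta$. Then expanding $\coss{2\alpha+\overline\gamma}=\coss{2\alpha}\coss{\overline\gamma}-\sinn{2\alpha}\sinn{\overline\gamma}$ and substituting into~\eqref{equa: cond 2} gives a linear expression for $\sinn{\overline\gamma}$ in terms of $c_\beta$ and $\coss{\overline\beta}$. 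Plugging these into $\sin^2\overline\gamma+\cos^2\overline\gamma=1$ and clearing denominators yields, after expansion with $c_\beta=\cos\alpha\cos\overline\beta-\sin\alpha\sin\overline\beta$, a quadratic in $\tan\overline\beta$ whose coefficients depend only on $\alpha$. Solving for the positive root produces exactly the expression inside $\arctan$ in~\eqref{equa: opt beta}, with $v=(2\cos\alpha-\cos2\alpha)\cscc{2\alpha}$ emerging naturally from the linear term; then~\eqref{equa: opt gamma} follows immediately from~\eqref{equa: cond 1}.

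For the domain restriction $\rho\geq 1/\sinn{1/2}$, I would trace the constraint back to the requirement that both $\overline\beta$ and $\overline\gamma$ lie in $[0,\pi/2-\alpha]$. The boundary case $\overline\beta=0$ (or equivalently that the discriminant/argument under $\arccos$ in~\eqref{equa: opt gamma} reaches its extremal value) gives an equation in $\alpha$ whose solution is $\alpha=1/2$, i.e.\ $\rho=1/\sinn{1/2}\approx 2.08583$; above this threshold the interior critical point is admissible, and below it the optimum would shift to the boundary, matching the transition already seen in Theorem~\ref{thm: opt 1rbbetagamma}.

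Finally, for the effectiveness claim, I would substitute $(\overline\beta,\overline\gamma)$ into~\eqref{equa: ren inf value} to obtain $\ren{\infty}{\alpha}{\overline\beta}{\overline\gamma}$ as a function of $\alpha$, divide by $\sinn\alpha$ to get the competitive ratio as a function of $\rho$ via $\alpha=\arcsinn{1/\rho}$, and numerically solve the equation ``competitive ratio $=4.25$'' to obtain $\rho\approx 7.13678$. The expected obstacle is the algebraic elimination in the second paragraph: keeping the trigonometric identities compact enough to recognize the quadratic structure in $\tan\overline\beta$ is delicate, and I would rely on symbolic manipulation (as the authors announce in the introduction) to handle the bookkeeping while checking the final identities by hand.
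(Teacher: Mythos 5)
Your outline matches the paper's proof almost step for step: the paper likewise works in \srda, shows (in an appendix lemma) that the stationarity conditions $\frac{\partial}{\partial\beta}\ren{\infty}{\alpha}{\beta}{\gamma}=\frac{\partial}{\partial\gamma}\ren{\infty}{\alpha}{\beta}{\gamma}=0$ reduce exactly to \eqref{equa: cond 1}--\eqref{equa: cond 2}, then eliminates one variable to get a polynomial equation solvable in closed form (the paper phrases this as a degree-$4$ polynomial in $\coss\beta,\coss\gamma$; your quadratic in $\tan\overline\beta$ is the same computation and is consistent with the shape of \eqref{equa: opt beta}), verifies the second-order condition via the Hessian, and obtains the $7.13678$-effectiveness by numerically solving ``competitive ratio $=4.25$''. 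Your additional check of the gradient on the boundary of $[0,\pi/2-\alpha]^2$ is something the paper does not do explicitly, and would if anything strengthen the argument.

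There is, however, one concrete error in your account of where the hypothesis $\rho\geq 1/\sinn{1/2}$ comes from. You claim it arises from admissibility, i.e.\ that the boundary case $\overline\beta=0$ (or the $\arccos$ argument in \eqref{equa: opt gamma} becoming extremal) occurs at $\alpha=1/2$. That is not what happens: the paper's Lemma~\ref{lem: good bounds opt beta gamma} shows $0\leq\overline\beta,\overline\gamma\leq\pi/2-\alpha$ on the strictly larger range $\alpha\in(0,3/4)$, so admissibility is not the binding constraint at $\alpha=1/2$. The threshold $\alpha\leq 1/2$ instead comes from Lemma~\ref{lem: hessian of inftystep}, where positive definiteness of $\nabla^2\ren{\infty}{\alpha}{\overline\beta}{\overline\gamma}$ is only established for $\alpha<1/2$. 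Relatedly, you describe both of these checks as ``routine''; in the paper they are in fact carried out only numerically (by plotting $\overline\beta,\overline\gamma$ against $\pi/2-\alpha$ and plotting the two Hessian eigenvalues), so if you intend analytic proofs of these two lemmata you would be doing strictly more than the paper does, and you should not expect them to be mere bookkeeping.
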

\begin{proof}
For convenience we adopt the language of \srda. 
The nonlinear system~\eqref{equa: cond 1},~\eqref{equa: cond 2} characterizes the critical points of function $\ren{\infty}{\alpha}{\beta}{\gamma}:\reals^2\mapsto \reals$, i.e. it is obtained by requiring that 
$$\frac{\partial}{\partial \beta}\ren{\infty}{\alpha}{\beta}{\gamma}=\frac{\partial}{\partial \gamma}\ren{\infty}{\alpha}{\beta}{\gamma}=0.$$
 We prove this in Lemma~\ref{lem: conditions beta gamma}. 

Now observe that equations~\eqref{equa: cond 1}, \eqref{equa: cond 2} is just a system of polynomial equations in $\coss\beta, \coss\gamma$. In fact, substituting one for the other results in a degree 4 polynomial equation that can be solved analytically. Only one of the solutions satisfies conditions $0\leq \beta\leq \pi/2-\alpha$, which is the $\overline\beta=\overline\beta(\alpha)$ described in~\eqref{equa: opt beta}.
The value of $\overline\gamma$ is calculated using~\eqref{equa: cond 1} as 
$
\overline\gamma:=\arccoss{\tfrac43\coss{\alpha+\overline\beta}}.
$

For all $\alpha<3/4$, we show in Lemma~\ref{lem: good bounds opt beta gamma} that $0\leq \overline\beta, \overline\gamma \leq \pi/2-\alpha$. 
Finally, in Lemma~\ref{lem: hessian of inftystep} we show that, for all $\alpha<1/2$, the aforementioned values of $\overline\beta, \overline\gamma$ do indeed correspond to a minimizer for $\ren{\infty}{\alpha}{\beta}{\gamma}$ by showing that $\nabla^2\ren{\infty}{\alpha}{\overline\beta}{\overline\gamma}$ is positive definite. 

Overall, we conclude that $\overline\beta,\overline\gamma$ do minimize $\ren{\infty}{\alpha}{\beta}{\gamma}$, in which case the competitive ratio becomes $\ren{\infty}{\alpha}{\overline\beta}{\overline\gamma}/\sinn{\alpha}$. Equating the last expression with 4.25, and solving for $\rho=1/\sinn{\alpha}$ gives numerical value $\rho =7.13678$.
\end{proof}
  \ignore{
 w[\[Alpha]_, \[Beta]_] := Sin[\[Alpha]]*Csc[\[Alpha] + \[Beta]];
y[\[Alpha]_, \[Beta]_] := Sin[\[Beta]]*Csc[\[Alpha] + \[Beta]];
d[\[Alpha]_, \[Beta]_, \[Gamma]_] := 
  y[\[Alpha], \[Beta]]*Sin[2*\[Alpha]]*Csc[2*\[Alpha] + \[Gamma]];
x[\[Alpha]_, \[Beta]_, \[Gamma]_] := 
  y[\[Alpha], \[Beta]]*Sin[\[Gamma]]*Csc[2*\[Alpha] + \[Gamma]]; 
perf[\[Alpha]_, \[Beta]_, \[Gamma]_] := (3*
      d[\[Alpha], \[Beta], \[Gamma]] + 4*w[\[Alpha], \[Beta]])/
   2/(2 - x[\[Alpha], \[Beta], \[Gamma]])
va[a_] := (2*Cos[a] - Cos[2*a])*Csc[2*a]
barb[a_] := 
  ArcTan[ (-va[a] + 
      Sqrt[va[a]^2 - (9/4*Cos[a]^2 - 1)*(5/4 - va[a]^2)])/(9/4*
       Cos[a]^2 - 1)];
barg[a_] := ArcCos[2/3*Cos[barb[a]]] - 2*a;
exprendtime[\[Alpha]_] := 
  perf[\[Alpha], barb[\[Alpha]], barg[\[Alpha]]];
(*compratio=*)
\[Rho]*exprendtime[ArcSin[1/\[Rho]]]
Plot[ \[Rho]*exprendtime[ArcSin[1/\[Rho]]], {\[Rho], 1.5, 20}]
FindRoot[ \[Rho]*exprendtime[ArcSin[1/\[Rho]]] == 425/100, {\[Rho], 
  10}]
}

We conclude this section by providing some asymptotic analysis for the optimal parameters $\overline\beta, \overline\gamma$ of Algorithm $\infty$-RB$_{\beta,\gamma}$ as $\rho\rightarrow\infty$. As expected, both $\overline\beta, \overline\gamma$ tend to $\pi/2$, as well as $\ren{\infty}{\rho}{\overline\beta}{\overline\gamma}$ tends to 5 (the competitive ratio of the \srl\ algorithm we are extending). This is what we make explicit with the next theorem, by also providing the rate of convergence. 

\begin{theorem}
\label{thm: asymptotics rendezvous}
For the optimal parameters $\overline\beta=\overline\beta(\rho), \overline\gamma=\overline\gamma(\rho)$ of Algorithm $\infty$-RB$_{\beta,\gamma}$, we have 
\begin{align}
\lim_{\rho\rightarrow \infty} \frac{\pi/2-\overline\beta}{\arcsinn{1/\rho}}&=5
\label{equa: beta limit behavior}
\\
\lim_{\rho\rightarrow \infty} \frac{\pi/2-\overline\gamma}{\arcsinn{1/\rho}}&=\frac{16}3
\label{equa: gamma limit behavior}
\end{align}
Moreover, 
$$
\lim_{\rho\rightarrow \infty} \rho^2 (5-\ren{\infty}{\rho}{\overline\beta}{\overline\gamma}) = 289/6. 
$$
\end{theorem}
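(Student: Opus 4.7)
The natural setup is to adopt the \srda\ language and let $\alpha=\arcsinn{1/\rho}\to 0^+$ as $\rho\to\infty$. Introduce the substitution $b:=\pi/2-\overline\beta$ and $c:=\pi/2-\overline\gamma$, which turns the relevant cosines into sines: $\coss{\overline\beta}=\sinn{b}$, $\coss{\overline\gamma}=\sinn{c}$, $\coss{\alpha+\overline\beta}=\sinn{b-\alpha}$, and $\coss{2\alpha+\overline\gamma}=\sinn{c-2\alpha}$. Under this substitution the critical-point conditions~\eqref{equa: cond 1}--\eqref{equa: cond 2} become the compact system
\begin{equation}\label{eq:sysbc}
\tfrac34 \sinn{c} \,=\, \sinn{b-\alpha}, \qquad \tfrac23 \sinn{b} \,=\, \sinn{c-2\alpha}.
\end{equation}
By Theorem~\ref{thm: opt parameters for inftyRB} the optimal parameters are uniquely determined for small $\alpha$, so $b(\alpha)$ and $c(\alpha)$ are well-defined smooth branches of~\eqref{eq:sysbc} satisfying $b(0)=c(0)=0$.

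To establish~\eqref{equa: beta limit behavior} and~\eqref{equa: gamma limit behavior}, I write $b=k_1\alpha+o(\alpha)$ and $c=k_2\alpha+o(\alpha)$ and match the leading coefficients in~\eqref{eq:sysbc}: this produces the linear system $\tfrac34 k_2=k_1-1$ and $\tfrac23 k_1=k_2-2$, whose unique solution is $(k_1,k_2)=(5,\,16/3)$. Since $\arcsinn{1/\rho}\sim\alpha$, dividing by $\arcsinn{1/\rho}$ and letting $\rho\to\infty$ recovers exactly the claimed ratios. Matching $\alpha^2$-coefficients in~\eqref{eq:sysbc} further forces the quadratic terms of $b,c$ to vanish, so one in fact gets the stronger expansions $b=5\alpha+O(\alpha^3)$ and $c=\tfrac{16}{3}\alpha+O(\alpha^3)$, which streamline the third part.

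For the third limit the plan is to substitute these expansions into the explicit formula $\ren{\infty}{\alpha}{\beta}{\gamma}=(3d+4w)/(2(2-x))$, with $w,y,d,x$ given by~\eqref{equa: value of w}--\eqref{equa: value of d}, and Taylor-expand each piece to order $\alpha^2$. The key observation is that the $O(\alpha^3)$ remainders in $b,c$ enter each of $\coss{b},\coss{b-\alpha},\coss{c},\coss{c-2\alpha}$ only at order $O(\alpha^4)$, since they get multiplied by sines of $O(\alpha)$-arguments; consequently they have no effect on the $\alpha^2$-coefficient of $\ren{\infty}{\alpha}{\overline\beta}{\overline\gamma}$. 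Using only $b\approx 5\alpha$ and $c\approx 16\alpha/3$ one obtains $w/\sinn{\alpha}=1+8\alpha^2+O(\alpha^4)$, $y=1-\tfrac{9}{2}\alpha^2+O(\alpha^4)$, $d/\sinn{\alpha}=2+\tfrac{10}{9}\alpha^2+O(\alpha^4)$, and $x=1-\tfrac{79}{6}\alpha^2+O(\alpha^4)$.

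Assembling these yields $\ren{\infty}{\alpha}{\overline\beta}{\overline\gamma}/\sinn{\alpha}=(10+\tfrac{106}{3}\alpha^2)/(2+\tfrac{79}{3}\alpha^2)+O(\alpha^4)$; expanding the quotient and using the arithmetic identity $\tfrac{53}{15}-\tfrac{79}{6}=-\tfrac{289}{30}$ gives $\ren{\infty}{\rho}{\overline\beta}{\overline\gamma}=5-\tfrac{289}{6}\alpha^2+O(\alpha^4)$. Combining with $\rho^2\alpha^2=(\alpha/\sinn{\alpha})^2\to 1$ then produces $\rho^2(5-\ren{\infty}{\rho}{\overline\beta}{\overline\gamma})\to 289/6$, as claimed. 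The main obstacle is careful bookkeeping in the Taylor expansions; the fact that the cubic corrections to $b,c$ are invisible at order $\alpha^2$ is what spares us from having to solve~\eqref{eq:sysbc} to third order.
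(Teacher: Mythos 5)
Your proof is correct, and it follows the same broad strategy as the paper (extract the first-order asymptotics of $\overline\beta,\overline\gamma$, then substitute into the rendezvous formula), but the execution is genuinely different and in one respect tighter. For \eqref{equa: beta limit behavior}--\eqref{equa: gamma limit behavior} the paper Taylor-expands the explicit closed form \eqref{equa: opt beta} (``straightforward but tedious calculations''), whereas you work implicitly with the critical-point system \eqref{equa: cond 1}--\eqref{equa: cond 2}; after the substitution $b=\pi/2-\overline\beta$, $c=\pi/2-\overline\gamma$ the matching of coefficients gives $\tfrac34 k_2=k_1-1$, $\tfrac23 k_1=k_2-2$ and hence $(k_1,k_2)=(5,16/3)$ with almost no computation, avoiding the unwieldy $\arctan$ expression entirely. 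More importantly, for the third limit the paper simply replaces $\overline\beta,\overline\gamma$ by $\pi/2-5\alpha$, $\pi/2-\tfrac{16}{3}\alpha$ and evaluates the resulting limit symbolically; strictly speaking, knowing only $\pi/2-\overline\beta=5\alpha+o(\alpha)$ does not by itself license this replacement in a second-order limit, since an $o(\alpha)$ error could a priori contribute at order $\alpha^2$. Your observation that the $\alpha^2$-coefficients of $b,c$ vanish (so $b=5\alpha+O(\alpha^3)$, $c=\tfrac{16}{3}\alpha+O(\alpha^3)$) and that the cubic remainders enter the cosines only at $O(\alpha^4)$ closes exactly this gap, and your hand expansions ($w/\sinn\alpha=1+8\alpha^2$, $d/\sinn\alpha=2+\tfrac{10}{9}\alpha^2$, $x=1-\tfrac{79}{6}\alpha^2$, all modulo $O(\alpha^4)$) check out and reproduce $5-\tfrac{289}{6}\alpha^2$ without symbolic software. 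The only assertion you should back up with a sentence is the existence of smooth branches $b(\alpha),c(\alpha)$ with $b(0)=c(0)=0$: this follows from the implicit function theorem, since at $\alpha=b=c=0$ the Jacobian of the system with respect to $(b,c)$ is $\left(\begin{smallmatrix}-1 & 3/4\\ 2/3 & -1\end{smallmatrix}\right)$ with determinant $1/2\neq 0$ (or, alternatively, from the paper's explicit formulas together with Lemma~\ref{lem: good bounds opt beta gamma}).
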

 
 \begin{proof}
We use the language of \srda, and in particular we consider $\overline\beta=\overline\beta(\alpha), \overline\gamma=\overline\gamma(\alpha)$, and $\alpha\rightarrow 0$. 
 By Theorem~\ref{thm: opt parameters for inftyRB}, and using~\eqref{equa: opt beta}, it is easy to see that 
$
\lim_{\alpha\rightarrow 0} \overline\beta(\alpha)
=
\pi/2
$. Some straightforward but tedious calculations also show that 
$
\lim_{\alpha\rightarrow 0} \frac{\pi/2-\overline\beta(\alpha)}{\alpha}=5$. 
The statement for $\overline\gamma$ follows similarly using again 
Theorem~\ref{thm: opt parameters for inftyRB}
and in particular that $\overline\gamma=\arccoss{\tfrac43\coss{\alpha+\overline\beta}}$. 

Note that the expected rendezvous time $\ren{\infty}{\rho}{\overline\beta}{\overline\gamma}$ in \srdr\ is also the competitive ratio of the problem. In the language of \srda\, the competitive ratio is $\ren{\infty}{\alpha}{\overline\beta}{\overline\gamma}/\sinn{\alpha}$.

By~\eqref{equa: beta limit behavior} we know that as $\alpha$ tends to 0, $\overline\beta$ behaves similar to $\pi/2-5\alpha$, and by~\eqref{equa: gamma limit behavior}
that $\overline\gamma$ behaves similar to $\pi/2-\frac{16}3\alpha$. 
Using now~\eqref{equa: ren inf value} of Lemma~\ref{lem: krb performance}, we have that 
\begin{align*}
\lim_{\alpha\rightarrow 0}
\frac{
\ren{\infty}{\alpha}{\overline\beta}{\overline\gamma}
}
{
\sinn{\alpha}
}
&=
\lim_{\alpha\rightarrow 0}
\frac{
\ren{\infty}{\alpha}{\pi/2-5\alpha}{\pi/2-\tfrac{16}3\alpha}
}
{
\sinn{\alpha}
}
\\
&=
\lim_{\alpha\rightarrow 0}
\frac{-4 \cos \left(\frac{10 \alpha }{3}\right)-3 (\cos (4 \alpha )+\cos (6 \alpha ))}{\cos \left(\frac{\alpha }{3}\right)-2 \left(\cos \left(\frac{2 \alpha }{3}\right)+\cos \left(\frac{22 \alpha }{3}\right)\right)+\cos \left(\frac{31 \alpha }{3}\right)}
=5.
\end{align*}
Similarly, we can can show that 
$
\lim_{\alpha\rightarrow 0} \frac{5-\frac{\ren{\infty}{\alpha}{\overline\beta}{\overline\gamma}}{\sinn{\alpha}}}
{\sin^2\left(\alpha\right)}
 = 289/6. 
$
\end{proof}

\section{Energy-Efficient Rendezvous}
\label{sec: energy}

\subsection{Energy Analysis of our Infinite-Step Rendezvous Algorithm}
\label{sec: energy 3-markovian infty}

A unique feature of the \srd\ problem is that, unlike in \srl, the worst case rendezvous time can be finite.
As before we distinguish whether we calculate the energy of $\infty$-RB$_{\beta,\gamma}$
in \srdr\ or in \srda\ by writing 
$\ene{\infty}{\rho}{\beta}{\gamma}$ and $\ene{\infty}{\alpha}{\beta}{\gamma}$, respectively.

\begin{lemma}
\label{lem: energy bounded}
The energy $\ene{\infty}{\alpha}{\beta}{\gamma}$ of $\infty$-RB$_{\beta,\gamma}$ for \srda\ is finite if and only if 
$\sinn\beta \sinn \gamma<\sinn{\alpha+\beta}\sinn{2\alpha+\gamma}.$
Moreover
\begin{equation}
\label{equa: energy for opt infty step}
\ene{\infty}{\alpha}{\beta}{\gamma}:=
\frac{\sinn\alpha\csc(\alpha+\beta)+
\sinn\beta\csc(\alpha+\beta)
\sinn{2\alpha}\cscc{2\alpha+\gamma}}
{1-
\sinn\beta\csc(\alpha+\beta)
\sinn\gamma\cscc{2\alpha+\gamma}}
.
\end{equation}
\end{lemma}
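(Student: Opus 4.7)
The plan is to reuse the geometric quantities introduced in the proof of Lemma~\ref{lem: krb performance} and track what happens along the (zero-probability but not impossible) ``bad'' branch where no rendezvous occurs. Recall from that proof that in a single random step, performed on a unit disk with agents at arc-distance $\alpha$, each agent walks a length $w$ (random $\beta$-darting), then a length $d$ ($\gamma$-darting in the opposite direction); if no meeting has occurred, the agents end up again at arc-distance $\alpha$ on a (concentric) disk of radius $x$, where
\begin{align*}
w &= \sinn\alpha\csc(\alpha+\beta), \\
d &= \sinn\beta\csc(\alpha+\beta)\,\sinn{2\alpha}\cscc{2\alpha+\gamma}, \\
x &= \sinn\beta\csc(\alpha+\beta)\,\sinn\gamma\cscc{2\alpha+\gamma}.
\end{align*}
All three quantities are strictly positive in the non-degenerate regime.

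Next, I would observe that since $k=\infty$, the algorithm never defaults to the ``go to origin'' line, so the energy is exactly the supremum over the sample space of the total distance travelled; this supremum is attained in the limit along the realization where the agents fail to meet at \emph{every} iteration. By self-similarity, after $i\ge 0$ failed iterations the agents stand on a disk of radius $x^i$ at the same arc-distance $\alpha$, having walked a cumulative distance $(w+d)\sum_{j=0}^{i-1} x^j$, and they are about to walk an additional $x^i(w+d)$ in the next iteration. Hence
\begin{equation*}
\ene{\infty}{\alpha}{\beta}{\gamma} \;=\; \sup_{i\ge 0} \,(w+d)\sum_{j=0}^{i-1} x^j \;=\; (w+d)\sum_{j=0}^{\infty} x^j,
\end{equation*}
which is finite if and only if $x<1$, and equals $(w+d)/(1-x)$ in that case.

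Finally I would translate the two conclusions. For the finiteness condition, $x<1$ reads
\begin{equation*}
\sinn\beta\,\sinn\gamma \;<\; \sinn{\alpha+\beta}\,\sinn{2\alpha+\gamma},
\end{equation*}
after clearing the positive denominators $\sinn{\alpha+\beta}\sinn{2\alpha+\gamma}$. For the closed form, substituting $w$, $d$, $x$ into $(w+d)/(1-x)$ gives directly the expression~\eqref{equa: energy for opt infty step} in the statement. The only non-routine part is to justify that the worst case is indeed realized along the all-failure branch; this is straightforward because within one iteration the distance walked is monotone in time and the two possible successful meeting points are reached strictly before the end of the iteration, so truncating at any meeting point can only decrease the travelled distance relative to failing and continuing. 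No other obstacle arises; the rest is algebra already carried out in Lemma~\ref{lem: krb performance}.
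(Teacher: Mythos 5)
Your proposal is correct and follows essentially the same route as the paper's own proof: both express the energy as $(w+d)\sum_{j\ge 0}x^{j}$ using the quantities $w,d,x$ from Lemma~\ref{lem: krb performance}, note that this geometric series converges iff $x<1$ (which is exactly the stated sine inequality), and then substitute to get $(w+d)/(1-x)$ as in~\eqref{equa: energy for opt infty step}. Your extra remark justifying that the worst case is attained along the all-failure branch is a small additional detail the paper leaves implicit, but it does not change the argument.
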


\begin{proof}[ of Lemma~\ref{lem: energy bounded}]
For convenience, we analyze the performance for \srda. As in the proof of Lemma~\ref{lem: krb performance} (see also Figure~\ref{figure 1F2B}), in every iteration of $\infty$-RB$_{\beta,\gamma}$ agents walk a distance equal to 
$w+d$
and the radius of their disk is shrunk by $x$, so that the energy of $\infty$-RB$_{\beta,\gamma}$ is calculated as 
$$(w+d) \sum_{j=0}^\infty x^{j},$$
where 
$w,x,d$ are as 
in~\eqref{equa: value of w},
\eqref{equa: value of x}
and \eqref{equa: value of d}, respectively. 
Clearly, the sum of the energy converges if and only if $x<1$, or equivalently 
$$
\frac{\sinn\beta \sinn \gamma}{\sinn{\alpha+\beta}\sinn{2\alpha+\gamma}}<1.
$$
When the energy sum converges, it equals 
$$
\frac{w+d}{1-x}.
$$
Now we use~\eqref{equa: value of w}, \eqref{equa: value of x}, \eqref{equa: value of d}, and the claim follows. 
\end{proof}

\begin{lemma}
\label{lem: energy bounded for opt alg}
For any fixed $\rho$, the energy $\ene{\infty}{\rho}{\overline\beta}{\overline\gamma}$ of the optimal $\infty$-RB$_{\overline\beta,\overline\gamma}$ 
is finite. 
\end{lemma}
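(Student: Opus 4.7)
The plan is to reduce the statement via Lemma~\ref{lem: energy bounded}: since the two parameterizations \srdr\ and \srda\ are related by a fixed positive scaling, $\ene{\infty}{\rho}{\overline\beta}{\overline\gamma}$ is finite if and only if $\ene{\infty}{\alpha}{\overline\beta}{\overline\gamma}$ is, which by Lemma~\ref{lem: energy bounded} amounts to verifying that the shrink factor
$$x\;=\;\frac{\sinn{\overline\beta}\sinn{\overline\gamma}}{\sinn{\alpha+\overline\beta}\sinn{2\alpha+\overline\gamma}}$$
satisfies $x<1$, where $\alpha=\arcsinn{1/\rho}>0$ is bounded above by $1/2$ under the hypothesis of Theorem~\ref{thm: opt parameters for inftyRB}.

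The key manoeuvre is to exploit the first-order optimality conditions \eqref{equa: cond 1} and \eqref{equa: cond 2}, namely $\coss{\alpha+\overline\beta}=\tfrac{3}{4}\coss{\overline\gamma}$ and $\coss{2\alpha+\overline\gamma}=\tfrac{2}{3}\coss{\overline\beta}$, to eliminate the explicit $\alpha$-dependence from the denominator of $x$. By the Pythagorean identity, $\sin^2(\alpha+\overline\beta)=1-\tfrac{9}{16}\cos^2\overline\gamma$ and $\sin^2(2\alpha+\overline\gamma)=1-\tfrac{4}{9}\cos^2\overline\beta$, so $x^2<1$ becomes equivalent to
$$(1-\cos^2\overline\beta)(1-\cos^2\overline\gamma)\;<\;\bigl(1-\tfrac{9}{16}\cos^2\overline\gamma\bigr)\bigl(1-\tfrac{4}{9}\cos^2\overline\beta\bigr).$$
Expanding both sides and cancelling, this reduces to the polynomial inequality
$$\tfrac{5}{9}\cos^2\overline\beta\;+\;\tfrac{7}{16}\cos^2\overline\gamma\;>\;\tfrac{3}{4}\cos^2\overline\beta\cos^2\overline\gamma.$$

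To close, I would invoke Lemma~\ref{lem: good bounds opt beta gamma} to guarantee $0\leq\overline\beta,\overline\gamma\leq\pi/2-\alpha<\pi/2$ (the strict upper bound using $\alpha>0$), so that $u:=\cos^2\overline\beta$ and $v:=\cos^2\overline\gamma$ both lie in $(0,1]$. Dividing the target inequality by $uv>0$ reduces it to $\tfrac{5}{9v}+\tfrac{7}{16u}>\tfrac{3}{4}$; since $u,v\leq 1$, the left-hand side is at least $\tfrac{5}{9}+\tfrac{7}{16}=\tfrac{143}{144}$, strictly exceeding $\tfrac{3}{4}=\tfrac{108}{144}$, and the claim follows. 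The main obstacle I anticipate is the algebraic reduction step: recognizing that the two first-order conditions are precisely what is needed to strip the transcendental $\alpha$-dependence from the finiteness criterion and leave a clean polynomial inequality in $\cos^2\overline\beta$ and $\cos^2\overline\gamma$. Once this observation is made, the remaining positivity check is elementary.
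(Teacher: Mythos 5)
Your proposal is correct and follows essentially the same route as the paper: both exploit the stationarity conditions \eqref{equa: cond 1} and \eqref{equa: cond 2} together with the Pythagorean identity to turn the finiteness criterion of Lemma~\ref{lem: energy bounded} into an $\alpha$-free inequality in $\cos^2\overline\beta$ and $\cos^2\overline\gamma$. The only difference is cosmetic---the paper verifies the inequality by the crosswise comparisons $\sinn{\alpha+\overline\beta}>\sinn{\overline\gamma}$ and $\sinn{2\alpha+\overline\gamma}>\sinn{\overline\beta}$ and multiplies, whereas you expand the product and bound the resulting polynomial expression, reaching the same conclusion.
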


\begin{proof}
Translating Theorem~\ref{thm: opt parameters for inftyRB} to the language of \srda\, we know that parameters $\overline\beta, \overline\gamma$ satisfy 
$$
\tfrac34\coss{\overline{\gamma}}
=\coss{\alpha+\overline{\beta}}$$
and 
$$
\tfrac23\coss{\overline{\beta}}
=\coss{2\alpha+\overline{\gamma}} 
$$
or equivalently that 
$$
\sinn{\alpha+\overline{\beta}} 
=
\sqrt{
1-\tfrac9{16}\cos^2\left(\overline{\gamma}\right)
},$$
and
$$
\sinn{2\alpha+\overline{\gamma}} 
=
\sqrt{
1-\tfrac49\cos^2\left(\overline{\beta}\right)
} .
$$
But then, it is immediate that 
$\sinn{\alpha+\overline\beta} > \sinn{\overline\gamma}$ and that 
$\sinn{2\alpha+\overline\gamma} > \sinn{\overline\beta}$. Multiplying side-wise the latter two inequalities shows that the condition of Lemma~\ref{lem: energy bounded} is satisfied. Hence, the energy of $\infty$-RB$_{\overline\beta,\overline\gamma}$ is finite for every $\rho$. 
\end{proof}

Using values $\overline\beta, \overline\gamma$ (see~\eqref{equa: opt beta} and~\eqref{equa: opt gamma} of Theorem~\ref{thm: opt parameters for inftyRB}), and substituting in~\eqref{equa: energy for opt infty step} of Lemma~\ref{lem: energy bounded} we obtain an explicit, yet complicated, function of $\alpha$ (or equivalently of $\rho=1/\sinn\alpha$) for $\ene{\infty}{\alpha}{\overline\beta}{\overline\gamma}$. Using \textsc{Mathematica} we can observe graphically that $\ene{\infty}{\rho}{\overline\beta}{\overline\gamma}$ is strictly increasing (which is also expected), and that $\ene{\infty}{\rho}{\overline\beta}{\overline\gamma}/\rho^2$ is strictly decreasing in $\rho>2$. 
However a formal proof is eluding us due to the complication of the formulas. Nevertheless, we can find the asymptotic behaviour of the energy as $\rho$ tends to infinity.

\begin{theorem}
\label{thm: asymptotics energy}
For the optimal parameters $\overline\beta=\overline\beta(\rho), \overline\gamma=\overline\gamma(\rho)$ of Algorithm $\infty$-RB$_{\beta,\gamma}$, we have 
$$
\lim_{\rho\rightarrow \infty} \frac{
 \ene{\infty}{\rho}{\overline\beta}{\overline\gamma}}
 {\rho^2} = \frac{18}{79}. 
$$
\end{theorem}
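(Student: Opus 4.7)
The plan is to reduce the claim to a single asymptotic expansion in \srda\ and then read off the constant $18/79$ from the leading-order cancellation in the denominator of~\eqref{equa: energy for opt infty step}. Since scaling the unit-disk instance by $1/\sinn{\alpha}=\rho$ converts \srda\ into \srdr, we have $\ene{\infty}{\rho}{\overline\beta}{\overline\gamma}=\rho\cdot\ene{\infty}{\alpha}{\overline\beta}{\overline\gamma}$, so it suffices to prove
\[
\lim_{\alpha\to 0}\ \sinn{\alpha}\cdot\ene{\infty}{\alpha}{\overline\beta}{\overline\gamma}\ =\ \frac{18}{79}.
\]

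By Theorem~\ref{thm: asymptotics rendezvous}, $\overline\beta(\alpha)=\pi/2-5\alpha+o(\alpha)$ and $\overline\gamma(\alpha)=\pi/2-\tfrac{16}{3}\alpha+o(\alpha)$. Using $\sinn{\pi/2-x}=1-x^2/2+O(x^4)$ (and noting that any $O(\alpha^2)$ correction to $\overline\beta,\overline\gamma$ contributes only $O(\alpha^3)$ here), I would expand each factor appearing in~\eqref{equa: energy for opt infty step} to order $\alpha^2$: $\sinn{\overline\beta}=1-\tfrac{25}{2}\alpha^2+o(\alpha^2)$, $\sinn{\alpha+\overline\beta}=1-8\alpha^2+o(\alpha^2)$, $\sinn{\overline\gamma}=1-\tfrac{128}{9}\alpha^2+o(\alpha^2)$, and $\sinn{2\alpha+\overline\gamma}=1-\tfrac{50}{9}\alpha^2+o(\alpha^2)$. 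Inverting the last two and multiplying gives
\[
\sinn{\overline\beta}\cscc{\alpha+\overline\beta}\cdot\sinn{\overline\gamma}\cscc{2\alpha+\overline\gamma}\ =\ \Bigl(1-\tfrac{9}{2}\alpha^2\Bigr)\Bigl(1-\tfrac{26}{3}\alpha^2\Bigr)+o(\alpha^2)\ =\ 1-\tfrac{79}{6}\alpha^2+o(\alpha^2),
\]
so the denominator of~\eqref{equa: energy for opt infty step} equals $\tfrac{79}{6}\alpha^2+o(\alpha^2)$.

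For the numerator, each cosecant tends to $1$ and $\sinn{\overline\beta}\to 1$, so
\[
\sinn{\alpha}\cscc{\alpha+\overline\beta}+\sinn{\overline\beta}\cscc{\alpha+\overline\beta}\sinn{2\alpha}\cscc{2\alpha+\overline\gamma}\ =\ \alpha+2\alpha+o(\alpha)\ =\ 3\alpha+o(\alpha).
\]
Dividing and multiplying by $\sinn{\alpha}=\alpha+o(\alpha)$ yields
\[
\sinn{\alpha}\cdot\ene{\infty}{\alpha}{\overline\beta}{\overline\gamma}\ =\ \frac{\alpha\cdot(3\alpha+o(\alpha))}{\tfrac{79}{6}\alpha^2+o(\alpha^2)}\ \longrightarrow\ \frac{18}{79},
\]
which completes the argument. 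The only delicate point, and the one I would be most careful about, is the second-order balance in the denominator: the linear-order terms in $\alpha$ cancel exactly because every $\sinn{\cdot}$-factor evaluated at an argument near $\pi/2$ has vanishing first derivative, so the whole computation hinges on the numerics $\tfrac{25}{2}-8=\tfrac{9}{2}$ and $\tfrac{128}{9}-\tfrac{50}{9}=\tfrac{26}{3}$ giving $\tfrac{9}{2}+\tfrac{26}{3}=\tfrac{79}{6}$, which is where the constant $18/79=3/(79/6)$ comes from.
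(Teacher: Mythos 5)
Your proposal is correct and follows essentially the same route as the paper: reduce to the unit-disk formulation via $\ene{\infty}{\rho}{\overline\beta}{\overline\gamma}=\rho\cdot\ene{\infty}{\alpha}{\overline\beta}{\overline\gamma}$, substitute the asymptotic parameter values $\overline\beta\sim\pi/2-5\alpha$ and $\overline\gamma\sim\pi/2-\tfrac{16}{3}\alpha$ from Theorem~\ref{thm: asymptotics rendezvous} into the energy formula of Lemma~\ref{lem: energy bounded}, and evaluate the resulting limit. The only difference is that the paper computes this limit symbolically in \textsc{Mathematica}, whereas you carry out the second-order Taylor expansion by hand and correctly identify the cancellation yielding $\tfrac{79}{6}\alpha^2$ in the denominator, which makes the computation self-contained; your expansions and the resulting constant $18/79$ all check out.
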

 
\begin{proof}
We adopt the language of \srda, and we invoke Theorem~\ref{thm: asymptotics rendezvous}. 
By~\eqref{equa: beta limit behavior} we know that as $\alpha$ tends to 0, $\overline\beta$ behaves similar to $\pi/2-5\alpha$, and by~\eqref{equa: gamma limit behavior}
that $\overline\gamma$ behaves similar to $\pi/2-\frac{16}3\alpha$. 

Using now~\eqref{equa: ren inf value} of Lemma~\ref{lem: krb performance}, we have that 
\begin{align*}
\lim_{\alpha\rightarrow 0}
\ene{\infty}{\alpha}{\overline\beta}{\overline\gamma}
\sinn{\alpha}
&=
\lim_{\alpha\rightarrow 0}
\ene{\infty}{\alpha}{\pi/2-5\alpha}{\pi/2-\tfrac{16}3\alpha}
\sinn{\alpha}
\\
&
=
\lim_{\alpha\rightarrow 0}
\frac{\sin (\alpha ) \left(\sin (\alpha )+\sin (2 \alpha ) \cos (5 \alpha ) \sec \left(\frac{10 \alpha }{3}\right)\right)}{\cos (4 \alpha )-\cos (5 \alpha ) \cos \left(\frac{16 \alpha }{3}\right) \sec \left(\frac{10 \alpha }{3}\right)}.
\end{align*}
The latter limit can be computed in \textsc{Mathematica} and it equals 18/79. 
\end{proof}

An immediate corollary of Theorem~\ref{thm: asymptotics energy} is that $\ene{\infty}{\rho}{\overline\beta}{\overline\gamma}=\Theta(\rho^2)$. As long as the rendezvous between the two agents is not realized, both follow random-walk-like trajectories (see Figure~\ref{fig: spiral}).
\begin{figure}[h!]
\centering
\includegraphics[width=.8\linewidth]{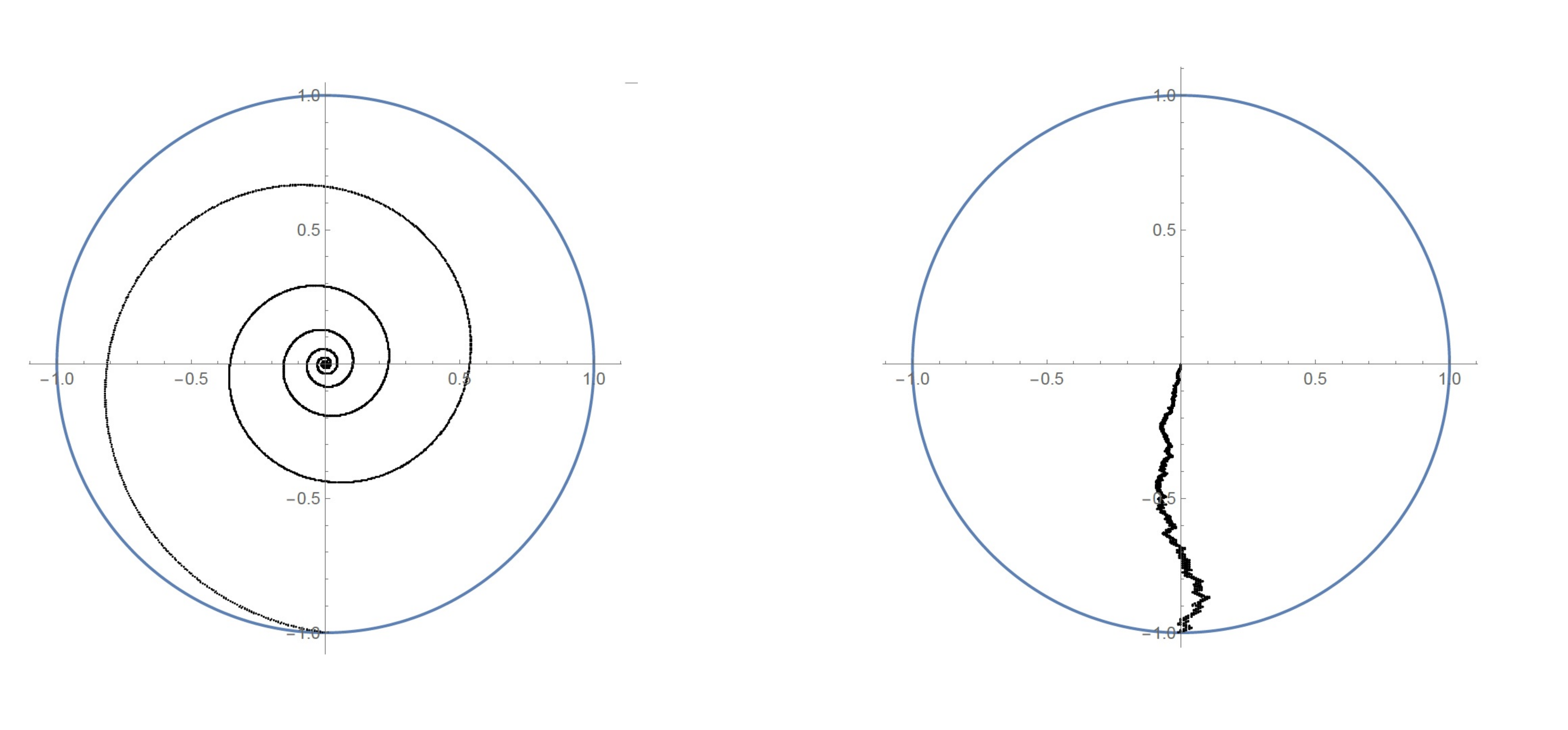}
\caption{Possible trajectories of one agent in Algorithm $\infty$-RB$_{\beta,\gamma}$ in \srda\ when $\alpha=0.01$. The figure on the left depicts the trajectory in which the agent always attempts to meet her peer first by moving ccw and then cw, resulting in a spiral. The figure on the right depicts a random trajectory. Both trajectories have the same length which is approximately $22.7911$.}
\label{fig: spiral}
\end{figure}

\subsection{Expected Rendezvous Time - Energy Tradeoffs}
\label{sec: constrained}

In this section we attempt to understand how energy constraints can impact the performance of $\infty$-RB$_{\beta,\gamma}$.
By Theorem~\ref{thm: asymptotics rendezvous} we know that the 
optimal $\infty$-RB$_{\overline\beta,\overline\gamma}$ Algorithm induces competitive ratio 5, asymptotically in $\rho\rightarrow \infty$. By Theorem~\ref{thm: asymptotics energy} we know that the same algorithm (with the same parameters) requires $\Theta\left(\rho^2 \right)$ energy. 
In the other extreme, if the energy is less that $\rho$, then the problem admits no solution (and if the energy equals $\rho$, then the best rendezvous is attained when robots go directly to the reference point). Hence, we are motivated to study the problem of minimizing the expected rendezvous time in \srdr\ given that agents' energy is between $\rho$ and $\frac{18}{79}\rho^2$. 
Somehow surprisingly, we show below that for every $\epsilon>0$ we can preserve a competitive ratio of 5 and energy no more than $\epsilon \rho^2+o(\rho^2)$
or competitive ratio $5+\epsilon$ and energy no more than $\tfrac2{\sqrt{\epsilon}} \rho+o(\rho)$, both asymptotically in $\rho$. 

\begin{theorem}
\label{thm: main energy}
The following claims are true asymptotically for \srdr\ as $\rho \rightarrow \infty$. 
For every $\epsilon>0$, there exist $\beta_1, \gamma_1$ so that the competitive ratio of $\infty$-RB$_{\beta_1,\gamma_1}$ is 5, as well as $\ene{\infty}{\rho}{\beta_1}{\gamma_1}/\rho^2 \leq \epsilon$. 
Moreover, 
for every $\delta>0$, there exist $\beta_2, \gamma_2$ so that the competitive ratio of $\infty$-RB$_{\beta_2,\gamma_2}$ is $5+\delta$, as well as $\ene{\infty}{\rho}{\beta_1}{\gamma_1}/\rho \leq 2/\sqrt{\delta}$. 
\end{theorem}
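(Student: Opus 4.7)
My plan is to work in the language of \srda, where $\alpha = \arcsinn{1/\rho}$, so that $\rho \rightarrow \infty$ corresponds to $\alpha \rightarrow 0$. I will directly Taylor-expand the formulas $\ren{\infty}{\alpha}{\beta}{\gamma} = (3d + 4w)/(2(2-x))$ and $\ene{\infty}{\alpha}{\beta}{\gamma} = (w+d)/(1-x)$ (from the proofs of Lemmas~\ref{lem: krb performance} and~\ref{lem: energy bounded}), with $w, y, d, x$ as in~\eqref{equa: value of w}--\eqref{equa: value of d}. Dividing by $\sinn{\alpha}$ converts these to the competitive ratio and energy in \srdr, using $\rho\sinn{\alpha}=1$. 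The two claims will be handled by two different asymptotic regimes for $(\beta,\gamma)$.

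For the first claim, I will mimic the scaling of the optimal $\overline\beta, \overline\gamma$ from Theorem~\ref{thm: asymptotics rendezvous} and set $\beta_1 := \pi/2 - C\alpha$ and $\gamma_1 := \pi/2 - H\alpha$ for constants $C, H > 1$ depending only on $\epsilon$. A second-order Taylor expansion then yields $w = \alpha + O(\alpha^3)$, $d = 2\alpha + O(\alpha^3)$, and $x = 1 - \tfrac12(2C + 4H - 5)\alpha^2 + O(\alpha^4)$. Plugging in, the competitive ratio $\ren{\infty}{\alpha}{\beta_1}{\gamma_1}/\sinn{\alpha}$ converges to $5$, while $\ene{\infty}{\rho}{\beta_1}{\gamma_1}/\rho^2$ converges to $6/(2C + 4H - 5)$. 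Choosing $C, H$ with $2C + 4H - 5 \geq 6/\epsilon$ finishes this part.

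For the second claim, I will instead take $\beta_2, \gamma_2$ to be constants independent of $\alpha$, specifically $\beta_2 = \gamma_2 := \pi/2 - t$ for a small $t > 0$ depending only on $\delta$. Now $\sinn{\alpha + \beta_2} = \coss{t-\alpha} \rightarrow \coss{t}$ (and analogously $\sinn{2\alpha + \gamma_2} \rightarrow \coss{t}$), so $w \sim \alpha/\coss{t}$, $d \sim 2\alpha/\coss{t}$, and $1 - x \sim 3\alpha\tan t$. Hence the competitive ratio tends to $5/\coss{t} = 5 + \tfrac52 t^2 + O(t^4)$, while $\ene{\infty}{\rho}{\beta_2}{\gamma_2}/\rho \rightarrow \cscc{t} = 1/t + O(t)$. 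Setting $t = \sqrt{2\delta/5}$ matches both stated bounds to leading order in $\delta$: the ratio becomes $5+\delta+O(\delta^2)$, and $1/t = \sqrt{5/(2\delta)} < 2/\sqrt\delta$.

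The hard part will be the bookkeeping of error terms in the Taylor expansions. In both regimes $1 - x \to 0$ as $\alpha \to 0$, but in Part 1 the rate is $\Theta(\alpha^2)$ (yielding $\Theta(\rho^2)$ energy with a freely tunable constant), whereas in Part 2 the rate is $\Theta(\alpha)$ (yielding only $\Theta(\rho)$ energy). This rate distinction is the essence of the time--energy tradeoff and arises precisely from whether $\beta, \gamma$ scale toward $\pi/2$ together with $\alpha$ (as in the truly optimal algorithm) or are held fixed at a bounded distance from $\pi/2$. Once the leading-order asymptotics of $w, d, x$ are verified, the choice of $C, H$ and $t$ to satisfy the stated bounds is a short calculus verification.
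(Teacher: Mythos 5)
Your proposal is correct and follows essentially the same route as the paper: the paper likewise splits the theorem into two regimes, taking $\beta=\pi/2-k\alpha$, $\gamma=\pi/2-m\alpha$ for the first claim (obtaining limit $5$ for the ratio and $6/(2k+4m-5)$ for $\ene{\infty}{\rho}{\beta}{\gamma}/\rho^2$, matching your $6/(2C+4H-5)$), and constant $\beta,\gamma$ with $\sinn{\beta}=\sinn{\gamma}=5/(5+\delta)$ for the second. The only blemish is that your choice $t=\sqrt{2\delta/5}$ gives competitive ratio $5/\coss{t}=5+\delta+O(\delta^2)$ rather than exactly $5+\delta$; taking $\coss{t}=5/(5+\delta)$ instead reproduces the paper's exact constants and still satisfies the $2/\sqrt{\delta}$ energy bound.
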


The two claims of Theorem~\ref{thm: main energy} follow directly from the two lemmata below. 
In particular, Lemma~\ref{lem: compratio 5, energy r^2} shows that the competitive ratio of $\infty$-RB$_{\beta,\gamma}$ can stay 5, even if the energy needed to solve \srdr\ is $\epsilon\rho^2$, for each $\epsilon>0$. 
Lemma~\ref{lem: compratio 5+e, energy r*1/sqrt[e]} shows that if one is willing to have competitive ratio $5+\epsilon$, then that would be possible with linear energy in $\rho$, and in particular no more than $2\rho/\sqrt{\epsilon}$, again for every $\epsilon>0$.

\begin{lemma}
\label{lem: compratio 5, energy r^2}
For every positive $\epsilon>0$, there exist $\beta, \gamma$, such that for the performance of $\infty$-RB$_{\beta,\gamma}$ for \srdr, we have that 
$$
\lim_{\rho\rightarrow \infty}
\rho \left(
\ren{\infty}{\rho}{\beta}{\gamma} 
-5\right)
= \frac{27}{11\epsilon^2}-\Theta(1/\epsilon) 
,$$ and
$$
\lim_{\rho\rightarrow \infty}
\frac{
\ene{\infty}{\rho}{\beta}{\gamma}
}
{\rho^2} 
= \epsilon.
$$
\end{lemma}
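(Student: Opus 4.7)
My plan is to choose $\beta,\gamma$ in the two-parameter family $\beta = \pi/2 - c_1\alpha$, $\gamma = \pi/2 - c_2\alpha$, where $\alpha := \arcsinn{1/\rho}$ and $c_1,c_2$ are constants depending on $\epsilon$. This is the family suggested by Theorem~\ref{thm: asymptotics rendezvous}, where the optimal parameters correspond to $c_1=5$, $c_2=16/3$. I will compute the asymptotic energy and competitive ratio as functions of $(c_1,c_2)$, impose the prescribed energy limit to obtain a linear constraint, and then minimise the competitive ratio along that constraint.

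For the energy, I apply \eqref{equa: energy for opt infty step} of Lemma~\ref{lem: energy bounded}. The substitution yields $\sinn{\alpha+\beta}=\coss{(c_1-1)\alpha}$, $\sinn{2\alpha+\gamma}=\coss{(c_2-2)\alpha}$, $\sinn{\beta}=\coss{c_1\alpha}$, $\sinn{\gamma}=\coss{c_2\alpha}$. A direct expansion gives $w+d = 3\alpha + O(\alpha^3)$, while the product-to-sum identity applied to $\coss{(c_1-1)\alpha}\coss{(c_2-2)\alpha}-\coss{c_1\alpha}\coss{c_2\alpha}$ followed by Taylor expansion yields, after cancellation of the $\alpha^0$ and $\alpha^1$ terms, $1-x=(c_1+2c_2-\tfrac{5}{2})\alpha^2+O(\alpha^4)$. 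Hence $\ene{\infty}{\alpha}{\beta}{\gamma}\sim 3/[(c_1+2c_2-\tfrac{5}{2})\alpha]$, and using $\ene{\infty}{\rho}{\cdot}=\ene{\infty}{\alpha}{\cdot}/\sinn\alpha$ with $\rho = 1/\sinn\alpha$, $\ene{\infty}{\rho}{\beta}{\gamma}/\rho^2 \to 3/(c_1+2c_2-\tfrac{5}{2})$. Setting this to $\epsilon$ forces
\[c_1+2c_2 = K := 3/\epsilon + 5/2.\]

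For the competitive ratio, I apply~\eqref{equa: ren inf value}. Its numerator and denominator are sums of cosines of linear combinations of $\alpha$, hence even power series in $\alpha$. Taylor expansion to order $\alpha^2$ gives $\ren{\infty}{\rho}{\beta}{\gamma}-5 = \tfrac{5B-A}{2}\alpha^2 + O(\alpha^4)$, where $A = 3c_1^2 + 2(c_2-2)^2+3$ and $B=c_1^2+c_2^2-4c_1-8c_2+10$, so that $5B-A = 2c_1^2+3c_2^2-20c_1-32c_2+39$. Since $\alpha\sim 1/\rho$, the relevant quantity is $\rho^2(\ren{\infty}{\rho}{\beta}{\gamma}-5)\to (5B-A)/2$. (I read the $\rho$ prefactor in the lemma statement as $\rho^2$, which is the only scaling compatible with an $O(1/\epsilon^2)$ right-hand side coming from an even power-series expansion.)

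Finally, substituting $c_1 = K-2c_2$ into $5B-A$ yields the quadratic $11c_2^2+(8-8K)c_2+(2K^2-20K+39)$, minimised at $c_2 = (4K-4)/11$, $c_1 = (3K+8)/11$, with minimum value $(6K^2-188K+413)/11$. As a sanity check, at $\epsilon = 18/79$ one has $K=47/3$, giving $(c_1,c_2)=(5,16/3)$ and minimum value $-289/3$, in agreement with Theorem~\ref{thm: asymptotics rendezvous}. For the general $K = 3/\epsilon + 5/2$, the minimum equals $54/(11\epsilon^2) - 474/(11\epsilon) - 39/22$, and halving gives $27/(11\epsilon^2)-\Theta(1/\epsilon)$, the desired limit. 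Routine feasibility checks ($0\le\beta,\gamma\le\pi/2-\alpha$ eventually, and the hypothesis of Lemma~\ref{lem: energy bounded}) follow easily for each fixed $\epsilon$ as $\rho\to\infty$. The main obstacle is the careful bookkeeping for $1-x$, where the $\alpha^0$ and $\alpha^1$ contributions must cancel exactly before the $(c_1+2c_2-5/2)\alpha^2$ coefficient emerges; everything downstream is an elementary quadratic optimisation.
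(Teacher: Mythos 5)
Your proposal is correct and follows essentially the same route as the paper's own proof: the same one-parameter-family ansatz $\beta=\pi/2-c_1\alpha$, $\gamma=\pi/2-c_2\alpha$, the same energy limit $3/(c_1+2c_2-\tfrac52)$, the same second-order expansion of the competitive ratio (your $(5B-A)/2$ equals the paper's $k^2-10k+\tfrac{3m^2}{2}-16m+\tfrac{39}{2}$), and the same constrained quadratic minimization yielding $c_1=\frac{31\epsilon+18}{22\epsilon}$, $c_2=\frac{6\epsilon+12}{11\epsilon}$ and the limit $\frac{27}{11\epsilon^2}-\frac{237}{11\epsilon}-\frac{39}{44}$. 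Your reading of the $\rho$ prefactor as $\rho^2$ is also consistent with the paper, whose proof normalizes by $\sin^2\alpha$.
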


\begin{proof}
We use the language of \srda. 
For some positive constants $k,m$, we use $\beta=\pi/2-k\alpha$ and $\gamma=\pi/2-m\alpha$. 
First, using Lemma~\ref{lem: krb performance}, it is easy to see that 
\begin{align*}
&\lim_{\alpha\rightarrow 0} 
\frac{\ren{\infty}{\alpha}{\beta}{\gamma}}{\sinn{\alpha}}
 \\
 &=
\lim_{\alpha\rightarrow 0}
\frac{-6 \cos (\alpha ) \cos (\alpha  k)-4 \cos (\alpha  (m-2))}{\cos (\alpha  (k-m))-2 (\cos (\alpha  (k-m+1))+\cos (\alpha  (k+m-3)))+\cos (\alpha  (k+m))} \\
&=5. 
\end{align*}
Some more elaborate calculations can show in fact that 
\begin{equation}
\label{equa: conv compratio limit}
\lim_{\alpha\rightarrow 0}
\frac{
\frac{\ren{\infty}{\alpha}{\beta}{\gamma}}{\sinn{\alpha}}-5
}
{\sin^2\left(\alpha\right)}
=
k^2-10 k+\frac{3 m^2}{2}-16 m+\frac{39}{2}
\end{equation}

By the proof of Lemma~\ref{lem: energy bounded}
\begin{align}
&\lim_{\alpha\rightarrow 0}
\sinn\alpha
\ene{\infty}{\alpha}{\beta}{\gamma}  \notag  \\
&=
\lim_{\alpha\rightarrow 0}
\frac{\sin ^2(\alpha ) \csc \left(\frac{\alpha }{2}\right) (\cos (\alpha  (k+1))+\cos (\alpha -\alpha  k)+\cos (\alpha  (m-2)))}
{
2 \cos \left(\frac{\alpha }{2}\right) \sin (\alpha  (k+m-1))
-
2 \cos (\alpha -\alpha  k) \sin \left(\alpha  \left(\frac{3}{2}-m\right)\right)
}
\notag \\
&=
\frac{6}{2 k+4 m-5}.
\label{equa: conv energy limit} 
\end{align}

The claim follows by choosing $k=\frac{31 \epsilon +18}{22 \epsilon }$ and $m=\frac{6 \epsilon +12}{11 \epsilon }$. These values are obtained by requiring that 
\eqref{equa: conv energy limit} equals $\epsilon$ and minimizing~\eqref{equa: conv compratio limit}.
In particular, substituting $m,k$ in~\eqref{equa: conv compratio limit} we obtain 
$$
\lim_{\alpha\rightarrow 0}
\frac{
\frac{\ren{\infty}{\alpha}{\beta}{\gamma}}{\sinn{\alpha}}-5
}
{\sin^2\left(\alpha\right)}
=\frac{27}{11 \epsilon ^2}-\frac{237}{11 \epsilon }-\frac{39}{44}.
$$
Finally, substituting $m,k$ in~\eqref{equa: conv energy limit} we obtain 
$$
\lim_{\alpha\rightarrow 0}
\sinn\alpha
\ene{\infty}{\alpha}{\beta}{\gamma}
=\epsilon.
$$
\end{proof}

\begin{lemma}
\label{lem: compratio 5+e, energy r*1/sqrt[e]}
For every positive $\epsilon>0$, there exist $\beta, \gamma$, such that for the performance of $\infty$-RB$_{\beta,\gamma}$ for \srdr, we have that 
$$
\lim_{\rho\rightarrow \infty}
\ren{\infty}{\rho}{\beta}{\gamma} 
=5+\epsilon 
,$$ 
and
$$
\lim_{\rho\rightarrow \infty}
\frac{
\ene{\infty}{\rho}{\beta}{\gamma}
}
{\rho} 
\leq 
\frac2{\sqrt{\epsilon}}
.$$
\end{lemma}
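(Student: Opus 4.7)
My approach parallels the proof of Lemma~\ref{lem: compratio 5, energy r^2}, with the crucial difference that $\beta$ and $\gamma$ are chosen as \emph{constants} independent of $\alpha$, rather than of the form $\pi/2 - k\alpha$. Under this scaling the shrinking factor $x$ in Lemma~\ref{lem: energy bounded} stays uniformly bounded away from $1$ as $\alpha \to 0$, which is precisely what drops the energy from quadratic to linear in $\rho$, at the price of a limiting competitive ratio that strictly exceeds $5$.

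Working in the \srda\ parametrization with $\alpha = \arcsinn{1/\rho}$, I would first apply Lemma~\ref{lem: krb performance} and Taylor-expand~\eqref{equa: ren inf value} in $\alpha$ while treating $\beta,\gamma$ as constants. The numerator equals $\sin(\alpha)(-6\sin\beta-4\sin\gamma)+O(\alpha^2)$ and the denominator tends to $-2\sin\beta\sin\gamma$, yielding
\[
\lim_{\rho\to\infty}\ren{\infty}{\rho}{\beta}{\gamma}\;=\;\lim_{\alpha\to 0}\frac{\ren{\infty}{\alpha}{\beta}{\gamma}}{\sin\alpha}\;=\;\frac{3}{\sin\gamma}+\frac{2}{\sin\beta}.
\]

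Next I would apply Lemma~\ref{lem: energy bounded}. Using~\eqref{equa: value of w}--\eqref{equa: value of d} with the expansions $\sin(\alpha+\beta)\approx\sin\beta+\alpha\cos\beta$ and $\sin(2\alpha+\gamma)\approx\sin\gamma+2\alpha\cos\gamma$, a direct computation shows $w+d=\alpha\bigl(1/\sin\beta+2/\sin\gamma\bigr)+O(\alpha^2)$ and $1-x=\alpha(\cot\beta+2\cot\gamma)+O(\alpha^2)$. Taking the ratio and invoking the scaling $\ene{\infty}{\rho}{\beta}{\gamma}=\rho\cdot\ene{\infty}{\alpha}{\beta}{\gamma}$ gives
\[
\lim_{\rho\to\infty}\frac{\ene{\infty}{\rho}{\beta}{\gamma}}{\rho}\;=\;\frac{\sin\gamma+2\sin\beta}{2\sin\beta\cos\gamma+\sin\gamma\cos\beta}.
\]

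Finally, I would make the symmetric choice $\beta=\gamma$ with $\sin\beta=5/(5+\epsilon)$, so $\cos\beta=\sqrt{\epsilon(\epsilon+10)}/(5+\epsilon)$. This yields limiting competitive ratio $5/\sin\beta=5+\epsilon$ exactly and limiting energy-per-$\rho$ equal to $1/\cos\beta=(5+\epsilon)/\sqrt{\epsilon(\epsilon+10)}$. The desired bound $(5+\epsilon)/\sqrt{\epsilon(\epsilon+10)}\leq 2/\sqrt{\epsilon}$ squares to $\epsilon^2+6\epsilon-15\leq 0$, which I would verify directly. The hard part is extending this to the full range of $\epsilon>0$: the symmetric choice $\beta=\gamma$ only satisfies the stated inequality for $\epsilon\leq\sqrt{24}-3$, so for larger $\epsilon$ I would break symmetry and minimize the closed-form limit of $\ene/\rho$ against the competitive-ratio constraint $3/\sin\gamma+2/\sin\beta=5+\epsilon$ via Lagrange multipliers; asymptotically as $\epsilon\to 0$ this optimization yields the universal envelope $\sqrt{27/(11\epsilon)}+o(1/\sqrt{\epsilon})$, comfortably below $2/\sqrt{\epsilon}$, and a case analysis completes the argument.
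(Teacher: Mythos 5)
Your proposal is essentially the paper's own proof: the authors also freeze $\beta,\gamma$ as constants independent of $\alpha$, derive the same two limits $\lim_{\alpha\to 0}\ren{\infty}{\alpha}{\beta}{\gamma}/\sinn{\alpha}=\tfrac{2}{\sin\beta}+\tfrac{3}{\sin\gamma}$ and $\lim_{\alpha\to 0}\ene{\infty}{\alpha}{\beta}{\gamma}=\tfrac{\sin\gamma+2\sin\beta}{\cos\beta\sin\gamma+2\cos\gamma\sin\beta}$, and then make exactly your symmetric choice $\sin\beta=\sin\gamma=\tfrac{5}{5+\epsilon}$, obtaining the value $\tfrac{\epsilon+5}{\sqrt{\epsilon(\epsilon+10)}}$ (they also record the slightly better asymmetric choice with $\lambda=3/11$, which is the $\sqrt{27/(11\epsilon)}$ envelope you mention). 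One point of divergence: you correctly observe that $\tfrac{\epsilon+5}{\sqrt{\epsilon(\epsilon+10)}}\le\tfrac{2}{\sqrt\epsilon}$ only for $\epsilon\le\sqrt{24}-3\approx 1.899$, a restriction the paper silently omits — so you have in fact spotted a gap in the published argument rather than created one. However, your proposed repair (Lagrange multipliers plus a case analysis ``for the full range of $\epsilon>0$'') cannot succeed: since the iterated darting points converge to the origin, every agent travels at least its initial distance to the origin in the worst case, so $\ene{\infty}{\rho}{\beta}{\gamma}/\rho\ge 1$ for \emph{any} admissible $\beta,\gamma$, and the claimed bound $2/\sqrt\epsilon<1$ is unattainable once $\epsilon>4$. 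The lemma should therefore be read (and, strictly, restated) as a small-$\epsilon$ tradeoff statement; for that regime your argument, like the paper's, is complete, and no case analysis for large $\epsilon$ is needed or possible.
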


\begin{proof}
We start with some simple observations. 
Using Lemma~\ref{lem: krb performance} we have
$$
\lim_{\alpha\rightarrow 0} 
\frac{
\ren{\infty}{\alpha}{\beta}{\gamma}
}
{\sinn\alpha}
=
\lim_{\alpha\rightarrow 0} 
\frac{\csc (\alpha +\beta ) (3 \cos (\alpha ) \sin (\beta ) \csc (2 \alpha +\gamma )+2)}
{2-\sin (\beta ) \sin (\gamma ) \csc (\alpha +\beta ) \csc (2 \alpha +\gamma )}
=
\frac2{\sinn\beta}+\frac3{\sinn{\gamma}}
$$

By the proof of Lemma~\ref{lem: energy bounded} and after simple manipulations, we have
\begin{align*}
\lim_{\alpha\rightarrow 0} 
\ene{\infty}{\alpha}{\beta}{\gamma}
=&
\lim_{\alpha\rightarrow 0} 
\frac{2 \cos \left(\frac{\alpha }{2}\right) (-\sin (\alpha -\beta )+\sin (\alpha +\beta )+\sin (2 \alpha +\gamma ))}
{(2 \cos (\alpha )+1) \sin \left(\frac{3 \alpha }{2}+\beta +\gamma \right)-\sin \left(\frac{\alpha }{2}-\beta +\gamma \right)}
\\
=&
\frac{
\sinn\gamma+2\sinn\beta
}{
\coss\beta\sinn\gamma+2\coss\gamma\sinn\beta
}
\end{align*}

Now use abbreviation $b=\sinn\beta$ and $c=\sinn\gamma$. Set $b=c=\frac{5}{5+\epsilon}$, and observe that 
$$
\lim_{\alpha\rightarrow 0} 
\frac{
\ren{\infty}{\alpha}{\beta}{\gamma}
}
{\sinn\alpha}
=5+\epsilon
$$
while 
$$
\lim_{\alpha\rightarrow 0} 
\ene{\infty}{\alpha}{\beta}{\gamma}
=
\frac{\epsilon +5}{\sqrt{\epsilon  (\epsilon +10)}}
\leq \frac2{\sqrt{\epsilon}}.
$$
Finally, we note that we can achieve the same competitive ratio, and slightly improve the required energy. For this, we need to set alternatively $b=\frac{2}{\lambda  \epsilon +2}$ and $c=\frac{3}{-\lambda  \epsilon +\epsilon +3}$. For each $\lambda\in [0,1]$ it is easy to see that $2/b+3/c=5+\epsilon$. 
Choosing also $\lambda=3/11$ minimizes the energy, which becomes 
$$
\lim_{\alpha\rightarrow 0} 
\ene{\infty}{\alpha}{\beta}{\gamma}
=
\frac{41 \epsilon +198}{3 \sqrt{3} \sqrt{\epsilon  (3 \epsilon +44)}+16 \sqrt{\epsilon  (4 \epsilon +33)}}.
$$
\end{proof}

\section{Conclusion}
We introduced and studied a new geometric variant of symmetric rendezvous that we call Symmetric Rendezvous in a Disk (\srd). Our main contribution pertains to the algorithmic reduction of known suboptimal algorithms for the classic Symmetric Rendezvous problem on a Line (\srl) to \srd. Since \srd\ can also be interpreted as a variant of \srl\ in which agents are equipped with additional advice, our results demonstrate how this advice can be beneficial to the expected rendezvous time, beating in some cases the conjectured best possible time for \srl. 
Special to \srd\ is also that, unlike in \srl, our algorithms induce bounded worst case (energy) performance. Motivated by this, we also studied energy-efficiency tradeoffs, and we showed that, somehow surprisingly, one can achieve rendezvous with limited energy (and with probability 1) by compromising only slightly on the expected rendezvous time. 

Our techniques can be generalized for all known improved rendezvous protocols for \srl, however optimal reductions will be challenging to obtain. Nevertheless, it is interesting to investigate heuristic reductions, which we leave as an open research direction. Other interesting variants of our problem include the introduction of more agents, or relaxations of the notion of advice that we are using.

\bibliographystyle{plain}
\bibliography{real_bibliography}

\appendix

\section{Omitted Lemmata (and Their Proofs)}
\label{appendix: proofs}

\begin{lemma}
\label{Critical c for k1}
Let $(\overline \beta,\overline \gamma)$ be a critical point of $\mathcal{R}_1^{\alpha}$, and set $\Delta=2\alpha+\overline\gamma$. Then $\cos\Delta =\tfrac{2}{3}.$
\end{lemma}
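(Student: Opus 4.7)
The plan is to compute $\partial \ren{1}{\alpha}{\beta}{\gamma}/\partial\gamma$ directly from the decomposition $\ren{1}{\alpha}{\beta}{\gamma} = w + \tfrac{3}{4}d + \tfrac{1}{2}x$ derived in the proof of Lemma~\ref{lem: krb performance}, and then set it to zero. The key structural observation is that among the four quantities $w,y,x,d$ given by~\eqref{equa: value of w}--\eqref{equa: value of d}, only $x$ and $d$ depend on $\gamma$, and they do so through the common factor $\csc(2\alpha+\gamma)$. Writing $\Delta = 2\alpha + \gamma$, this means the $\gamma$-dependent part of the expected rendezvous time is proportional to
\[
g(\gamma) := \csc\Delta\bigl(3\sin\alpha\cos\alpha + \sin\gamma\bigr),
\]
with a positive multiplicative constant $\tfrac12 y$ independent of $\gamma$. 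So critical points in $\gamma$ are precisely the zeros of $g'(\gamma)$.

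First I would differentiate: using $(\csc\Delta)' = -\csc\Delta\cot\Delta$, one gets
\[
g'(\gamma) \;=\; \csc\Delta\Bigl[\cos\gamma \;-\; \cot\Delta\,(3\sin\alpha\cos\alpha + \sin\gamma)\Bigr].
\]
Setting the bracket to zero and clearing the $\cot\Delta$ yields
\[
\cos\gamma\,\sin\Delta \;-\; \sin\gamma\,\cos\Delta \;=\; 3\sin\alpha\cos\alpha\,\cos\Delta.
\]
The left-hand side collapses to $\sin(\Delta-\gamma) = \sin(2\alpha) = 2\sin\alpha\cos\alpha$ by definition of $\Delta$.

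Thus the critical equation becomes $2\sin\alpha\cos\alpha = 3\sin\alpha\cos\alpha \cdot \cos\Delta$. Since the problem is nondegenerate ($0 < \alpha < \pi/4$, so $\sin\alpha\cos\alpha \neq 0$), we can cancel and conclude $\cos\Delta = 2/3$, as required. There is no real obstacle here: the whole argument hinges on spotting that $w$ is $\gamma$-free and that the factor $\csc(2\alpha+\gamma)$ is common to $x$ and $d$, after which a one-line $\sin$-difference identity produces the claim.
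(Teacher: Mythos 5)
Your proof is correct and is essentially the paper's own argument: both compute $\partial\mathcal{R}_1^{\alpha}/\partial\gamma$ directly from the decomposition $w+\tfrac34 d+\tfrac12 x$, and your bracketed expression, after multiplying through by $\sin\Delta$ and applying $\sin(\Delta-\gamma)=\sin 2\alpha$, reduces to exactly the paper's factored form $\tfrac{d\csc\Delta}{4}(2-3\cos\Delta)$. The only shared (and equally implicit) caveat is that the constant $y$ (equivalently $d$) must be nonzero, which the paper also assumes without comment.
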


\begin{proof}
Let  $\overline{\beta},\overline{\gamma}$ be a critical point of $\mathcal{R}_1^{\alpha}$. Then $$\frac{\partial\mathcal{R}_1^{\alpha}(\overline{\beta},\overline{\gamma})}{\partial\gamma} = \frac{d\csc\Delta}{4}\left(-3\cos\Delta + 2\right) = 0,$$ and thus as $d$ is nonzero, it must be true that $\cos\Delta=\frac{2}{3}.$
\end{proof}

\begin{lemma}
\label{Alpern first step proof}
Any critical point $\overline{\beta},\overline{\gamma}$ of $\mathcal{R}_1^{\alpha}$ satisfies 
\begin{equation}
\cos(\Theta)=\frac{3}{4}\cos(\overline{\gamma}),
\label{bestbetaE1}
\end{equation}
where $\Theta=\alpha+\overline\beta$.
\end{lemma}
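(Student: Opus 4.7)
The plan is to compute $\partial \mathcal{R}_1^{\alpha}/\partial \beta$, set it to zero, and then combine the resulting identity with the condition $\cos\Delta = 2/3$ supplied by Lemma~\ref{Critical c for k1}.

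First I would work from the decomposition $\mathcal{R}_1^{\alpha} = w + \tfrac34 d + \tfrac12 x$ established in the proof of Lemma~\ref{lem: krb performance}, together with the explicit formulas \eqref{equa: value of w}--\eqref{equa: value of d}. Writing $\Theta = \alpha+\beta$ and $\Delta = 2\alpha+\gamma$, I would note that $\gamma$ (and hence $\Delta$) does not depend on $\beta$, so the $\beta$-dependence sits entirely in the factor $1/\sin\Theta$ (in $w$) and in the factor $\sin\beta/\sin\Theta$ (in $d$ and $x$). Setting
\[
K(\alpha,\gamma) := \frac{3\sin 2\alpha + 2\sin\gamma}{4\sin\Delta},
\]
one can write $\mathcal{R}_1^{\alpha} = \dfrac{\sin\alpha}{\sin\Theta} + K\cdot\dfrac{\sin\beta}{\sin\Theta}$.

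Next, using the identity $\frac{d}{d\beta}\bigl(\sin\beta/\sin\Theta\bigr) = \sin\alpha/\sin^2\Theta$ (which follows from $\sin\beta = \sin(\Theta - \alpha)$ and the angle-subtraction formula), I would obtain
\[
\frac{\partial \mathcal{R}_1^{\alpha}}{\partial \beta} = \frac{\sin\alpha}{\sin^{2}\Theta}\bigl(K - \cos\Theta\bigr).
\]
Since $\sin\alpha > 0$ and $\sin\Theta > 0$ in the non-degenerate regime, the critical condition reduces to $\cos\Theta = K$.

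Finally I would plug in $\cos\Delta = 2/3$ from Lemma~\ref{Critical c for k1}. The one bookkeeping step is to verify
\[
4\sin\Delta\cdot \tfrac34\cos\gamma - (3\sin 2\alpha + 2\sin\gamma) \;=\; \sin\gamma\bigl(3\cos\Delta - 2\bigr),
\]
which follows by expanding $\sin\Delta = \sin 2\alpha\cos\gamma + \cos 2\alpha\sin\gamma$ and recollecting terms via $\cos\Delta = \cos 2\alpha\cos\gamma - \sin 2\alpha\sin\gamma$. Under $\cos\Delta = 2/3$ the right-hand side vanishes, so $K = \tfrac34\cos\gamma$, and hence $\cos\Theta = \tfrac34\cos\gamma$, as claimed. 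The only non-routine step is this small trigonometric identity, but once the expression for $K$ is written in terms of $\Delta$ and $\gamma$ it follows directly.
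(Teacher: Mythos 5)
Your proposal is correct and follows essentially the same route as the paper: it computes $\partial\mathcal{R}_1^{\alpha}/\partial\beta$ from the decomposition $w+\tfrac34 d+\tfrac12 x$, reads off the critical-point condition $\cos\Theta=\tfrac34\bigl(\tfrac{d}{y}+\tfrac{2x}{3y}\bigr)$ (your $K$ is exactly this quantity), and then substitutes $\cos\Delta=\tfrac23$ and simplifies via $\sin 2\alpha+\cos\Delta\sin\gamma=\sin\Delta\cos\gamma$. The only difference is cosmetic bookkeeping in how the derivative and the final trigonometric identity are organized.
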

\begin{proof}
Let  $\overline{\beta},\overline{\gamma}$ be a critical point of $\mathcal{R}_1^{\alpha}$. Then
$$\frac{\partial\mathcal{R}_1^{\alpha}(\overline{\beta},\overline{\gamma})}{\partial\beta}=
    w\csc\beta \left(-y \cos\Theta +\tfrac{3}{4}d + \tfrac{1}{2}x\right)
    =0,$$
and thus $\cos\Theta = \frac{3}{4}\big(\frac{d}{y}+\frac{2x}{3y}\big) = \tfrac{3}{4}\left(\sin 2\alpha\csc\Delta + \frac{2}{3}\sin \gamma \csc\Delta \right).$
Substituting in  $\frac{2}{3}=\cos\Delta$ (by Lemma \ref{Critical c for k1}) gives
$$\cos\Theta = \tfrac{3}{4}\left(\sin 2\alpha+\cos\Delta\sin\gamma\right)\csc\Delta,$$
which simplifies to $\cos\Theta = \frac{3}{4} \cos \gamma.$
\end{proof}

\begin{lemma}
\label{proof of optimality first Alpern}
The critical points $\overline{\beta},\overline{\gamma}$ of Lemma~\ref{Alpern first step proof} are local minima of $\mathcal{R}_1^{\alpha}$.
\end{lemma}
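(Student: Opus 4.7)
The plan is to apply the standard second-derivative test: I show that the Hessian of $\mathcal R_1^\alpha$ at the critical point $(\overline\beta,\overline\gamma)$ is positive definite. The approach exploits the structural decomposition
$$
\mathcal R_1^\alpha(\beta,\gamma) = w(\beta) + y(\beta)\, h(\gamma),
$$
where $w(\beta)=\sin\alpha\,\csc(\alpha+\beta)$, $y(\beta)=\sin\beta\,\csc(\alpha+\beta)$, and $h(\gamma)=\csc(2\alpha+\gamma)\bigl[\tfrac34\sin(2\alpha)+\tfrac12\sin\gamma\bigr]$; this factorisation is immediate from~\eqref{equa: value of w}--\eqref{equa: value of d} together with the expression $\mathcal R_1^\alpha = w + \tfrac34 d + \tfrac12 x$ derived in the proof of Lemma~\ref{lem: krb performance}.

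The first step is to observe that the mixed partial equals $y'(\beta)\,h'(\gamma)$, which vanishes at $(\overline\beta,\overline\gamma)$: the critical-point equation $\partial_\gamma \mathcal R_1^\alpha = y(\beta)\,h'(\gamma) = 0$ combined with $y(\overline\beta) > 0$ (valid in the regime $\overline\beta \in (0,\pi/2-\alpha)$ that arises under Theorem~\ref{thm: opt 1rbbetagamma}) forces $h'(\overline\gamma)=0$. Hence the Hessian is diagonal at the critical point, and positive-definiteness reduces to checking that each diagonal entry is strictly positive.

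For the $\gamma\gamma$-entry $y(\overline\beta)\,h''(\overline\gamma)$, a direct differentiation yields
$$
h'(\gamma) = \csc^2(2\alpha+\gamma)\,\sin(2\alpha)\bigl[\tfrac12-\tfrac34\cos(2\alpha+\gamma)\bigr],
$$
so that $h'(\overline\gamma)=0$ independently recovers Lemma~\ref{Critical c for k1}; differentiating once more and evaluating at $\overline\gamma$ (where the bracket vanishes) gives $h''(\overline\gamma) = \tfrac34\sin(2\alpha)\,\csc(2\alpha+\overline\gamma) > 0$. For the $\beta\beta$-entry, routine differentiation of $w$ and $y$ and collecting terms produces
$$
\frac{\sin\alpha}{\sin^3(\alpha+\overline\beta)}\bigl[1 + \cos^2(\alpha+\overline\beta) - 2\cos(\alpha+\overline\beta)\, h(\overline\gamma)\bigr].
$$

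The crux of the argument, and the step I expect to be the main obstacle, is the identity
$$
h(\overline\gamma) = \tfrac34\cos\overline\gamma = \cos(\alpha+\overline\beta),
$$
which I would verify directly from the two critical-point conditions $\cos(2\alpha+\overline\gamma) = 2/3$ (Lemma~\ref{Critical c for k1}) and $\cos(\alpha+\overline\beta) = \tfrac34\cos\overline\gamma$ (Lemma~\ref{Alpern first step proof}): expand $\sin(2\alpha+\overline\gamma)$ by the angle-sum formula, substitute $\sin(2\alpha)\sin\overline\gamma = \cos(2\alpha)\cos\overline\gamma - 2/3$, and cross-multiply in the definition of $h$. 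Setting $c := \cos(\alpha+\overline\beta) = h(\overline\gamma)$ then collapses the bracket in the $\beta\beta$-entry to $1 + c^2 - 2c^2 = \sin^2(\alpha+\overline\beta)$, making the entry equal to $\sin\alpha/\sin(\alpha+\overline\beta) > 0$. With both diagonal entries strictly positive, Sylvester's criterion delivers positive-definiteness of the Hessian, and hence $(\overline\beta,\overline\gamma)$ is a local minimum of $\mathcal R_1^\alpha$.
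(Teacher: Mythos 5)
Your proof is correct and follows essentially the same route as the paper's: both apply the second-derivative test and arrive at the same diagonal Hessian at $(\overline\beta,\overline\gamma)$, with entries $w$ and $\tfrac34 d$ and vanishing mixed partial, hence determinant $\tfrac34 wd>0$. The only difference is presentational: your separable decomposition $\mathcal{R}_1^{\alpha}=w(\beta)+y(\beta)h(\gamma)$ makes the vanishing of the mixed partial and the positivity of the $\gamma\gamma$-entry immediate, whereas the paper reaches the identical simplifications by substituting the first-order conditions into directly computed second partials.
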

\begin{proof}
Let  $\overline{\beta},\overline{\gamma}$ be a critical point of $\mathcal{R}_1^{\alpha}$. Now, taking the second derivative of $\mathcal{R}_1^{\alpha}(\overline{\beta},\overline{\gamma})$ with respect to $\gamma$ gives
$$\frac{\partial^2\mathcal{R}_1^{\alpha}(\overline{\beta},\overline{\gamma})}{\partial\gamma^2}
    = d(\tfrac{3}{4}\cot^2\Delta-\csc\Delta\cot\Delta+\tfrac{3}{4}\csc^2\Delta),$$
the right hand side of which simplifies to 
$2d\cot\Delta(-\tfrac{1}{2}\csc\Delta + \tfrac{3}{4}\cot\Delta) + \tfrac{3}{4}d.$ Observe that $-\tfrac{1}{2}\csc\Delta + \tfrac{3}{4}\cot\Delta=\frac{\partial\mathcal{R}_1^{\alpha}(\overline{\beta},\overline{\gamma})}{\partial\gamma} = 0$ at a critical point, and thus 
$$\frac{\partial^2\mathcal{R}_1^{\alpha}(\overline{\beta},\overline{\gamma})}{\partial \gamma^2}= \tfrac{3}{4}d.$$
Now,
$$\frac{\partial\mathcal{R}_1^{\alpha}(\overline{\beta},\overline{\gamma})}{\partial\beta} = w\left(-\cot\Theta+ \big(\tfrac{3}{4}d+\tfrac{1}{2}x\big)\csc\beta \right),$$
which simplifies to 
$$\frac{\partial\mathcal{R}_1^{\alpha}(\overline{\beta},\overline{\gamma})}{\partial\beta} = w\csc\beta(\mathcal{R}_1^{\alpha}\left (\overline{\beta},\overline{\gamma}) - \cos\beta \right)\label{final beta e1}.$$
Differentiating and rearranging once more obtains
$$\frac{\partial^2\mathcal{R}_1^{\alpha}(\overline{\beta},\overline{\gamma})}{\partial \beta ^2} = -2\cot\Theta \frac{\partial\mathcal{R}_1^{\alpha}(\overline{\beta},\overline{\gamma})}{\partial\beta} + w = w$$
Similarly,
$$\frac{\partial^2\mathcal{R}_1^{\alpha}(\overline{\beta},\overline{\gamma})}{\partial\beta d\gamma} = w\csc\beta \frac{\partial\mathcal{R}_1^{\alpha}(\overline{\beta},\overline{\gamma})}{\partial\gamma}=0.$$ 

Then the determinant of the Hessian of $\mathcal{R}_1^{\alpha}(\overline{\beta},\overline{\gamma})$ is equal to $\tfrac{3}{4}wd,$ which is positive; thus, $\mathcal{R}_1^{\alpha}(\overline{\beta},\overline{\gamma})$ is locally convex and attains a minimum at its unique critical point.
\end{proof}

\begin{lemma}
\label{lem: conditions beta gamma}
The critical points of $\ren{\infty}{\alpha}{\beta}{\gamma}:\reals^2\mapsto \reals$ are the solutions to the system 
\begin{align*}
\cos\Theta &= \tfrac{3}{4}\cos\gamma\\
\cos\Delta &= \tfrac{2}{3}\cos\beta,
\end{align*}
where $\Delta= \beta+\gamma$ and $\Theta=2\alpha+\gamma$. 
\end{lemma}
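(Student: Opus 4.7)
The plan is to take the closed form $\mathcal{R}_\infty = \frac{3d+4w}{2(2-x)}$ from Lemma~\ref{lem: krb performance}, differentiate via the quotient rule, and use two structural features of the formulas to reduce both critical-point conditions to clean trigonometric identities. With the shorthand $\Theta = \alpha+\beta$ and $\Delta = 2\alpha+\gamma$ (as in Figure~\ref{figure 1F2B}), these features are (i) $w$ is independent of $\gamma$, and (ii) $d = y\sin(2\alpha)\csc\Delta$ and $x = y\sin\gamma\csc\Delta$ depend on $\beta$ only through $y$, so $d_\beta/d = x_\beta/x = y_\beta/y$.

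First I would handle $\partial_\gamma \mathcal{R}_\infty = 0$. A direct computation gives $d_\gamma = -y\sin(2\alpha)\csc\Delta\cot\Delta$ and $x_\gamma = y\sin(2\alpha)\csc^2\Delta$, where the simplification of $x_\gamma$ uses $\sin(\Delta-\gamma) = \sin(2\alpha)$. In particular, $d_\gamma = -\cos\Delta \cdot x_\gamma$. Substituting into the quotient rule and factoring out the nonzero $x_\gamma$ yields
\[
3d+4w = 3(2-x)\cos\Delta,
\]
which is equivalent to $\mathcal{R}_\infty = \tfrac{3}{2}\cos\Delta$ at any critical point.

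Next, for $\partial_\beta \mathcal{R}_\infty = 0$, the proportionality $d_\beta/d = x_\beta/x = y_\beta/y$ lets me factor. Substituting the first critical-point identity to cancel $(2-x)$, the equation becomes
\[
\tfrac{y_\beta}{y}\cdot 3\bigl(d + \cos\Delta\cdot x\bigr) + 4w_\beta = 0.
\]
The bracket simplifies via $\sin(2\alpha) + \sin\gamma\cos\Delta = \sin\Delta\cos\gamma$ (an instance of the sine-sum rule, since $\Delta - \gamma = 2\alpha$) to $3y\cos\gamma$. Plugging in the explicit values $y_\beta = \sin\alpha\csc^2\Theta$ and $w_\beta = -\sin\alpha\cos\Theta\csc^2\Theta$ and cancelling $\sin\alpha\csc^2\Theta$ leaves $3\cos\gamma = 4\cos\Theta$, the first target equation $\cos\Theta = \tfrac{3}{4}\cos\gamma$.

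Finally, to derive the second equation, I would feed $\cos\gamma = \tfrac{4}{3}\cos\Theta$ back into $\mathcal{R}_\infty = \tfrac{3}{2}\cos\Delta$. Writing out $3d+4w = 3(2-x)\cos\Delta$ explicitly and invoking the same sine-sum identity, the left side reduces to $4\csc\Theta\,[\sin\beta\cos\Theta + \sin\alpha]$. Now apply $\sin\alpha = \sin(\Theta-\beta) = \sin\Theta\cos\beta - \cos\Theta\sin\beta$ to collapse the bracket to $\sin\Theta\cos\beta$, yielding $4\cos\beta = 6\cos\Delta$, i.e.\ $\cos\Delta = \tfrac{2}{3}\cos\beta$. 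The main obstacle is organizational rather than conceptual: the raw partials of $\mathcal{R}_\infty$ are unwieldy, and without the three factorizations ($d_\gamma = -\cos\Delta\cdot x_\gamma$ and the common $y_\beta/y$ factor for $d_\beta, x_\beta$) together with the two sine-sum identities above, the clean form of the critical-point system is obscured. Symbolic software can verify the final equations, but the analytic derivation is what makes the system tractable for solving for $\overline{\beta},\overline{\gamma}$ in Theorem~\ref{thm: opt parameters for inftyRB}.
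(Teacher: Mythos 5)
Your proof is correct and follows essentially the same route as the paper's: both differentiate the closed form $\frac{3d+4w}{2(2-x)}$, extract $\mathcal{R}=\tfrac{3}{2}\cos\Delta$ from the $\gamma$-derivative, combine it with the $\beta$-derivative and the identity $\sin(2\alpha)+\sin\gamma\cos\Delta=\sin\Delta\cos\gamma$ to obtain $\cos\Theta=\tfrac{3}{4}\cos\gamma$, and then back-substitute using $\sin\alpha=\sin\Theta\cos\beta-\cos\Theta\sin\beta$ to get $\cos\Delta=\tfrac{2}{3}\cos\beta$. You also correctly work with the conventions $\Theta=\alpha+\beta$ and $\Delta=2\alpha+\gamma$ of Figure~\ref{figure 1F2B}, which is what the paper's own proof uses (the ``where'' clause in the lemma statement is a misprint).
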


\begin{proof}
Taking the first derivative of $\mathcal{R}_1^{\alpha}(\beta,\gamma)$ with respect to $\beta$ gives
$$\frac{\partial\mathcal{R}_1^{\alpha}({\beta},{\gamma})}{\partial\beta}=
    \frac{1}{2(2-x)}\left(4 \frac{\partial w}{\partial\beta}+3\frac{\partial d}{\partial\beta}+2 \mathcal{R}_1^{\alpha}(\overline{\beta},\overline{\gamma}) \frac{\partial x}{\partial\beta}\right).$$
By substituting in the appropriate derivatives $\tfrac{\partial w}{\partial \beta} = -w\cot(\alpha+\beta),$ $\tfrac{\partial d}{\partial \beta} = w d \csc \beta,$ and $\tfrac{\partial x}{\partial \beta} = w x \csc \beta,$ 
we attain
$$\frac{\partial \mathcal{R}_1^{\alpha}(\beta,\gamma)}{\partial\beta}=
    \frac{w\csc(\beta)}{2(2-x)}\left(4 y \cos(\Theta) +3d+2 x \mathcal{R}_1^{\alpha}({\beta},{\gamma})\right).$$
This derivative is zero at ${\beta}$ when and only when
\begin{equation}
\mathcal{R}_1^{\alpha}({\beta},{\gamma}) = \frac{4 y \cos(\Theta) +3d}{2x}\label{dbR1}
\end{equation}

Similarly, as $\tfrac{\partial w}{\partial \gamma} =0$, $\tfrac{\partial d}{\partial \gamma} =-d \cot (\Delta),$ and $\tfrac{\partial x}{\partial \gamma} =d \csc (\Delta),$ then
$$\frac{\partial \mathcal{R}_1^{\alpha}({\beta},{\gamma})}{\partial\gamma}=
    \frac{d\csc(\Delta)}{2(2-x)}\left(-3\cos(\Delta)+2 \mathcal{R}_1^{\alpha}({\beta},{\gamma})\right).$$
At a critical point, this derivative is zero and thus
\begin{equation}
\mathcal{R}_1^{\alpha}({\beta},{\gamma})=\frac{3}{2}\cos(\Delta). \label{dcR1}
\end{equation}
Assume that both equations \eqref{dbR1} and \eqref{dcR1} hold; then, equating both formulas for $\mathcal{R}_1^{\alpha}({\beta},{\gamma})$ gives
$$\frac{4 y \cos(\Theta) +3d}{2x} = \frac{3}{2}\cos(\Delta).$$

As $x$ is nonzero, $y$ must also be nonzero and thus solving for $\cos(\Theta)$ gives
$$\cos(\Theta) =\frac{3}{4}\left(\frac{x\cos\Delta+d}{y} \right).$$
Substituting in formulas for $d,x,$ and $y$ gives

\begin{equation}
\cos(\Theta) = \frac{3}{4}\cos{\gamma}. \label{Theta in terms of gamma}
\end{equation}

Now, returning to equation~\eqref{dcR1}, substituting in a formula for $\mathcal{R}_1^{\alpha}({\beta},{\gamma})$ and solving for $w$ gives
$$w = \tfrac{3}{4}((2-x)\cos\Delta-d).$$
Dividing both sides by $y$, substituting in explicit formulas, and simplifying gives
$$ \frac{\sin\alpha}{\sin\beta}= \frac{3}{4}\left(\frac{2\sin\Theta\cos\Delta}{\sin\beta}-\cos\gamma \right),$$
which solves for
$$\sin\alpha =\tfrac{3}{2}\sin\Theta\cos\Delta-\left(\tfrac{3}{4}\cos(\gamma)\right)\sin\beta.$$
Using~\eqref{Theta in terms of gamma}, we obtain
$$\sin\alpha =\tfrac{3}{2}\sin\Theta\cos\Delta-\cos\Theta\sin\beta,$$
which with some minor trigonometric manipulation gives
$$\cos\Delta = \tfrac{2}{3}\cos\beta.$$
\end{proof}

\begin{lemma}
\label{lem: good bounds opt beta gamma}
Let $\overline\beta, \overline\gamma$ be as described in the statement of Theorem~\ref{thm: opt parameters for inftyRB}. Then, for all~$\alpha\in (0,3/4)$, we have that 
$$
0 \leq \overline\beta, \overline\gamma \leq \pi/2-\alpha.
$$
\end{lemma}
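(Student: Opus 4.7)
The plan is to verify the four required inequalities directly from the closed-form expressions~\eqref{equa: opt beta} and~\eqref{equa: opt gamma}, using as an auxiliary quantity $v = v(\alpha) := (2\coss{\alpha} - \coss{2\alpha})\cscc{2\alpha}$. Two preliminary observations simplify matters throughout: first, since $\cos$ is decreasing on $(0, 3/4)$ and $\coss{3/4} > 2/3$, we have $\tfrac{9}{4}\cos^2\alpha - 1 > 0$ on the whole interval, so the denominator of the arctangent in~\eqref{equa: opt beta} is strictly positive and $\overline\beta \in (-\pi/2, \pi/2)$; second, writing $2\coss{\alpha} - \coss{2\alpha} = 2\coss{\alpha}(1-\coss{\alpha}) + 1 > 0$ shows $v(\alpha) > 0$ throughout.

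For $\overline\beta \geq 0$, since $\arctan$ is monotone and its denominator positive, it suffices to show the numerator $-v + \sqrt{v^2 - (\tfrac{9}{4}\cos^2\alpha - 1)(\tfrac{5}{4} - v^2)}$ is nonnegative. Squaring (valid as both the square root and $v$ are nonnegative) reduces this to $(\tfrac{9}{4}\cos^2\alpha - 1)(v^2 - \tfrac{5}{4}) \geq 0$, and the already-established sign of the first factor reduces the claim to $v^2 \geq \tfrac{5}{4}$, i.e.\ to the trigonometric inequality $2\coss{\alpha} - \coss{2\alpha} \geq \tfrac{\sqrt{5}}{2}\sinn{2\alpha}$ on $(0, 3/4)$. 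I would verify this either by noting that $v$ is continuous, tends to $+\infty$ as $\alpha \to 0^+$, satisfies $v(3/4) \approx 1.40 > \sqrt{5}/2$, and has no interior critical point at which it dips below $\sqrt{5}/2$ (checked by symbolic differentiation in \textsc{Mathematica}), or by handing the inequality directly to a symbolic solver.

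For $\overline\beta \leq \pi/2 - \alpha$, equivalently $\tan\overline\beta \leq \cot\alpha$, I would clear the positive denominator, isolate the square root, and square both sides (justified since the right-hand side $v + (\tfrac{9}{4}\cos^2\alpha - 1)\cot\alpha$ is manifestly positive). After substituting the definition of $v$, this reduces, following straightforward cancellation, to a polynomial inequality in $\coss{\alpha}$ and $\sinn{\alpha}$ that I would verify via symbolic expansion. As a by-product of this step one obtains $\coss{\alpha + \overline\beta} \in [0, \tfrac{3}{4}]$, which places $\tfrac{4}{3}\coss{\alpha+\overline\beta}$ in $[0, 1]$ and hence guarantees that $\overline\gamma = \arccoss{\tfrac{4}{3}\coss{\alpha+\overline\beta}}$ is a real number lying in $[0, \pi/2]$.

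Finally, the remaining inequality $\overline\gamma \leq \pi/2 - \alpha$ is equivalent to $\coss{\overline\gamma} \geq \sinn{\alpha}$, i.e.\ to $\coss{\alpha+\overline\beta} \geq \tfrac{3}{4}\sinn{\alpha}$, which in turn becomes $\overline\beta \leq \arccoss{\tfrac{3}{4}\sinn{\alpha}} - \alpha$, a constraint on $\overline\beta$ alone. Substituting the closed form~\eqref{equa: opt beta} and proceeding exactly as in the previous paragraph (clearing the positive denominator, isolating the square root, squaring) yields another polynomial inequality in $\coss{\alpha}$ and $\sinn{\alpha}$, verifiable by analogous symbolic manipulation. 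The principal obstacle throughout is not conceptual but computational: once~\eqref{equa: opt beta} is substituted, each of the four target bounds becomes a tedious but purely polynomial statement in $\coss{\alpha}$ and $\sinn{\alpha}$, routine to validate with symbolic software, consistent with the paper's stated computational practice.
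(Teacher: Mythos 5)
Your proposal is correct in outline, but it takes a genuinely different route from the paper: the paper's proof of this lemma consists entirely of plotting $\overline\beta(\alpha)$ and $\overline\gamma(\alpha)$ against $\pi/2-\alpha$ and reading the bounds off the graph (Figure~\ref{fig: good bounds opt beta gamma}), whereas you reduce each of the four bounds to an explicit algebraic inequality in $\coss{\alpha}$ and $\sinn{\alpha}$ --- for instance, $\overline\beta\geq 0$ becomes exactly $v^2\geq \tfrac54$, and $\overline\gamma\leq\pi/2-\alpha$ becomes $\coss{\alpha+\overline\beta}\geq\tfrac34\sinn{\alpha}$, both of which are correct reductions given your (valid) preliminary observations that $\tfrac94\cos^2\alpha-1>0$ and $v>0$ on $(0,3/4)$. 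What your approach buys is a much stronger form of verifiability: decidable polynomial inequalities that a symbolic solver can certify, rather than a visual inspection of a plot; what it costs is length, and the final computations are still delegated to software (which is consistent with the paper's stated practice). One step deserves tightening: you assert that $\coss{\alpha+\overline\beta}\leq\tfrac34$ falls out ``as a by-product'' of proving $\overline\beta\leq\pi/2-\alpha$, but that bound only yields $\coss{\alpha+\overline\beta}\geq 0$; the upper bound $\tfrac34$ amounts to the separate inequality $\overline\beta\geq\arccoss{\tfrac34}-\alpha$, which is strictly stronger than $\overline\beta\geq 0$ for small $\alpha$. It does hold --- either as one more polynomial inequality of the same type, or by recalling that the chosen root of the degree-$4$ equation corresponds to an actual solution of the system \eqref{equa: cond 1}--\eqref{equa: cond 2}, where $\coss{\alpha+\overline\beta}=\tfrac34\coss{\overline\gamma}$ forces the bound --- but it needs to be stated and checked, not inherited.
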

\begin{proof}
The lemma is established numerically by plotting $\overline\beta, \overline\gamma$ against $\pi/2-\alpha$, see Figure~\ref{fig: good bounds opt beta gamma}.

\begin{figure}[h!]
\begin{center}
 \includegraphics[width=7.5cm]{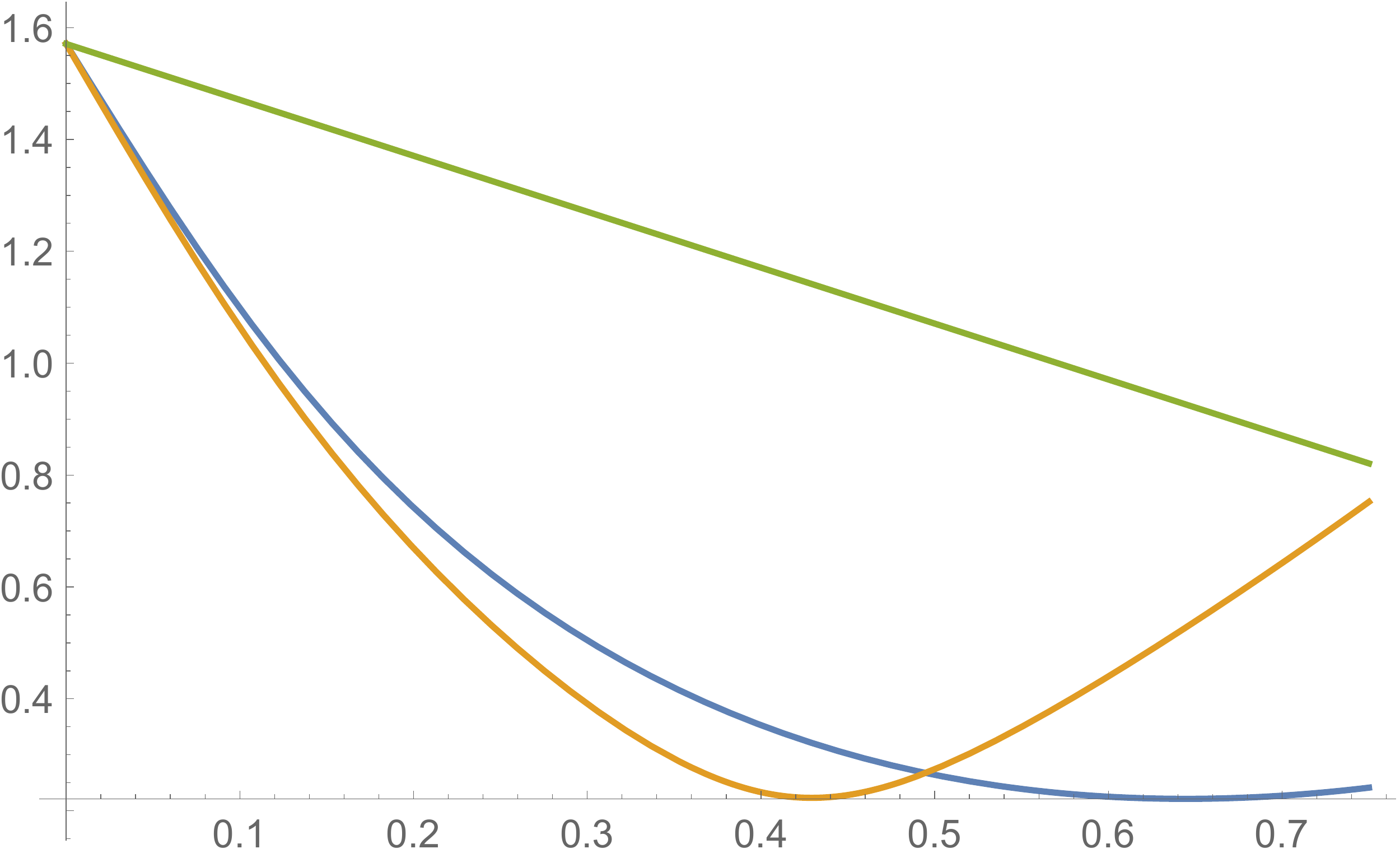}
\end{center}
 \caption{
Horizontal axis corresponds to $\alpha$. Graph depicts the behaviour of $\overline\beta=\overline\beta(\alpha)$ (blue curve) and of $\overline\gamma=\overline\gamma(\alpha)$ (yellow curve) as described in \eqref{equa: opt beta}, \eqref{equa: opt gamma}, respectively, of Theorem~\ref{thm: opt parameters for inftyRB}, 
against $\pi/2-\alpha$ (green line).  
 }
 \label{fig: good bounds opt beta gamma}
\end{figure}

\ignore{
w[\[Alpha]_, \[Beta]_] := Sin[\[Alpha]]*Csc[\[Alpha] + \[Beta]];
y[\[Alpha]_, \[Beta]_] := Sin[\[Beta]]*Csc[\[Alpha] + \[Beta]];
d[\[Alpha]_, \[Beta]_, \[Gamma]_] := 
  y[\[Alpha], \[Beta]]*Sin[2*\[Alpha]]*Csc[2*\[Alpha] + \[Gamma]];
x[\[Alpha]_, \[Beta]_, \[Gamma]_] := 
  y[\[Alpha], \[Beta]]*Sin[\[Gamma]]*Csc[2*\[Alpha] + \[Gamma]];
perf[\[Alpha]_, \[Beta]_, \[Gamma]_] := (3*
      d[\[Alpha], \[Beta], \[Gamma]] + 4*w[\[Alpha], \[Beta]])/
   2/(2 - x[\[Alpha], \[Beta], \[Gamma]])
va[a_] := (2*Cos[a] - Cos[2*a])*Csc[2*a]
barb[a_] := 
  ArcTan[(-va[a] + 
      Sqrt[va[a]^2 - (9/4*Cos[a]^2 - 1)*(5/4 - va[a]^2)])/(9/4*
       Cos[a]^2 - 1)];
barg[a_] := ArcCos[4/3*Cos[a + barb[a]]];
exprendtime[\[Alpha]_] := 
  perf[\[Alpha], barb[\[Alpha]], barg[\[Alpha]]];
(*compratio=*)
\[Rho]*exprendtime[ArcSin[1/\[Rho]]];
Plot[\[Rho]*exprendtime[ArcSin[1/\[Rho]]], {\[Rho], 1.5, 20}]
FindRoot[\[Rho]*exprendtime[ArcSin[1/\[Rho]]] == 425/100, {\[Rho], 10}]
Plot[ {barb[a], barg[a], Pi/2 - a}, {a, 0, 3/4}]
}
\end{proof}

\begin{lemma}
\label{lem: hessian of inftystep}
Let $\overline\beta, \overline\gamma$ be as described in the statement of Theorem~\ref{thm: opt parameters for inftyRB}. Then both eigenvalues of $\nabla^2\ren{\infty}{}{\overline\beta}{\overline\gamma}$ are strictly positive for all $0<\alpha<1/2$, and hence critical values $\overline\beta, \overline\gamma$ minimize $\ren{\infty}{}{\overline\beta}{\overline\gamma}$.
\end{lemma}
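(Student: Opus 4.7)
The plan is to verify positive definiteness of $\nabla^2 \ren{\infty}{}{\overline\beta}{\overline\gamma}$ by checking that both the trace and determinant of the Hessian remain strictly positive for every $\alpha\in(0,1/2)$; once both are positive, the two eigenvalues must be strictly positive as well.

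First I would return to the closed form $\ren{\infty}{\alpha}{\beta}{\gamma} = (3d+4w)/(2(2-x))$ from the proof of Lemma~\ref{lem: krb performance}, together with the explicit expressions for $w,y,d,x$ in equations \eqref{equa: value of w}--\eqref{equa: value of d}. From there I would differentiate to obtain $\partial_\beta R$ and $\partial_\gamma R$ in a factored form (as was done for the 1-step version in Lemma~\ref{proof of optimality first Alpern}); the key observation is that $\partial_\beta R$ carries a factor of the form $\bigl(4y\cos\Theta + 3d + 2xR\bigr)$ and $\partial_\gamma R$ a factor of $\bigl(-3\cos\Delta + 2R\bigr)$, exactly as in the derivation in Lemma~\ref{lem: conditions beta gamma}. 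Both of these factors vanish at $(\overline\beta,\overline\gamma)$.

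Next I would differentiate once more to compute $\partial^2_{\beta\beta}R$, $\partial^2_{\gamma\gamma}R$, and $\partial^2_{\beta\gamma}R$. Any term in the second derivative that inherits $\partial_\beta R$ or $\partial_\gamma R$ as a factor is killed at the critical point, so only the terms proportional to the derivative of the vanishing factor survive. This mirrors the simplification used in Lemma~\ref{proof of optimality first Alpern} and should give compact expressions in $\alpha, \overline\beta, \overline\gamma$. Substituting the critical-point identities $\cos(\alpha+\overline\beta)=\tfrac34\cos\overline\gamma$ and $\cos(2\alpha+\overline\gamma)=\tfrac23\cos\overline\beta$, and then expressing $\overline\beta,\overline\gamma$ as the explicit functions of $\alpha$ given in \eqref{equa: opt beta}--\eqref{equa: opt gamma}, reduces the Hessian entries to univariate functions of $\alpha$.

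Finally, I would form the trace $T(\alpha) = \partial^2_{\beta\beta}R + \partial^2_{\gamma\gamma}R$ and the determinant $D(\alpha) = \partial^2_{\beta\beta}R\cdot\partial^2_{\gamma\gamma}R - (\partial^2_{\beta\gamma}R)^2$ at $(\overline\beta(\alpha),\overline\gamma(\alpha))$. The main obstacle is that these expressions are analytically unwieldy: the roots and trigonometric nesting in~\eqref{equa: opt beta} make a clean symbolic proof of positivity impractical. Following the style adopted elsewhere in this appendix (see, e.g., Lemma~\ref{lem: good bounds opt beta gamma}), I would close the argument by verifying in \textsc{Mathematica} that $T(\alpha)>0$ and $D(\alpha)>0$ throughout $(0,1/2)$, e.g.\ by plotting both (or equivalently the two eigenvalues of the Hessian) against $\alpha$ on this interval. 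Since $T>0$ and $D>0$ force both eigenvalues to be strictly positive, the Hessian is positive definite and $(\overline\beta,\overline\gamma)$ is indeed a strict local minimizer of $\ren{\infty}{\alpha}{\beta}{\gamma}$.
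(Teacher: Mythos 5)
Your proposal is correct and follows essentially the same route as the paper: both reduce the Hessian at $(\overline\beta(\alpha),\overline\gamma(\alpha))$ to univariate functions of $\alpha$ and then verify positive definiteness computationally in \textsc{Mathematica} over $(0,1/2)$, the only cosmetic difference being that you certify positivity via the trace and determinant while the paper plots the two eigenvalues directly (equivalent criteria for a symmetric $2\times 2$ matrix). Your intermediate simplification of the second derivatives using the vanishing gradient factors is a nice touch but does not change the substance of the argument.
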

\begin{proof}
$\ren{\infty}{\alpha}{\beta}{\gamma}$ 
is given by~\eqref{equa: ren inf value} of Lemma~\ref{lem: krb performance}, so for all $\beta, \gamma$, we can compute $\nabla^2 \ren{\infty}{\alpha}{\beta}{\gamma}$. In the resulting $2\times 2$ matrix, we substitute the values 
$\overline\beta, \overline\gamma$, as in~\eqref{equa: opt beta},~\eqref{equa: opt beta} of Theorem~\ref{thm: opt parameters for inftyRB} to obtain 
$\nabla^2 \ren{\infty}{\alpha}{\overline\beta}{\overline\gamma}$ whose entries depends exclusively on $\alpha$. 
Using symbolic software, we calculate both eigenvalues of $\nabla^2 \ren{\infty}{\alpha}{\overline\beta}{\overline\gamma}$, and we verify that they are both strictly positive, for all $0< \alpha< 1/2$, see Figure~\ref{fig: hesianpd}.

\begin{figure}[h!]
\begin{center}
 \includegraphics[width=7.5cm]{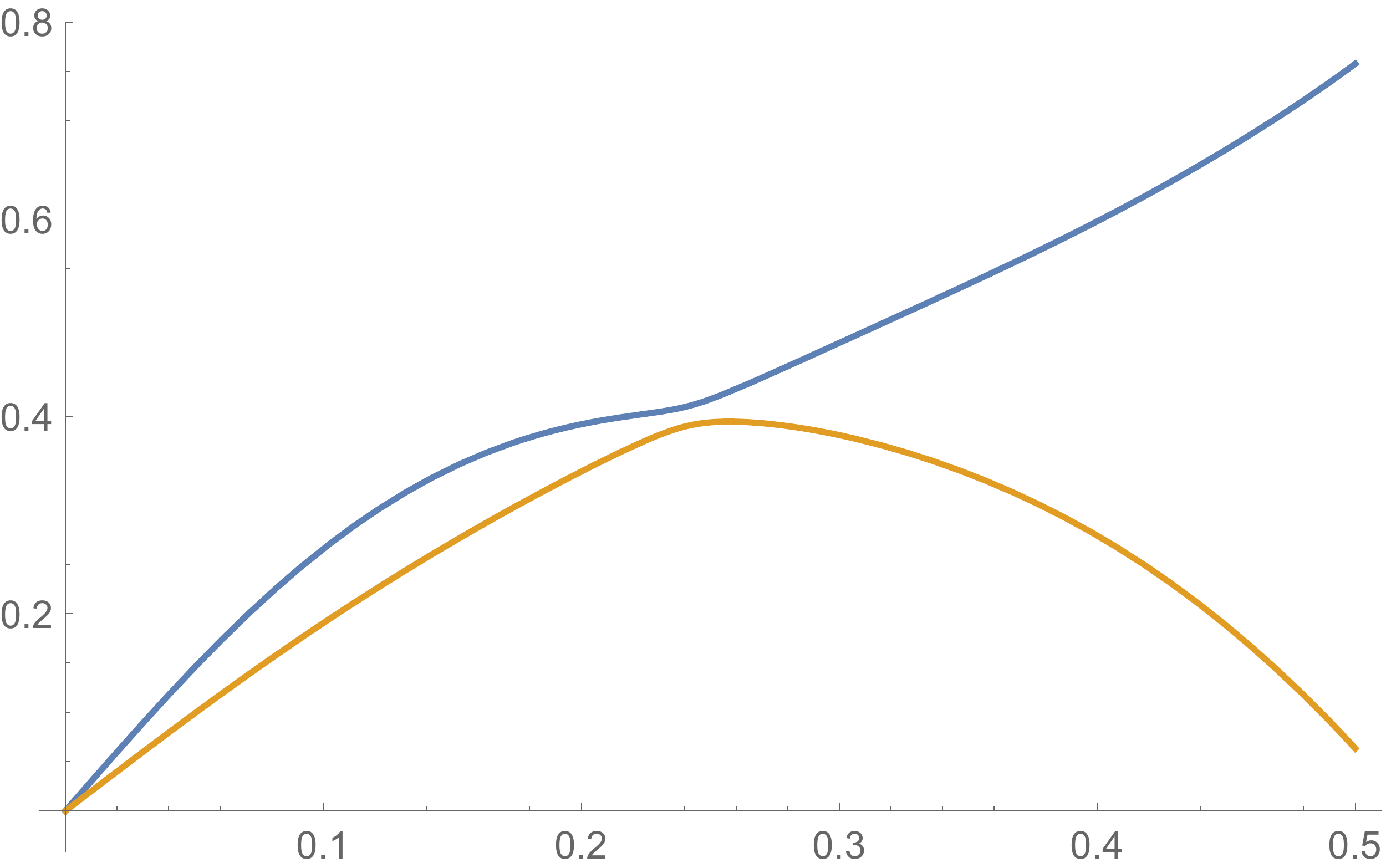}
\end{center}
 \caption{
The two eigenvalues of $\nabla^2 \ren{\infty}{\alpha}{\overline\beta}{\overline\gamma}$ as a function of $\alpha$. 
 }
 \label{fig: hesianpd}
\end{figure}
\end{proof}

\end{document}